\documentclass[12pt,notitlepage,a4paper]{article}
\usepackage{amsmath,amssymb,amsthm}
\usepackage[bookmarks=false,colorlinks=false,pdfborder={0 0 0}]{hyperref}

\newtheorem{theorem}{Theorem}
\newtheorem{corollary}[theorem]{Corollary}
\newtheorem{lemma}[theorem]{Lemma}
\newtheorem{proposition}[theorem]{Proposition}
\newtheorem{definition}[theorem]{Definition}
\newtheorem{remark}{Remark}

\begin{document}

\title{\textbf{Large player classes and approximately weighted voting}}
\author{Ezra Einy\thanks{%
Department of Economics, Ben-Gurion University of the Negev, Beer Sheva
84105, Israel. E-mail: einy@bgu.ac.il} and Ori Haimanko\thanks{%
Department of Economics, Ben-Gurion University of the Negev, Beer Sheva
84105, Israel. E-mail: orih@bgu.ac.il}}
\date{\today }
\maketitle

\begin{abstract}
It is known that completeness of the coalitional ($C$-)desirability relation
in a simple game does not imply the existence of a weighted-majority
representation. We show that weightedness is implied in an\emph{\ approximate%
} sense. Call an $n$-player simple game $\varepsilon $\emph{-weighted} if,
for some weight vector in the $n-1$-simplex, any losing coalition weight
exceeds any winning coalition weight by at most $\varepsilon ;$ let $%
\varepsilon ^{\ast }$ be the smallest such $\varepsilon .$ We establish an
upper bound on $\varepsilon ^{\ast }$ in a general $C$-complete game, and
deduce that, if all classes of equally desirable players become large
asymptotically but the number of classes is bounded, then $\varepsilon
^{\ast }=$ $O(1/n_{\min })$ for the minimal player class size $n_{\min },$
and the bound is tight.

Importantly, the minimal attainable fraction $\delta $ of coalitions
mislabeled by a weighting rule also tends to $0$ when $n_{\min }\rightarrow
\infty $ and the number of player classes is bounded, with $\delta =$ $O(1/%
\sqrt{n_{\min }}).$

\emph{JEL} \emph{Classification Numbers}: C71, D72.

\emph{Keywords}:\ Simple games, weighted majority, trade robustness,
coalitional desirability, desirability relation, complete relation, acyclic
relation.
\end{abstract}

\section{Introduction}

\subsection{The background}

Deciding which simple games are weighted majority games is an old problem
(Isbell \hyperlink{cite.isbell56}{(1956}, \hyperlink{cite.Isbell58}{1958)},
Elgot \hyperlink{cite.Elgot60}{(1960)}, Muroga \hyperlink{cite.Muroga71}{%
(1971)}). The sufficient conditions for weightedness that are known, such as
asummability or trade robustness (Elgot \hyperlink{cite.Elgot60}{(1960)},
Taylor and Zwicker \hyperlink{cite.TZ92}{(1992)}), strong regularity (Einy
and Lehrer \hyperlink{cite.EL89}{(1989)}), invariant-trade robustness
(Freixas and Molinero \hyperlink{cite.FM09}{(2009)}), are by their nature
combinatorial and complex\footnote{%
For instance, trade robustness of Taylor and Zwicker \hyperlink{cite.TZ92}{%
(1992)} has a major combinatorial depth:\ an $n$-player game is weighted iff
it is $k$-trade robust for $k=(n+1)n^{n/2}$ (Gvozdeva and Slinko \hyperlink{cite.GS11}%
{(2011}, Theorem 4), and $k$ cannot in general be taken smaller than a
linear function of $n$ (ibid., Corollary 3). By contrast, the condition that
is at the core of this paper --- completeness of the coalitional
desirability relation --- is equivalent to $2$-trade robustness.}. A
conceptually elegant and, in a sense, ordinal feature of a weighted majority
game, namely, the completeness of Isbell's \hyperlink{cite.Isbell58}{(1958)}
desirability relation that the game induces over the player set ($I$-\emph{%
completeness}) --- or, more broadly, the completeness of Lapidot's
\hyperlink{cite.Lapidot68}{(1968)} desirability relation over the set of all
coalitions ($C$-\emph{completeness}) --- is a necessary but regrettably not
a sufficient condition for the weightedness of a general simple game.
Examples of $C$-complete non-weighted simple games have been reported in the
literature, starting with Einy \hyperlink{cite.Einy85}{(1985)} who exhibited
a non-weighted proper $9$-player game that has a complete coalitional
desirability relation with an acyclic strict part; Taylor and Zwicker
\hyperlink{cite.TZ95}{(1995}, \hyperlink{cite.TZ96}{1996)} give further
examples derived from a magic square.

In the setting with a continuum of nonatomic players, by contrast, Einy and
Neyman \hyperlink{cite.EN90}{(1990)} showed that $C$-completeness \emph{does}
fully characterize weighted majority games under certain continuity
assumptions. Here we attempt to capture a certain ``large game'' aspect that,
together with $C$-completeness, allows to obtain an asymptotic version of
weightedness in finite simple games.

\subsection{Approximate weightedness and the main results}

Most finite $C$-complete counterexamples to weightedness are small in a
specific way: the equivalence classes of the players w.r.t. the desirability
relation, i.e., maximal sets of equally desirable players, are tiny. In
Einy's two games the class sizes are $(1,2,3,2,1)$ and $(1,1,2,3,2,1)$; in
the magic-square game of Taylor and Zwicker \hyperlink{cite.TZ96}{(1996},
Section 4) all nine classes are singletons. Our corresponding positive
statement will be based on the following general result. Call a simple $n$%
-player game $\varepsilon $\emph{-weighted} if, for some weight vector in
the standard $(n-1)$-simplex, any losing coalition weight exceeds any
winning coalition weight by at most $\varepsilon ;$ also, let $\varepsilon
^{\ast }$ be the minimal value of $\varepsilon $ for which the game is $%
\varepsilon $-weighted. Thus, $\varepsilon ^{\ast }$ is a threshold: a game
is $\varepsilon $-weighted iff $\varepsilon \geq \varepsilon ^{\ast }$; a
game is (fully) weighted iff $\varepsilon ^{\ast }<0.$ For a $C$-complete
game, denote by $n_{\min }$ the size of its smallest desirability class (or,
equivalently, smallest symmetry class)\footnote{%
Our Proposition \ref{lem:orbits} shows that the symmetry classes in any $I$%
-complete (and therefore in any $C$-complete) game are precisely the game's $%
I$-desirability equivalence classes.} of players and by $d$ the number of
classes. Theorem \ref{thm:main} in Section \ref{upper bound section} states
that the loss-win-excess threshold $\varepsilon ^{\ast }$ has the following
upper bound:
\begin{equation*}
\varepsilon ^{\ast }\leq \;\left( \log _{2}d+2\right) /n_{\min },
\end{equation*}%
and in particular the game is $\varepsilon $-weighted for any $\varepsilon $
that exceeds that bound.

An obvious corollary reveals the ``large game'' counterpart of nonatomicity
that is conducive to weightedness: if each of the player classes grows with
the increase in the total electorate (that is, $n_{\min }\rightarrow \infty $%
) while the total number of classes $d$ does not explode (specifically, is
subexponential in $n_{\min }$), then increasing $n_{\min }$ forces \emph{%
approximate weightedness}: the threshold $\varepsilon ^{\ast }$ converges to
zero. If the number of classes is uniformly bounded, then the convergence
rate of $\varepsilon ^{\ast }$ is $O\!\left( 1/n_{\min }\right) ,$ and this
bound is shown to be asymptotically tight.

The assumption of growing player classes embodies the standard notion of
players becoming individually insignificant but with a correspondingly
growing number of their exact duplicates, and provides a quantitative bridge
between the finite model and the continuum model of Einy and Neyman
\hyperlink{cite.EN90}{(1990)}. In the latter work, $C$-completeness acted as
a ``convexifier'' of the set of winning coalitions (and midpoint-convexifier
of the set of the losing ones), after the two sets were embedded into a
convex structure of ideal (non-binary) coalitions; weights were obtained by
a hyperplane separation of the convex set from the convex hull of the
midpoint-convex one. Here, $C$-completeness plays a similar role, following
the step of projecting coalitions into their Euclidean \emph{profiles} via a
method introduced by Carreras and Freixas \hyperlink{cite.CF96}{(1996)} for $%
I$-complete games (whereby every coalition is mapped to the counting vector
of its players in each of the game's desirability classes). Because the
games are finite, $C$-completeness implies only approximate aspects of
(midpoint and general) convexity of the winning and the losing profile sets.
These aspects are identified (through a profile version of the $2$-trade
robustness property of Taylor and Zwicker \hyperlink{cite.TZ92}{(1992)} to
which $C$-completeness is equivalent) and exploited to obtain a bound on the
loss-win-excess threshold $\varepsilon ^{\ast }.$

\subsection{Alternative approaches}

Somewhat different approaches to approximate weightedness have been
considered in the literature, often with the aim to expand the domain of
games which may be deemed ``weighted'' and make these domains characterizable
by simply stated or elegant properties. Among these approaches is
pseudoweighting, whereby weights are assigned --- with various restrictions
--- to coalitions and not individual players (see, e.g., Taylor and Zwicker
\hyperlink{cite.TZ96}{(1996}, \hyperlink{cite.TZ99}{1999)} and Freixas
\hyperlink{cite.F21}{(2021)}). The closest in spirit to our $\varepsilon $%
-weighted game is a \emph{roughly weighted} game, introduced in Taylor and
Zwicker \hyperlink{cite.TZ96}{(1996}, \hyperlink{cite.TZ99}{1999)} and
extensively studied in Gvozdeva and Slinko \hyperlink{cite.GS11}{(2011)} and
Gvozdeva et al. \hyperlink{cite.GHS13}{(2013b)}, and $\alpha $-roughly
weighted games of Freixas and Kurz \hyperlink{cite.FK14}{(2014)}. Roughly
weighted games are a particular case of our $\varepsilon $-weighted games
when $\varepsilon =0;$ in a roughly weighted game that is not fully
weighted, under a best-fitting weight system there would always be
coalitions of both types with the same weight, and thus no weight vector
would fully discern all coalitions' winning status.

With $\alpha \geq 1$ being the maximal weight that a losing coalition is
allowed to carry while the winning coalitions are constrained to a minimal
weight of $1$, the class of $\alpha $-roughly weighted games of Freixas and
Kurz \hyperlink{cite.FK14}{(2014)} contains the games that are roughly
weighted in the strict sense (for which $\alpha =1$ when the rough weights
of all winning coalitions are positive). Our $\varepsilon $-weightedness is
related to their concept in the following way: Freixas and Kurz's \hyperlink{cite.FK14}%
{(2014)} critical value $\mu $ for the lowest $\alpha $ at which the game is
$\alpha $-roughly weighted is, in effect, obtained by minimizing (over
weight vectors) the \emph{ratio }between the maximal weight of a losing
coalition and the minimal weight of a winning coalition, while the threshold
$\varepsilon ^{\ast }$ is obtained by minimizing (over normalized weight
vectors) the \emph{difference} between the two. Because the underlying
objective functions are different, the two threshold values could, in
principle, be attained at different weight systems, but a close link exists
for most games, subject to certain qualifications. The convergence of $\max
\{\varepsilon ^{\ast },0\}$ to zero is equivalent to the convergence of $\mu
$ to 1 --- and so our asymptotic weightedness result holds for both modes of
convergence --- whenever the minimal winning coalitions do not become
negligible under an optimal weight vector (which is always the case, e.g.,
for proper games).

One crucial difference between our works lies in the domain of games to
which the thresholds are applied. Freixas and Kurz \hyperlink{cite.FK14}{%
(2014)} consider $I$-complete games as one of their smaller domains, and
show that no general proximity to weightedness can be obtained because $\mu
=\Theta \left( \sqrt{n}\right) $ for a sequence of $n$-player $I$-complete
games with just two player classes. Our results, on the other hand, are
stated for a more restricted domain of $C$-complete games, where asymptotic
convergence to weightedness is obtainable under plausible assumptions.

\subsection{Misclassification measure}

A recent paper of Albareda-Sambola et al. \hyperlink{cite.amr}{(2025)}
defines four optimization-based measures of distance from weightedness, of
which one has a direct bearing on our approximate weightedness, both
conceptually and quantitatively. Their \emph{outlier value} of a game
represents, in essence, the minimal count of coalitions whose constraints
must be dropped to make the remaining part of the game fully weighted. We
define a closely related \emph{misclassification measure} $\delta $ of a
game, which is the minimal attainable fraction of coalitions in the game
that are attributed a wrong status by a weighted majority representation; it
is just a normalized form of the outlier value.

Since positive values of the loss-win-excess threshold $\varepsilon ^{\ast }$
imply that no weight system can properly describe the status of all
coalitions, no matter how small $\varepsilon ^{\ast }$ is, the fact that any
coalition mislabeled by weighting is numerically ``close'' to its correct
status does not fix the misclassification. Fortunately, the asymptotic
behavior of the misclassification measure $\delta $ when the player classes
become large reflects approximate weightedness. In the central case of a
bounded number of player classes, both $\varepsilon ^{\ast }$ and $\delta $
converge to $0$ simultaneously, and $\delta =O(1/\sqrt{n_{\min }}).$ Thus,
asymptotically, the least attainable fraction of coalitions affected by
misattribution of status becomes vanishingly small in parallel with the
weight-measured convergence to weightedness.

\subsection{This paper's structure}

The rest of this paper is organized as follows. Section \ref{section
preliminaries} sets up the framework. It recalls the individual and the
coalitional desirability notions and states some related results, in
particular characterizing $C$-completeness by two versions of $2$-trade
robustness. It then introduces the $\varepsilon $-weightedness along with
the loss-win-excess threshold $\varepsilon ^{\ast },$ and establishes the
basic characterization of the latter as the value of a minimax problem.
Section \ref{upper bound section} contains the two main results: it
establishes the general upper and lower bound on $\varepsilon ^{\ast }$ in $%
C $-complete games, and states the asymptotic result on the convergence of $%
\varepsilon ^{\ast }$ to zero when player classes become large. Section \ref%
{sec asymptotic tightness} constructs a sequence of games that demonstrates
tightness of the $O\!\left( 1/n_{\min }\right) $ convergence rate. In that
sequence, each game's $C$-desirability relation has an acyclic strict part,
providing an opportunity to observe that the acyclicity of a strict $C$%
-desirability relation does not suffice to ensure even rough weightedness of
a $C$-complete game. Section \ref{alpha-weightedness section} shows that our
asymptotic result can be stated, to a large extent, in terms of $\alpha $%
-rough weightedness. Section \ref{sec misclassification} introduces the
misclassification measure and proves that it becomes vanishingly small in
the central asymptotic cases with large player classes. Section \ref%
{concluding remarks} concludes, and presents some open questions. All proofs
are deferred to the Appendix, except for the main steps in the proof of
Theorem \ref{thm:main} and the proof of Theorem \ref{thm:types}.

\section{Preliminaries and basic notions}

\label{section preliminaries}

\subsection{Simple games and desirability relations}

A \emph{(simple) game} is a pair $G=(N,\mathcal{W})$, where $N$ is a finite
player set with $\left\vert N\right\vert =n$, and $\mathcal{W}\subseteq
2^{N} $ is the set of \emph{winning coalitions} with $N\in \mathcal{W}$, $%
\emptyset \notin \mathcal{W}$, which is up-closed: $A\in \mathcal{W}$, $%
A\subseteq B\Rightarrow $ $B\in \mathcal{W}$. Members of $\mathcal{L}%
=2^{N}\setminus \mathcal{W}$ are the \emph{losing} coalitions. The game is
\emph{proper} if $A\in \mathcal{W}$ $\Rightarrow N\setminus A\in \mathcal{L}$%
, \emph{strong} if $A\in \mathcal{L}$ $\Rightarrow N\setminus A\in \mathcal{W%
}$, and \emph{constant-sum} if it is both proper and strong.

The following are two frequently considered desirability relations induced
by a simple game$.$

\begin{definition}[\textbf{Individual desirability;} \emph{Isbell \hyperlink{cite.Isbell58}%
{(1958)}, Maschler and Peleg \hyperlink{cite.MP66}{(1966)}}]
For a given $G=(N,\mathcal{W}),$ the $I$-desirability relation on $N$ is
defined as follows, for any $i,j\in N:$
\begin{equation*}
i\succeq _{I}j\quad \Longleftrightarrow \quad \forall \,R\subseteq
N\setminus \{i,j\}\ \left( R\cup \{j\}\in \mathcal{W}\Rightarrow R\cup
\{i\}\in \mathcal{W}\right) .
\end{equation*}%
The symmetric part of $\succeq _{I}$ is denoted by $\sim _{I},$ and the
strict part of $\succeq _{I}$ is denoted by $\succ _{I}$.
\end{definition}

A game is $I$\emph{-complete} if the relation $\succeq _{I}$ is complete;
this is the meaning of a ``complete game''\ in Carreras and Freixas \hyperlink{cite.CF96}%
{(1996)}, Taylor and Zwicker \hyperlink{cite.TZ99}{(1999)} and Freixas and
Molinero \hyperlink{cite.FM09}{(2009)}. The relation $\succeq _{I}$ is
clearly transitive, and so is a total preorder on $N$ in an $I$-complete
game; the central part $\sim _{I}$ is an equivalence relation.

\begin{definition}[\textbf{Coalitional desirability;} \emph{Lapidot
\hyperlink{cite.Lapidot68}{(1968)}, Einy \hyperlink{cite.Einy85}{(1985)}}]
\label{def:coal} For a given $G=(N,\mathcal{W}),$ the $C$-desirability
relation on $2^{N}$ is defined as follows, for any $A,B\subseteq N$:
\begin{equation*}
A\succeq _{C}B\quad \Longleftrightarrow \quad \forall \,R\subseteq
N\setminus (A\cup B)\ \left( B\cup R\in \mathcal{W}\Rightarrow A\cup R\in
\mathcal{W}\right) .
\end{equation*}%
The symmetric part of $\succeq _{C}$ is denoted by $\sim _{C},$ and the
strict part of $\succeq _{C}$ is denoted by $\succ _{C}$.
\end{definition}

A game is $C$\emph{-complete} if the relation $\succeq _{C}$ is complete;
this is what is meant by \textquotedblleft a game with a complete
desirability relation\textquotedblright\ in Einy \hyperlink{cite.Einy85}{%
(1985)}. Taking $A=\{i\}$, $B=\{j\}$ shows $\{i\}\succeq _{C}\{j\}$ iff $%
i\succeq _{I}j$, and so a $C$-complete game is also $I$-complete, but the
converse may fail: an $8$-player game in Einy \hyperlink{cite.Einy85}{(1985}%
, Example 4.6) has all players $\succeq _{I}$-ordered linearly, but some
coalitions are not $\succeq _{C}$-comparable.

Our main results assume $C$-completeness of a game. It is characterizable by
the property of $2$-trade robustness, due to Taylor and Zwicker \hyperlink{cite.TZ92}%
{(1992)}, defined below.

\begin{definition}[$2$\textbf{-trade robustness; }\emph{Taylor and Zwicker
\hyperlink{cite.TZ92}{(1992)}}]
\label{prop 2-trade robustness}A $2$\emph{-trade} of $G=(N,\mathcal{W})$ is
a quadruple $(X_{1},X_{2};Y_{1},Y_{2})$ with $X_{1},X_{2}\in \mathcal{W}$
and $Y_{1},Y_{2}\in \mathcal{L}$ (repetitions allowed) that satisfies
\begin{equation}
\chi _{X_{1}}+\chi _{X_{2}}\;=\;\chi _{Y_{1}}+\chi _{Y_{2}}  \label{2-trade}
\end{equation}%
on $N,$ i.e.,\ every player belongs to the same number of $X_{k}$ and $Y_{k}$%
. The game is $2$\emph{-trade robust} if it has no $2$-trade.
\end{definition}

The equivalence between $C$-completeness and $2$-trade robustness follows in
a rather straightforward fashion from known results, but has never been
stated explicitly.\footnote{%
By Taylor and Zwicker \hyperlink{cite.TZ95}{(1995}, Theorem 4.2), $2$-trade
robustness is equivalent to the lack of $2$-cycles in another relation
introduced in Lapidot \hyperlink{cite.Lapidot68}{(1968)}, whereby $%
A>_{L}B\quad $\ $\Longleftrightarrow $ $\exists \,R\subseteq N\setminus
(A\cup B)$ $\left( \left( B\cup R\in \mathcal{L}\right) \wedge \left( A\cup
R\in \mathcal{W}\right) \right) .$ But $\succeq _{C}$-completeness is
obviously equivalent to $>_{L}$ being $2$-acyclic, and thus to $2$-trade
robustness. There is a further connection between $>_{L}$ and $\succeq _{C}:$
$>_{L}$ is (fully)\ acyclic if and only if $\succeq _{C}$ is complete and
its strict part $\succ _{C}$ is acyclic (Taylor and Zwicker \hyperlink{cite.TZ95}%
{(1995}, p. 76)).} We record the following for completeness.

\begin{proposition}[$C$-\textbf{completeness }$\Leftrightarrow $ $2$-\textbf{%
trade robustness}]
\label{lem:inc} A game is $C$-complete iff it is $2$-trade robust.
\end{proposition}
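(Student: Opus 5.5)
We prove both directions directly from Definition \ref{def:coal}, using contrapositives. The key observation is that a $2$-trade $(X_1,X_2;Y_1,Y_2)$ can always be normalized into a configuration of the form $X_1=A\cup R$, $Y_1=B\cup R$, $X_2=B\cup S$, $Y_2=A\cup S$, where $A,B$ are disjoint and $R,S\subseteq N\setminus(A\cup B)$. Such a configuration witnesses exactly that $A\not\succeq_C B$ fails in one direction and $B\not\succeq_C A$ fails in the other, i.e., that $A$ and $B$ are incomparable.

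\emph{Incompleteness implies a $2$-trade.} Suppose $A\not\succeq_C B$ and $B\not\succeq_C A$. By definition there are $R,S\subseteq N\setminus(A\cup B)$ with
\[
B\cup R\in\mathcal W,\qquad A\cup R\in\mathcal L,\qquad A\cup S\in\mathcal W,\qquad B\cup S\in\mathcal L.
\]
Put $X_1=B\cup R$, $X_2=A\cup S$, $Y_1=A\cup R$, $Y_2=B\cup S$. We then check the identity \eqref{2-trade} pointwise. Since $R,S$ are disjoint from $A\cup B$, each side equals $\chi_A+\chi_B+\chi_R+\chi_S$ provided $A\cap B=\emptyset$.

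If $A\cap B\neq\emptyset$, the counts still match, because each side counts $\chi_A+\chi_B$ plus $\chi_R+\chi_S$, and unions with the disjoint sets $R,S$ are additive. The only subtle point is that $B\cup R$ is a union with $R\cap B=\emptyset$, so $\chi_{B\cup R}=\chi_B+\chi_R$ regardless of how $A$ and $B$ overlap. So this direction works without any disjointness assumption on $A,B$. Hence a $2$-trade exists.

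\emph{A $2$-trade implies incompleteness.} Given a $2$-trade $(X_1,X_2;Y_1,Y_2)$, every player lies in the same number of $X$'s and $Y$'s. We set
\[
R=X_1\cap Y_1,\quad A=X_1\setminus Y_1,\quad B=Y_1\setminus X_1 .
\]
The counting identity then forces the remaining coordinates, which we verify player by player:
\begin{itemize}
\item a player in $A$ has $X$-count at least $1$ and is not in $Y_1$, so it must lie in $Y_2$ and not in $X_2$ (otherwise its $X$-count would be $2$ while its $Y$-count is at most $1$);
\item symmetrically, players of $B$ lie in $X_2\setminus Y_2$;
\item players of $R$ have count $1$ from $X_1$ and $1$ from $Y_1$, so they lie either in both $X_2,Y_2$ or in neither;
\item players outside $X_1\cup Y_1$ lie in both $X_2,Y_2$ or in neither.
\end{itemize}
Let $S$ be the set of players outside $A\cup B$ lying in $X_2\cap Y_2$. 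Then $X_2=B\cup S$, $Y_2=A\cup S$, $X_1=A\cup R$, $Y_1=B\cup R$, with $A\cap B=\emptyset$ and $R,S\subseteq N\setminus(A\cup B)$. Now $A\cup R\in\mathcal W$ and $B\cup R\in\mathcal L$ give $B\not\succeq_C A$, while $B\cup S\in\mathcal W$ and $A\cup S\in\mathcal L$ give $A\not\succeq_C B$. So $\succeq_C$ is incomplete.

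The main obstacle is the bookkeeping in this second direction: showing that the multiset identity pins down $X_2$ and $Y_2$ in the stated form. The pointwise case analysis above is routine but must be carried out carefully, including the degenerate cases with repetitions such as $X_1=X_2$.
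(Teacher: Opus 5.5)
Your proof is correct and follows the paper's argument essentially verbatim. In one direction the $2$-trade is built from the same four sets $A\cup R$, $B\cup R$, $A\cup S$, $B\cup S$; in the other, the same decomposition $A=X_1\setminus Y_1$, $B=Y_1\setminus X_1$, $R=X_1\cap Y_1$, $S=X_2\cap Y_2$ yields incomparable coalitions. The paper replaces your pointwise bookkeeping with the observation that the $2$-trade identity means $X_1\cap X_2=Y_1\cap Y_2$ and $X_1\cup X_2=Y_1\cup Y_2$. Your hedge ``provided $A\cap B=\emptyset$'' in the first direction is unnecessary, since $\chi_{B\cup R}=\chi_B+\chi_R$ only uses $R\cap B=\emptyset$, as you yourself then note.
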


\subsection{Player classes}

\label{section player classes}

Here we fix notation and recall central facts --- due to Carreras and
Freixas \hyperlink{cite.CF96}{(1996)} --- pertaining to the structure of $I$%
-complete simple games. In any given $I$-complete game $G=(N,\mathcal{W})$
the $I$-desirability relation $\succeq _{I}$ is a total preorder, and so $N$
can be partitioned into equivalence classes that form a well-defined chain $%
N_{1}\succ N_{2}\succ \dots \succ N_{d}$ (where $N_{t}\succ N_{t+1}$ means
that $i\succ _{I}j$ whenever $i\in N_{t}$ and $j\in N_{t+1}$). These sets
will be referred to as \emph{(player) classes}. Denote by $n_{t}=|N_{t}|$
the size of the class $N_{t},$ and let $\bar{n}=(n_{1},\dots ,n_{d}).$ For
each coalition $A\subseteq N$ let
\begin{equation}
s(A)=\bigl(|A\cap N_{1}|,\dots ,|A\cap N_{d}|\bigr)\in \Lambda (\bar{n}%
):=\prod_{t=1}^{d}\{0,1,\dots ,n_{t}\}  \label{s(S)}
\end{equation}%
be the \emph{profile} of $A,$ specifying the count of its representatives in
each class. It is easy to see that any two coalitions have the same profile
if and only if one is carried to the other by a product of transpositions of
$\sim _{I}$-equivalent players.

The profile sets of the winning and the losing coalitions are denoted as
follows:
\begin{equation*}
\overline{\mathcal{W}}=\{\,s(A)\ :\ A\in \mathcal{W}\,\}\subseteq \Lambda (%
\bar{n}),\qquad \overline{\mathcal{L}}=\Lambda (\bar{n})\setminus \overline{%
\mathcal{W}}.
\end{equation*}%
Thus, $A\in \mathcal{W}$ iff $s(A)\in \overline{\mathcal{W}}$ and $A\in
\mathcal{L}$ iff $s(A)\in \overline{\mathcal{L}}.$ By the up-closedness of $%
\mathcal{W}$, $\overline{\mathcal{W}}$ is up-closed and $\overline{\mathcal{L%
}}$ is down-closed in $\Lambda (\bar{n})$ for the coordinatewise order.

To discuss the symmetries of $G,$ consider a permutation $\sigma $ of $N.$
The permutation $\sigma $ acts on $2^{N}$ in the following way: for every $%
A\subseteq N$, $\sigma (A)=$ $\left\{ \sigma (i):i\in A\right\} .$ It is
called an \emph{automorphism} of $G$ if $\sigma \left( \mathcal{W}\right) =%
\mathcal{W}$, i.e., the set of winning coalitions in the game is invariant
when the players are permuted according to $\sigma .$ Denote the set of all
automorphisms of $G$ by $\Gamma (G).$ According to the following lemma, $G$
has precisely $d$ symmetry classes (minimal nonempty sets preserved by all
automorphisms), which coincide with its $\sim _{I}$-equivalence classes $%
N_{1},N_{2},\dots ,N_{d}$.

\begin{proposition}[\textbf{Equivalence classes are symmetry classes}]
\label{lem:orbits} If $G$ is $I$-complete, $\Gamma (G)$ consists of all
permutations that preserve each $N_{t}$ (acting on $N_{t}$ in an arbitrary
way).
\end{proposition}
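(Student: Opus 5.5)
The statement has two inclusions, and I would prove them separately.

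\emph{Permutations preserving each class are automorphisms.} First I show that every permutation $\sigma$ preserving each $N_t$ is an automorphism. Such a $\sigma$ preserves profiles: $|\sigma(A)\cap N_t|=|\sigma(A\cap N_t)|=|A\cap N_t|$, so $s(\sigma(A))=s(A)$. Membership in $\mathcal{W}$ depends only on the profile, since $A\in\mathcal{W}$ iff $s(A)\in\overline{\mathcal{W}}$. Hence $\sigma(\mathcal{W})=\mathcal{W}$.

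This profile-invariance rests on the fact, stated just before the proposition, that two coalitions have the same profile iff they differ by a product of transpositions of $\sim_I$-equivalent players. I would justify it briefly. If $i\sim_I j$, then for $A\ni j$ with $i\notin A$, writing $A=R\cup\{j\}$ gives $A\in\mathcal{W}\iff R\cup\{i\}\in\mathcal{W}$. Coalitions containing both or neither of $i,j$ are fixed by the transposition $(i\,j)$. So each such transposition is an automorphism, and the permutations preserving every $N_t$ form the group generated by these transpositions.

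\emph{Every automorphism preserves each class.} Let $\sigma\in\Gamma(G)$. The main step is to show that $\sigma$ preserves $\succeq_I$: $i\succeq_I j \iff \sigma(i)\succeq_I\sigma(j)$. This follows by applying $\sigma$ to the defining condition. As $R$ ranges over subsets of $N\setminus\{i,j\}$, $\sigma(R)$ ranges over subsets of $N\setminus\{\sigma(i),\sigma(j)\}$, and $\sigma(R\cup\{j\})=\sigma(R)\cup\{\sigma(j)\}$ is winning iff $R\cup\{j\}$ is. Therefore $\sigma$ is an order automorphism of the total preorder $\succeq_I$, so it maps equivalence classes onto equivalence classes and preserves their strict order $N_1\succ\cdots\succ N_d$.

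An order-preserving bijection of a finite chain onto itself is the identity, so $\sigma(N_t)=N_t$ for every $t$. To make this precise, I would argue by induction on $t$: $\sigma$ maps the top class $N_1$ onto the top class (the set of maximal elements is intrinsically defined), then removes it and repeats. Alternatively, note that $\sigma$ induces a strictly increasing bijection of $\{1,\dots,d\}$, which must be the identity.

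I do not expect a real obstacle. The only points needing care are the remark that membership depends only on the profile, and the order-rigidity step. Finally, I would note that because these permutations act arbitrarily on each $N_t$, the orbits of $\Gamma(G)$ are exactly the $N_t$. This gives the claim in the preceding text that the symmetry classes coincide with the $\sim_I$-classes.
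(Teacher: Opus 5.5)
Your proof is correct and follows essentially the same route as the paper: an automorphism preserves $\succeq_I$ and therefore fixes each class of the finite chain $N_1\succ\cdots\succ N_d$, and conversely transpositions of $\sim_I$-equivalent players are automorphisms that generate all class-preserving permutations. Your opening profile-invariance remark is redundant, and taken alone it would be circular, because $A\in\mathcal{W}\iff s(A)\in\overline{\mathcal{W}}$ itself rests on the transposition fact; you do, however, ground it in the transposition argument, which is exactly the one the paper uses.
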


The following proposition is a profile counterpart of Proposition \ref%
{lem:inc}: it characterizes $C$-completeness in terms of $2$-trade
robustness stated for general coalition \emph{profiles}. This makes the $2$%
-trade robustness more straightforward to use as a technical tool in the
profile setting.

\begin{proposition}[\textbf{Profile }$2$-\textbf{trade robustness}]
\label{prop:sumset} Assume that $G=(N,\mathcal{W})$ is $I$-complete. Then it
is $C$-complete iff there are no $x,y\in \overline{\mathcal{W}}$ and $u,v\in
\overline{\mathcal{L}}$ with
\begin{equation}
x+y=u+v;  \label{sumset eq}
\end{equation}%
that is, iff $(\overline{\mathcal{W}}+\overline{\mathcal{W}})\cap (\overline{%
\mathcal{L}}+\overline{\mathcal{L}})=\emptyset $.
\end{proposition}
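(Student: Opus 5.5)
We will reduce the statement to Proposition \ref{lem:inc}, which says that $C$-completeness is equivalent to the absence of $2$-trades. The task is therefore to show that $G$ has a $2$-trade $(X_1,X_2;Y_1,Y_2)$ if and only if there are profiles $x,y\in\overline{\mathcal W}$ and $u,v\in\overline{\mathcal L}$ with $x+y=u+v$.

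\emph{From a trade to a profile relation.} This direction is immediate. Suppose $\chi_{X_1}+\chi_{X_2}=\chi_{Y_1}+\chi_{Y_2}$. Summing over each class $N_t$ gives $s(X_1)+s(X_2)=s(Y_1)+s(Y_2)$. Since winning status depends only on the profile, $s(X_k)\in\overline{\mathcal W}$ and $s(Y_k)\in\overline{\mathcal L}$.

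\emph{From a profile relation to a trade.} This is the substantive direction. Suppose $x+y=u+v$ with $x,y\in\overline{\mathcal W}$, $u,v\in\overline{\mathcal L}$, and each of the four profiles in $\Lambda(\bar n)$. We must realize them by coalitions satisfying the identity of characteristic vectors exactly. Because profiles determine status (equal profiles means equal status, by Proposition \ref{lem:orbits} or the transposition remark), we are free to choose any coalitions with the prescribed profiles. The construction runs class by class, so fix $t$ and write $a=x_t$, $b=y_t$, $c=u_t$, $e=v_t$, all in $\{0,\dots,n_t\}$ with $a+b=c+e$. By symmetry we may assume $a\ge b$ and $c\ge e$; if the roles of $x,y$ or $u,v$ are swapped in some coordinate, we simply choose the labelled sets accordingly.

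Inside $N_t$ we need sets $X^t_1,X^t_2,Y^t_1,Y^t_2$ of sizes $a,b,c,e$ whose multiplicity functions agree. Take $X^t_2=\{1,\dots,b\}$ and $X^t_1=\{1,\dots,a\}$. This is feasible since $a\le n_t$, and it places multiplicity $2$ on $\{1,\dots,b\}$ and multiplicity $1$ on $\{b+1,\dots,a\}$. Now let $Y^t_1=\{1,\dots,c\}$ and $Y^t_2=\{1,\dots,e\}$. These are nested initial segments, with multiplicity $2$ on $\{1,\dots,\min(c,e)\}$ and $1$ up to $\max(c,e)$.

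The two multiplicity functions do not agree in general. For example, $a=b=2$, $c=3$, $e=1$ gives $(2,2)$ on the $X$ side versus $(2,1,1)$ on the $Y$ side. So matching must be done on one combined arrangement. The fix is to choose all four sets from a common nested chain of initial segments. Nested initial segments of sizes $a,b$ produce the multiplicity vector $2^{\min}1^{|a-b|}$, which varies with the sizes.

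Instead, we represent the multiset sum differently. We need two $0/1$ vectors summing to a given $w\in\{0,1,2\}^{N_t}$, and such a decomposition of the same $w$ must exist for both pairs $(a,b)$ and $(c,e)$. Take $w$ with $m$ entries equal to $2$ and $r$ entries equal to $1$, so that $2m+r=a+b$. Then $w$ splits into sizes $(a,b)$ precisely when $m\le\min(a,b)$ and $m+r\ge\max(a,b)$. So we choose
\[
m=\min(a,b,c,e),\qquad r=a+b-2m .
\]
The conditions $m\le\min$ hold by construction. For $m+r\ge\max$, note that $m+r=a+b-m$, and this is at least $\max(a,b)$ because $m\le\min(a,b)$, and at least $\max(c,e)$ because $m\le\min(c,e)$ and $c+e=a+b$. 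Feasibility also requires $m+r\le n_t$. Here $m+r=a+b-m\le n_t$ holds if $m=\min(a,b)$; otherwise $m=\min(c,e)$ and $m+r=c+e-\min(c,e)=\max(c,e)\le n_t$.

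Taking unions over $t$ produces the four coalitions with the correct profiles and the exact identity \eqref{2-trade}, hence a $2$-trade. The only real obstacle is this per-class realizability check, which the choice $m=\min(a,b,c,e)$ resolves. Combining both directions with Proposition \ref{lem:inc} yields the claim, and the final reformulation in terms of sumsets is merely a restatement.
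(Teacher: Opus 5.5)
Your argument is correct, and it shares the paper's skeleton: realize the four profiles class by class inside each $N_t$ and take unions. It differs from the paper in two respects.

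First, you pass through Proposition~\ref{lem:inc} and build a $2$-trade $(X_1,X_2;Y_1,Y_2)$. The paper instead works directly with $\succeq_C$. It constructs, per class, sets $A_t,B_t\subseteq N_t$ and $R_t,R'_t\subseteq N_t\setminus(A_t\cup B_t)$ with $|A_t\cup R_t|=x_t$, $|B_t\cup R'_t|=y_t$, $|B_t\cup R_t|=u_t$, $|A_t\cup R'_t|=v_t$. Their unions give a $\succeq_C$-incomparable pair $A,B$, without invoking Proposition~\ref{lem:inc}. Under the dictionary $X_1=A\cup R$, $X_2=B\cup R'$, $Y_1=B\cup R$, $Y_2=A\cup R'$, the two constructions encode the same data.

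Second, the per-class realization is different. The paper fixes $X_t,Y_t$ of sizes $x_t,y_t$ with the smallest possible overlap, $\max\{0,x_t+y_t-n_t\}$. It then splits into the cases $x_t\geq u_t$ and $x_t<u_t$, peeling $A_t$ off $X_t\setminus Y_t$ or $B_t$ off $Y_t\setminus X_t$. You instead fix one multiplicity vector with $m=\min\{a,b,c,e\}$ doubly covered players and check both decompositions at once. This is the largest admissible overlap; in fact any $m$ with $\max\{0,a+b-n_t\}\leq m\leq\min\{a,b,c,e\}$ works.

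Your version is more symmetric and avoids the case split. The paper's is self-contained relative to the definition of $C$-completeness.

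In a write-up, delete the abandoned nested-initial-segment attempt and the unused normalization $a\geq b$, $c\geq e$.
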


\subsection{Weightedness, $\protect\varepsilon $-weightedness and the
loss-win-excess threshold}

Denote by $\Delta \left( N\right) $ the set of probability vectors on $N.$

\begin{definition}[\textbf{Three versions of weightedness}]
\label{def:eps} Let $\varepsilon \in \mathbb{R}$. A simple game $G=(N,%
\mathcal{W})$ is $\varepsilon $-\emph{weighted} if there exist a \emph{%
weight vector} $w\in \Delta \mathbb{(}N\mathbb{)}$ and a \emph{quota} $q\in
\mathbb{R}$ such that
\begin{equation}
A\in \mathcal{W}\ \Longrightarrow \ w(A)\ \geq \ q\text{ and }B\in \mathcal{L%
}\ \Longrightarrow \ w(B)\ \leq \ q+\varepsilon ,  \label{epsilon weighted}
\end{equation}%
where $w(A)=\sum_{i\in A}w_{i}$. A game is \emph{roughly weighted if it is }$%
0$-weighted.\emph{\ }A game is (fully) \emph{weighted} if it is roughly
weighted and the second inequality in (\ref{epsilon weighted}) is strict: $%
B\in \mathcal{L}\ \Longrightarrow \ w(B)\ <q.$
\end{definition}

Notice that the definition is monotone: if $\varepsilon \leq \varepsilon
^{\prime }$ and the game is $\varepsilon $-weighted then it is $\varepsilon
^{\prime }$-weighted. Also, every game is $1$-weighted (take any $w$ and $%
q=0 $). And a game is weighted iff it is $\varepsilon $-weighted for some $%
\varepsilon <0$ --- the absolute value of a negative $\varepsilon $
describes a (positive) gap between the maximal weight of a losing coalition
and the minimal weight of a winning coalition. It is easy to see that any
weighted game is $C$-complete.

The definition of a roughly weighted game is due to Taylor and Zwicker
\hyperlink{cite.TZ95}{(1995}, \hyperlink{cite.TZ99}{1999)}. Clearly, every
weighted game is roughly weighted (by the same $w,$ $q$), but the converse
is not true: there are roughly weighted games in which \emph{any} weight
vector satisfying the requirement in (\ref{epsilon weighted}) for $%
\varepsilon =0$ would necessarily give the same weight to some winning and
losing coalitions (see, e.g., Gvozdeva and Slinko \hyperlink{cite.GS11}{%
(2011)} for a six-player example).

The following definition casts the search for the minimal $\varepsilon $ for
which (\ref{epsilon weighted}) holds as a minimax problem.

\begin{definition}[\textbf{Loss-win-excess threshold}]
\label{de loss win gap}Given $G=(N,\mathcal{W}),$ define its \emph{%
loss-win-excess threshold }as%
\begin{equation}
\varepsilon ^{\ast }(G)=\min_{w\in \Delta (N)}\max_{A\in \mathcal{W},B\in
\mathcal{L}}\left[ w(B)-w(A)\right] =\min_{w\in \Delta (N)}\left( \max_{B\in
\mathcal{L}}w(B)-\min_{A\in \mathcal{W}}w(A)\right) .  \label{epsilon **}
\end{equation}
\end{definition}

Thus, $\varepsilon ^{\ast }(G)$ is the least value that weighting can
achieve of the maximum weight difference between the losing and winning
coalitions. It constitutes a threshold for $\varepsilon $-weightedness: if $%
\varepsilon \geq \varepsilon ^{\ast }(G)$ then $G$ is $\varepsilon $%
-weighted (with a weight vector $w^{\ast }\in \Delta (N)$ attaining the
minimum in (\ref{epsilon **}) and a quota$\ q=$ $\min_{A\in \mathcal{W}%
}w^{\ast }(A)$), and $G$ is not $\varepsilon $-weighted whenever $%
\varepsilon <\varepsilon ^{\ast }(G).$ And, clearly, $G$ is (fully) weighted
iff $\varepsilon ^{\ast }(G)<0,$

Note that $\varepsilon ^{\ast }(G)$ is well-defined, being the minimax value
of a continuous function with a compact set of maximizers and of minimizers.
When $G$ is constant-sum, it is easy to see that (\ref{epsilon **}) boils
down to a simpler expression,%
\begin{equation}
\varepsilon ^{\ast }(G)=1-2\max_{w\in \Delta (N)}\min_{A\in \mathcal{W}}w(A).
\label{epsilon constant sum}
\end{equation}

The following lemma provides a computational formula for $\varepsilon ^{\ast
}$ in $I$-complete games, based on the minimax theorem. It will be of
crucial use in obtaining the upper estimate for the threshold $\varepsilon
^{\ast }$ in Theorem \ref{thm:main}.

\begin{proposition}[\textbf{Formula for }$\protect\varepsilon ^{\ast }$]
\label{lem:minimax} For an $I$-complete $G=(N,\mathcal{W}):$

\begin{itemize}
\item[(i)] The weight vector attaining the minimum in (\ref{epsilon **}) can
be chosen as \emph{constant} on each player class;

\item[(ii)] $\varepsilon ^{\ast }(G)$ is computable as follows:%
\begin{equation}
\varepsilon ^{\ast }(G)=\max \left\{ \min_{1\leq t\leq d}\frac{b_{t}-a_{t}}{%
n_{t}}\ :\ a\in \operatorname{conv}\overline{\mathcal{W}},\ b\in \operatorname{conv}%
\overline{\mathcal{L}}\right\} ,  \label{eq:mm}
\end{equation}%
where $\operatorname{conv}\overline{\mathcal{W}}\subset \mathbb{R}_{+}^{d}$ and $%
\operatorname{conv}\overline{\mathcal{L}}\subset \mathbb{R}_{+}^{d}$ denote the
convex hulls of $\overline{\mathcal{W}}$ and of $\overline{\mathcal{L}}$.
\end{itemize}
\end{proposition}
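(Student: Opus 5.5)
For part (i) I will use symmetrization. The objective $f(w)=\max_{B\in\mathcal L}w(B)-\min_{A\in\mathcal W}w(A)$ is convex in $w$, because it is a maximum of linear functions $w(B)-w(A)$. By Proposition \ref{lem:orbits}, every permutation $\sigma$ preserving each $N_t$ is an automorphism, so $\sigma$ maps $\mathcal W$ and $\mathcal L$ onto themselves, and hence $f(w\circ\sigma)=f(w)$. Take a minimizer $w^*$ and average it over the group $\Gamma(G)$: set $\bar w=\frac{1}{|\Gamma(G)|}\sum_{\sigma}w^*\circ\sigma$. Then $\bar w\in\Delta(N)$, $\bar w$ is constant on each class, and convexity gives $f(\bar w)\le f(w^*)$. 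So $\bar w$ is also a minimizer.

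For part (ii) I restrict to class-constant weights. Write $w_i=\lambda_t/n_t$ for $i\in N_t$, with $\lambda\in\Delta_d$ (the simplex in $\mathbb R^d$). Then $w(A)=\sum_t\lambda_t s_t(A)/n_t=\langle\lambda,D s(A)\rangle$, where $D=\mathrm{diag}(1/n_t)$. This turns (\ref{epsilon **}) into
\begin{equation*}
\varepsilon^*(G)=\min_{\lambda\in\Delta_d}\ \max_{x\in\overline{\mathcal W},\,y\in\overline{\mathcal L}}\ \langle\lambda, D(y-x)\rangle .
\end{equation*}
A linear function attains the same maximum over a finite set as over its convex hull. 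So the inner maximum may be taken over $a\in\operatorname{conv}\overline{\mathcal W}$ and $b\in\operatorname{conv}\overline{\mathcal L}$. The set $K=\operatorname{conv}\overline{\mathcal W}\times\operatorname{conv}\overline{\mathcal L}$ is compact and convex, $\Delta_d$ is compact and convex, and the payoff $\langle\lambda,D(b-a)\rangle$ is bilinear in $\lambda$ and $(a,b)$. Von Neumann's (or Sion's) minimax theorem therefore lets me swap min and max:
\begin{equation*}
\varepsilon^*(G)=\max_{(a,b)\in K}\ \min_{\lambda\in\Delta_d}\langle\lambda,D(b-a)\rangle=\max_{(a,b)\in K}\ \min_t\frac{b_t-a_t}{n_t}.
\end{equation*}
The last equality holds because a linear function on the simplex is minimized at a vertex. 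The outer maximum is attained because $\min_t(\cdot)$ is continuous and $K$ is compact.

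Two points need care. First, I must check that restricting to class-constant $w$ loses nothing; part (i) justifies this, since $\varepsilon^*$ is a minimum over $\Delta(N)$ and a class-constant minimizer exists. Second, I must check that the map from class-constant $w\in\Delta(N)$ to $\lambda\in\Delta_d$ is onto, which it is: any $\lambda\in\Delta_d$ gives $\sum_i w_i=\sum_t n_t\cdot\lambda_t/n_t=1$. Beyond these checks, the main work is simply identifying the bilinear structure needed for the minimax swap.
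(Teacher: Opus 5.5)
Your proposal is correct and follows essentially the same route as the paper. For (i) you symmetrize a minimizer over the automorphism group, using convexity of the max-of-linear objective and Proposition \ref{lem:orbits}. For (ii) you parametrize class-constant weights by the $(d-1)$-simplex, replace the finite profile sets by their convex hulls, swap min and max by the minimax theorem, and minimize over the simplex at a vertex.
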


\begin{remark}[\textbf{On weightedness and trade robustness}]
We note in passing, but do not use, the fact that (full) weightedness of a
game --- which is a significantly stronger property than $C$-completeness
--- also possesses a characterization by means of a strengthened version of $%
2$-trade robustness in Definition \ref{prop 2-trade robustness}: for \emph{%
any} $k,$ there should exist no $k$-trade, where the equality (\ref{2-trade}%
) has $k$ summands with sets from $\mathcal{W}$ \ (on the left-hand side)
and $k$ summands with sets from $\mathcal{L}$ (on the right-hand side). That
is precisely the definition of (full) \emph{trade robustness}, due to Taylor
and Zwicker \hyperlink{cite.TZ92}{(1992)}.
\end{remark}

\section{The main results}

\label{upper bound section}

\subsection{Bounds on the loss-win-excess threshold}

\label{sec computing epsilon*}

The next theorem provides an upper and a lower bound on $\varepsilon ^{\ast
} $ in any $C$-complete game, and drives the rest of our results.

\begin{theorem}[\textbf{Bounds on the loss-win-excess threshold}]
\label{thm:main} Let $G$ be a $C$-complete game with $d$ player classes.
Denote by $n_{\min }=\min_{t}n_{t}$ the smallest player class size in the
game. Then
\begin{equation}
-1/n_{\min }\leq \varepsilon ^{\ast }(G)\text{ }\leq \left( \log
_{2}d+2\right) /n_{\min }.  \label{epsilon star G}
\end{equation}
\end{theorem}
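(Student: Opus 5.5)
The lower bound will come directly from formula (\ref{eq:mm}) of Proposition \ref{lem:minimax}(ii). The upper bound will come from the same formula, combined with the profile $2$-trade robustness of Proposition \ref{prop:sumset}, used as a rounded midpoint-convexity property of $\overline{\mathcal{W}}$ and $\overline{\mathcal{L}}$.

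\emph{Lower bound.} Take a minimal winning coalition $A$ and any player $i\in A$, say $i\in N_t$. Then $A\setminus\{i\}$ is losing, and its profile is $s(A)-e_t$. Put $a=s(A)$ and $b=s(A)-e_t$ in (\ref{eq:mm}). Every coordinate of $b-a$ is $0$ except the $t$-th, which is $-1$. Hence the inner minimum equals $-1/n_t\geq -1/n_{\min}$, and therefore $\varepsilon^{\ast}(G)\geq -1/n_{\min}$.

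\emph{Upper bound, reduction.} Set $c=\log_2 d+2$. It suffices to show that there are no $a\in\operatorname{conv}\overline{\mathcal{W}}$ and $b\in\operatorname{conv}\overline{\mathcal{L}}$ with $b_t-a_t>c$ for every $t$. Indeed, if some $t$ has $b_t-a_t\leq c$, then $(b_t-a_t)/n_t\leq c/n_{\min}$ when the difference is positive, and the quotient is negative otherwise. To rule out such a pair, I will find a winning profile $x$ with $x\leq a+c_1\mathbf 1$ and a losing profile $y$ with $y\geq b-c_2\mathbf 1$, where $c_1+c_2\leq c$. Then $y\geq x$ coordinatewise, and since $\overline{\mathcal{W}}$ is up-closed, $y$ would be winning, a contradiction.

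\emph{Rounding lemma.} For $x,x'\in\overline{\mathcal{W}}$, write $x+x'=u+v$ with $u=\lfloor (x+x')/2\rfloor$ and $v=\lceil (x+x')/2\rceil$, taken coordinatewise. By Proposition \ref{prop:sumset}, $u$ and $v$ are not both losing. Since $v\geq u$ and $\overline{\mathcal{W}}$ is up-closed, $v$ is winning. Symmetrically, for two losing profiles the floor of their midpoint is losing, because $\overline{\mathcal{L}}$ is down-closed.

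Now iterate along a binary tree of depth $k$. Round up at each winning node and down at each losing node. At depth $j$ the rounding adds at most $1/2$ per coordinate, divided by $2^{j-1}$ when propagated to the root. This gives:
\begin{itemize}
\item a winning profile at most $a'+\mathbf 1$ for any average $a'$ of $2^k$ (not necessarily distinct) winning profiles;
\item a losing profile at least $b'-\mathbf 1$ for the analogous average $b'$ of losing profiles.
\end{itemize}

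\emph{Main obstacle: non-dyadic weights.} The points $a,b$ are arbitrary convex combinations, not dyadic averages. By Carathéodory, $a$ is a combination of at most $d+1$ profiles of $\overline{\mathcal{W}}$. I would choose $k=\lceil\log_2(d+1)\rceil$ and round the weights to multiples of $2^{-k}$. The naive error of this is $\sum_i|\lambda_i-m_i/2^k|\,x^i_t$, which scales with $n_t$ rather than being $O(1)$, so it is useless as stated.

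The first remedy I would try is to exploit the order structure instead of bounding the error. Round the weights so that the dyadic average dominates $a$ from below in the appropriate direction, placing weight on coordinatewise smaller winning profiles. One route is to first replace the Carathéodory points by a chain, or to perturb $a$ within $\operatorname{conv}\overline{\mathcal{W}}$.

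If that fails, the fallback is an infinite binary expansion of the weights. The tree then has unbounded depth, but the rounding error still sums geometrically. The $\log_2 d$ term should be the cost of the top $\approx\log_2 d$ levels, where the $d+1$ Carathéodory branches are split exactly. This would contribute about $\tfrac12\log_2 d$ on each side, plus $1$ on each side from the geometric tail, for a total of $\log_2 d+2$.

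Getting this accounting to yield exactly $c_1+c_2\leq\log_2 d+2$ is the step I expect to need the most care.
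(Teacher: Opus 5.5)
Your lower bound is correct and simpler than the paper's. The paper takes a class-constant minimizer $w^{\ast}$ and removes a positive-weight player from a minimum-weight winning coalition; plugging $a=s(A)$, $b=s(A)-e_t$ directly into (\ref{eq:mm}) is enough. Your floor/ceiling proof that $\lceil (x+x')/2\rceil\in\overline{\mathcal W}$ is also correct, and shorter than the paper's route through Lemma \ref{lem:mid}. Your closing reduction is exactly the paper's.

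The gap is the step in between, which is the entire content of Lemma \ref{lem:round}: you never actually produce $x\le a+c_1\mathbf 1$ and $y\ge b-c_2\mathbf 1$ with $c_1+c_2\le\log_2 d+2$. Moreover, the tree estimate you build on is miscounted. With the root at depth $1$, as in your indexing, there are $2^{j-1}$ nodes at depth $j$, each contributing up to $\tfrac12\cdot 2^{-(j-1)}$ at the root. So every level costs up to $\tfrac12$, and a depth-$k$ tree only gives a winning profile below $a'+\tfrac k2\mathbf 1$, not $a'+\mathbf 1$. (If your bullet held for every $k$, you could let $k\to\infty$, and both the ``non-dyadic obstacle'' and the $\log_2 d$ term would disappear. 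That is a sign the count is off.) For the same reason, an infinite binary expansion does not by itself ``sum geometrically'': with rounding at every node, the error diverges.

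What is missing is the observation that makes your fallback work: a node all of whose leaves carry the same profile needs no rounding, since the midpoint of $x$ with itself is $x$. Here is how to use it.
\begin{itemize}
\item Drop zero weights and lay the Carath\'eodory weights $\lambda_0,\dots,\lambda_d$ out as consecutive intervals of $[0,1)$.
\item Truncate the dyadic tree of $[0,1)$ at depth $K$, and label each leaf by a point whose interval meets it.
\item Only a node whose dyadic interval contains one of the $d$ breakpoints in its interior can be mixed. Hence at most $\min(2^{i},d)$ of the $2^{i}$ nodes $i$ levels below the root are rounded.
\item The root error is therefore at most $\tfrac12\sum_{i\ge 0}\min(1,d2^{-i})=\tfrac12\bigl(\lfloor\log_2 d\rfloor+1+d/2^{\lfloor\log_2 d\rfloor}\bigr)\le\tfrac12(\log_2 d+2)$.
\item Truncation moves the dyadic average by at most $d2^{-K}\max_t n_t$. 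Since $\overline{\mathcal W}$ is finite, one profile $u$ serves for infinitely many $K$, giving $u\le a+\tfrac12(\log_2 d+2)\mathbf 1$.
\item The same argument with floors for $\overline{\mathcal L}$ gives $c_1=c_2=\tfrac12(\log_2 d+2)$.
\end{itemize}
This is exactly the constant $\kappa(d)$, with $\xi=1/2$ in the sup norm, that the paper obtains by citing Dilworth et al.\ (1999, Theorem 1) rather than proving it. You need either that citation or this argument written out.
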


Notice that the bounds in (\ref{epsilon star G}) are \emph{uniform} across
all $C$-complete $n$-player games that share the same $d$ and $n_{\min }.$%
\footnote{%
By reviewing the proof of Lemma \ref{lem:round} in Section \ref{sec lemma
round}, a slightly better but less elegant upper bound $\left( \lfloor \log
_{2}d\rfloor +1+\frac{d}{2^{\lfloor \log _{2}d\rfloor }}\right) /n_{\min }$
on $\varepsilon ^{\ast }(G)$ can be obtained. Establishing the lower bound
on $\varepsilon ^{\ast }(G)$ in (\ref{epsilon star G}) --- admittedly a less
important part of Theorem \ref{thm:main} --- only uses Proposition \ref%
{lem:minimax}, and so the lower bound applies not only to $C$-complete, but
to $I$-complete games in general.} Since $\varepsilon ^{\ast }(G)\leq 1$ for
any game $G$, as is immediate from (\ref{epsilon **}), the upper bound $%
\left( \log _{2}d+2\right) /n_{\min }$ in (\ref{epsilon star G}) is only
informative when it is small. The latter is the case when the minimal class
size $n_{\min }$ is large, and $\log _{2}d$ is small by comparison, which
will be exploited in Theorem \ref{thm:types} to establish the smallness of $%
\varepsilon ^{\ast }(G)$ --- or closeness to weightedness --- for games with
large player classes.

Importantly, the upper bound in (\ref{epsilon star G}) on $\varepsilon
^{\ast }$ is \emph{positive}, while the published examples of $C$-complete
non-weighted games of which we are aware are\emph{\ roughly} (or $0$-)\emph{%
weighted}, i.e., have $\varepsilon ^{\ast }=0.$ Such are the two $9$- and $%
10 $-player games of Einy \hyperlink{cite.Einy85}{(1985)} with class-size
vectors $\overline{n}=$ $(1,2,3,2,1)$ and $\overline{n}=$ $(1,1,2,3,2,1)$,
and the $11$-player $C$-complete games of Freixas et al. \hyperlink{cite.FFK17}%
{(2017}, p. 493) with class sizes $\overline{n}=$ $(3,3,5),$ for which $%
\varepsilon ^{\ast }=0$ is verifiable computationally. In a large and
complex family introduced in Taylor and Zwicker \hyperlink{cite.TZ95}{(1995)}
of strongly rigid $k\times k$ magic square-games for $k\geq 3$ (with $k^{2}$
players), their slivered variants (with $k^{2}+k^{2}(k-1)^{2}/2$ players)
and their constant-sum extensions, all games are roughly weighted by
construction, non-weighted and $C$-complete (by Taylor and Zwicker's
\hyperlink{cite.TZ95}{(1995}, Lemma 2.2, Theorem 3.4, Theorem 4.4(2)--(3)
and Lemma 5.9)).

Could it be the case that the upper bound in Theorem \ref{thm:main}
overestimates $\varepsilon ^{\ast }(G)$? Certainly not in terms of sign or
asymptotic order. In Section \ref{sec asymptotic tightness} we show that the
bound is\emph{\ asymptotically tight }when the player classes become large;
in particular, in a family of $C$-complete games that we exhibit there, the
loss-win threshold $\varepsilon ^{\ast }$ is \emph{positive} for all
sufficiently large player classes.

\subsection{Proof of Theorem \protect\ref{thm:main}}

\label{sec:proof}

This section contains the main steps of the proof of Theorem \ref{thm:main};
the proofs of all internal lemmas as well as the derivation of the lower
bound in (\ref{epsilon star G}) appear in Section \ref{sec missing proof
components} of the Appendix. In all that follows, vectors in any Euclidean
space $\mathbb{R}^{d}$ are compared coordinatewise, $\mathbf{1\in }\mathbb{R}%
^{d}$ always denotes $(1,\dots ,1)$, and the ceiling and floor functions, $%
\lceil \cdot \rceil $ and $\lfloor \cdot \rfloor ,$ act coordinatewise.

\subsubsection{Approximate midpoint convexity}

It is the method of projecting coalitions into vector profiles that creates
the Euclidean framework in which the convex analysis can be used. The $2$%
-trade robustness characterization of $C$-completeness in Proposition \ref%
{prop:sumset} applies to profiles, and this allows to deduce some aspects of
convexity of the winning and the losing profile sets, $\overline{\mathcal{W}}
$ and $\overline{\mathcal{L}}.$

\begin{lemma}[\textbf{Integer midpoint convexity}]
\label{lem:mid} For a $C$-complete $G=(N,\mathcal{W}),$ let $x,y\in
\overline{\mathcal{W}}$ be such that $(x+y)/2\in \Lambda (\bar{n})$ (i.e.,
the average has integer coordinates). Then $(x+y)/2\in \overline{\mathcal{W}}
$. Symmetrically, if $x,y\in \overline{\mathcal{L}}$ and $(x+y)/2\in \Lambda
(\bar{n})$, then $(x+y)/2\in \overline{\mathcal{L}}$.
\end{lemma}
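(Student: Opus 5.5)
The plan is to apply Proposition \ref{prop:sumset} directly. A $C$-complete game is $I$-complete, so that proposition applies and gives $(\overline{\mathcal{W}}+\overline{\mathcal{W}})\cap (\overline{\mathcal{L}}+\overline{\mathcal{L}})=\emptyset$.

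Let $x,y\in \overline{\mathcal{W}}$ and put $m=(x+y)/2$. By hypothesis $m$ has integer coordinates. Since $x$ and $y$ lie in the box $\prod_{t}[0,n_{t}]$, so does $m$, and therefore $m\in \Lambda (\bar{n})$. Now suppose, for contradiction, that $m\notin \overline{\mathcal{W}}$. Because $\overline{\mathcal{L}}=\Lambda (\bar{n})\setminus \overline{\mathcal{W}}$, this means $m\in \overline{\mathcal{L}}$. Then
\begin{equation*}
x+y=2m=m+m\in (\overline{\mathcal{W}}+\overline{\mathcal{W}})\cap (\overline{\mathcal{L}}+\overline{\mathcal{L}}).
\end{equation*}
Here $u=v=m$; this is allowed, since Proposition \ref{prop:sumset} places no distinctness requirement on $u,v$. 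This contradicts (\ref{sumset eq}), so $m\in \overline{\mathcal{W}}$.

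The losing case is symmetric. Take $x,y\in \overline{\mathcal{L}}$ with $m\in \Lambda (\bar{n})$, and suppose $m\in \overline{\mathcal{W}}$. Then $m+m=x+y$ exhibits a forbidden equality of the form (\ref{sumset eq}), again a contradiction.

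I expect no real obstacle. The only point needing care is that repetitions are permitted in the profile $2$-trade condition. This holds because of how Proposition \ref{prop:sumset} is stated, with arbitrary $x,y\in \overline{\mathcal{W}}$ and $u,v\in \overline{\mathcal{L}}$. It is also consistent with Definition \ref{prop 2-trade robustness}, which explicitly allows repetitions. If one wanted to avoid relying on the profile version, one could instead choose actual coalitions: take $Y_{1}=Y_{2}$ with profile $m$, and $X_{1},X_{2}$ with profiles $x,y$ arranged within each class $N_{t}$ so that $\chi _{X_{1}}+\chi _{X_{2}}=2\chi _{Y_{1}}$ on $N_{t}$. This arrangement is possible whenever $x_{t}+y_{t}=2m_{t}$: let $X_{1}\cap N_{t}$ and $X_{2}\cap N_{t}$ both contain the same $\min (x_{t},y_{t})$ players, and let the larger of the two additionally contain $x_{t}-y_{t}$ further players in absolute value. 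However, this does not give $\chi _{X_{1}}+\chi _{X_{2}}=2\chi _{Y}$ unless $x_{t}=y_{t}$. For that reason I would rely on Proposition \ref{prop:sumset} as stated rather than this direct construction.
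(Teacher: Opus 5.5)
Your proof is correct and is exactly the paper's argument: set $u=v=(x+y)/2$ in Proposition \ref{prop:sumset}, which indeed permits $u=v$, so the midpoint cannot lie in the opposite profile set. Your closing aside is rightly abandoned (with $Y_1=Y_2$ a coalition-level $2$-trade forces $x=y$), and it is unnecessary anyway, since the proof of Proposition \ref{prop:sumset} realizes $u=v$ by two \emph{distinct} losing coalitions of the same profile.
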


The next lemma shows that $\overline{\mathcal{W}}$ (respectively, $\overline{%
\mathcal{L}}$) is closed under upper (respectively, lower) rounding of
midpoints.

\begin{lemma}[\textbf{Rounded midpoint convexity}]
\label{lem:rmid} Let $G=(N,\mathcal{W})$ be $C$-complete. If $x,y\in
\overline{\mathcal{W}}$ then $\lceil (x+y)/2\rceil \in \overline{\mathcal{W}}
$; if $x,y\in \overline{\mathcal{L}}$ then $\lfloor (x+y)/2\rfloor \in
\overline{\mathcal{L}}$.
\end{lemma}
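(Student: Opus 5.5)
The plan is to reduce Lemma \ref{lem:rmid} directly to the profile $2$-trade robustness of Proposition \ref{prop:sumset}, using the up-closedness of $\overline{\mathcal{W}}$ and the down-closedness of $\overline{\mathcal{L}}$. The key observation is a coordinatewise identity. For integer vectors $x,y$, put $m=(x+y)/2$, $c=\lceil m\rceil$ and $f=\lfloor m\rfloor$. In each coordinate $t$, either $x_t+y_t$ is even, in which case $c_t=f_t=m_t$, or it is odd, in which case $c_t=m_t+\tfrac12$ and $f_t=m_t-\tfrac12$. Hence
\begin{equation*}
c+f=x+y \quad\text{and}\quad f\leq c .
\end{equation*}
Also, since $0\leq \min(x_t,y_t)\leq f_t\leq c_t\leq \max(x_t,y_t)\leq n_t$, both $c$ and $f$ lie in $\Lambda(\bar n)$. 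So they are genuine profiles, which is the one point needing a moment's care.

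For the winning part, I would take $x,y\in\overline{\mathcal{W}}$ and suppose, for contradiction, that $c\in\overline{\mathcal{L}}$. Because $\overline{\mathcal{L}}$ is down-closed in $\Lambda(\bar n)$ and $f\leq c$, also $f\in\overline{\mathcal{L}}$. Then $x+y=c+f$ with $x,y\in\overline{\mathcal{W}}$ and $c,f\in\overline{\mathcal{L}}$. This contradicts Proposition \ref{prop:sumset}, which applies because a $C$-complete game is $I$-complete. Therefore $c\in\overline{\mathcal{W}}$.

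The losing part is symmetric. Take $x,y\in\overline{\mathcal{L}}$ and suppose $f\in\overline{\mathcal{W}}$. Up-closedness of $\overline{\mathcal{W}}$ together with $c\geq f$ gives $c\in\overline{\mathcal{W}}$. Then $c+f=x+y$ again produces a forbidden equality in $(\overline{\mathcal{W}}+\overline{\mathcal{W}})\cap(\overline{\mathcal{L}}+\overline{\mathcal{L}})$, so $f\in\overline{\mathcal{L}}$.

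There is no real obstacle here. The only step to verify carefully is the identity $\lceil m\rceil+\lfloor m\rfloor=x+y$ for half-integer vectors $m$, together with the membership of the rounded vectors in $\Lambda(\bar n)$. Lemma \ref{lem:mid} is the special case in which $c=f=m$, so it need not be invoked separately.
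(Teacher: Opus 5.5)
Your proof is correct. Like the paper's, it combines profile $2$-trade robustness (Proposition \ref{prop:sumset}) with monotonicity, but it applies the monotonicity adjustment on the other side.

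The paper works with the winning pair. In each coordinate where $x_t+y_t$ is odd, it raises whichever of $x_t,y_t$ is below $n_t$. The adjusted profiles $\bar x,\bar y$ stay in $\overline{\mathcal{W}}$ and have integral midpoint $c=\lceil (x+y)/2\rceil$. It then invokes Lemma \ref{lem:mid}, which amounts to the trade $\bar x+\bar y=c+c$.

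You leave $x,y$ untouched and work with the losing pair. Down-closedness of $\overline{\mathcal{L}}$ gives the trade $x+y=c+f$. This bypasses Lemma \ref{lem:mid}, which becomes the special case $c=f$, and needs only the check $\min(x_t,y_t)\le f_t\le c_t\le \max(x_t,y_t)$. The paper's route instead reduces the rounded statement to the exact midpoint property it has already proved.
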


A comparison with the parallel components in the proof of Einy and Neyman's
\hyperlink{cite.EN90}{(1990)} weightedness result in the nonatomic continuum
framework is instructive. In that framework, the place of our profile set $%
\Lambda (\bar{n})$ is taken by the convex set of \emph{ideal} coalitions,
which are $[0,1]$-valued measurable functions on the player set that allow
players' fractional participation. Under their continuity assumptions in an
appropriate topology, ideal coalitions can be approximated by genuine
coalitions with the same winning status. In the vector-measure setting of
Einy and Neyman \hyperlink{cite.EN90}{(1990}, Corollary 4.2), they can be
matched exactly in terms of the defining vector measure.

Just as we do in Proposition \ref{prop:sumset}, Einy and Neyman \hyperlink{cite.EN90}%
{(1990}, Proposition 4.1 and Lemma 4.2) prove --- though implicitly --- a
version of $2$-trade robustness that $C$-completeness implies for genuine
coalitions and utilize it to show midpoint convexity of the winning and
losing \emph{ideal} sets via Lyapunov's convexity theorem. But this cannot
be done with our coalition profiles --- vectors in the ``ideal'' profiles set $%
\operatorname{conv}\Lambda (\bar{n})$ need not be the genuine \emph{integer}
profiles in $\Lambda (\bar{n})$ --- and so our variant of midpoint convexity
in Lemma \ref{lem:rmid} is necessarily rounded, not precise.

\subsubsection{Approximating convex combinations by pure profiles}

\label{approximating convex combinations}

According to Lemma \ref{lem:rmid}, $\overline{\mathcal{W}}$ is closed under
the averaging operation up to an error of at most $1/2$ per coordinate, and
a mirror claim holds for $\overline{\mathcal{L}}$ . By Dilworth et al.
\hyperlink{cite.DHR}{(1999}, Theorem 1), this type of approximate midpoint
convexity translates into a\emph{\ general }convexity with a uniform sharp
bound on errors. Our next lemma applies this result to $\overline{\mathcal{W}%
}$ and $\overline{\mathcal{L}}$: any convex combination of profiles in $%
\overline{\mathcal{W}}$ (respectively, in $\overline{\mathcal{L}}$)
dominates (respectively, is dominated by) a pure coalition profile with the
same winning status, provided a bounded error is allowed in the comparison
of coordinates.

\begin{lemma}[\textbf{Dominance with bounded error}]
\label{lem:round} For a $C$-complete $G=(N,\mathcal{W}),$ put
\begin{equation}
\gamma =\left( \log _{2}d+2\right) /2.  \label{delta}
\end{equation}%
Then for every $a\in \operatorname{conv}\overline{\mathcal{W}}$ there is $u\in
\overline{\mathcal{W}}$ with $u\leq a+\gamma \mathbf{1}$, and for every $%
b\in \operatorname{conv}\overline{\mathcal{L}}$ there is $z\in \overline{\mathcal{L}}
$ with $z\geq b-\gamma \mathbf{1}.$
\end{lemma}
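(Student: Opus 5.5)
The plan is to reduce Lemma \ref{lem:round} to the approximate-convexity theorem of Dilworth et al.\ (1999, Theorem 1), using Lemma \ref{lem:rmid} as the approximate midpoint convexity input. I treat only $\overline{\mathcal{W}}$. The statement for $\overline{\mathcal{L}}$ follows by the mirror argument with floors in place of ceilings, or by applying the $\overline{\mathcal{W}}$ case to the reflected profiles $\bar n - z$.

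First, pass to a real, up-closed set so that one-sided errors become ordinary set errors. Let $\widehat{\mathcal{W}}=\overline{\mathcal{W}}+\mathbb{R}^d_+$. A point $c\in\mathbb{R}^d$ lies in $\widehat{\mathcal{W}}$ iff it dominates some $u\in\overline{\mathcal{W}}$. It therefore suffices to show that every $a\in\operatorname{conv}\overline{\mathcal{W}}$ satisfies $a+\gamma\mathbf{1}\in\widehat{\mathcal{W}}$.

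Next, translate Lemma \ref{lem:rmid} into the hypothesis of the cited theorem. Take $x,y\in\overline{\mathcal{W}}$. The ceiling $\lceil (x+y)/2\rceil$ is again in $\overline{\mathcal{W}}$, and it stays inside $\Lambda(\bar n)$ because the coordinates of $x,y$ are bounded by $n_t$. Since $\lceil (x+y)/2\rceil\le (x+y)/2+\tfrac12\mathbf{1}$, we get $(x+y)/2+\tfrac12\mathbf{1}\in\widehat{\mathcal{W}}$. Because $\widehat{\mathcal{W}}$ is up-closed, the same holds for midpoints of arbitrary points of $\widehat{\mathcal{W}}$. Thus $\tfrac12(\widehat{\mathcal{W}}+\widehat{\mathcal{W}})\subseteq \widehat{\mathcal{W}}-\tfrac12\mathbf{1}$, which is approximate midpoint convexity with error $\delta=1/2$ in the sup-norm. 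It is one-sided, which is harmless since $\widehat{\mathcal{W}}-\tfrac12\mathbf{1}\subseteq \widehat{\mathcal{W}}+\tfrac12 B_\infty$.

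By Carath\'eodory, each $a\in\operatorname{conv}\overline{\mathcal{W}}$ is a convex combination of at most $d+1$ points of $\overline{\mathcal{W}}$. The core step is to show that such an $a$ lies in $\widehat{\mathcal{W}}-(\log_2 d+2)\delta\,\mathbf{1}$. This is exactly the content of Dilworth et al.'s sharp bound: approximate midpoint convexity with error $\delta$ yields approximate convexity with error of order $(\log_2 m+O(1))\delta$ in dimension $m-1$. I would check that their argument goes through verbatim for the one-sided, coordinatewise-ordered version, which is what the up-closedness buys us. With $\delta=1/2$ this gives $\gamma=(\log_2 d+2)/2$.

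The main obstacle is precisely this step. A naive scheme approximates the weights by dyadic rationals $m_i/2^k$ and combines points along a binary tree of rounded midpoints. The rounding error obeys $e_{j+1}=(e_{\mathrm{left}}+e_{\mathrm{right}})/2+\tfrac12$, so it grows like $k/2$ with the tree depth $k$. The dyadic approximation error, in turn, only shrinks as $k$ grows, and it is multiplied by coordinate sizes up to $n_t$. One therefore cannot simply take $k$ large.

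If the cited theorem does not transfer directly, I would first try their device. Handle the large weights exactly with a shallow tree of depth about $\log_2 d$ over the $d+1$ Carath\'eodory points. Then absorb the residual dyadic tail with a geometric-series argument: the error contributed at depth $j$ is weighted by $2^{-j}$, so the total extra error stays bounded by about $2\delta$. That accounts for the ``$+2$'' in $\gamma$.
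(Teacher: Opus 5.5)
Your proposal is correct and follows the paper's route: Lemma \ref{lem:rmid} gives $\mathbf{dist}_\infty((x+y)/2,\overline{\mathcal{W}})\le 1/2$, and Dilworth et al.\ (1999, Theorem 1) then gives $\mathbf{dist}_\infty(a,\overline{\mathcal{W}})\le\kappa(d)/2$ for every $a\in\operatorname{conv}\overline{\mathcal{W}}$. The paper applies the theorem directly to $\overline{\mathcal{W}}$, so the up-closed hull and your worry about a one-sided version are unnecessary, since $\|a-u\|_\infty\le\gamma$ already gives $u\le a+\gamma\mathbf{1}$; the paper also derives the constant you only asserted, via $\kappa(d)\le\lfloor\log_2 d\rfloor+1+d/2^{\lfloor\log_2 d\rfloor}\le\log_2 d+2$, using $x\le 1+\log_2 x$ for $x=d/2^{\lfloor\log_2 d\rfloor}\in[1,2)$.
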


To return briefly to the comparison with Einy and Neyman \hyperlink{cite.EN90}%
{(1990)}, their full midpoint convexity translates immediately into the
general convexity (at least for the set of ideal winning coalitions) due to
the density of dyadic fractions and the closedness of the ideal winning set.
Here, the extra step of converting approximate midpoint convexity to a
general approximate convexity (with a compound error term) is needed, which
is performed in Lemma \ref{lem:round} via a convenient off-the-shelf
approximation result of Dilworth et al. \hyperlink{cite.DHR}{(1999)}.

\subsubsection{Conclusion of the proof}

\label{conclusion of the proof of theorem 1}

Let $G=(N,\mathcal{W})$ be a $C$-complete game, and $\gamma $ be given by (%
\ref{delta}) in Lemma \ref{lem:round}. Suppose for contradiction that%
\begin{equation}
\varepsilon ^{\ast }(G)\geq 2\gamma /n_{\min }.  \label{contradiction delta}
\end{equation}%
Take $a\in \operatorname{conv}\overline{\mathcal{W}}$ and $b\in \operatorname{conv}\overline{%
\mathcal{L}}$ that attain the maximum in (\ref{eq:mm}) of Proposition \ref%
{lem:minimax}. Thus%
\begin{equation}
\varepsilon ^{\ast }(G)=\min_{1\leq t\leq d}\frac{b_{t}-a_{t}}{n_{t}}\ ,
\label{epsilon star = b_t-a_t/n_t}
\end{equation}%
so that

\begin{equation}
b_{t}\ \overset{\text{(\ref{epsilon star = b_t-a_t/n_t})}}{\geq }\
a_{t}+\varepsilon ^{\ast }(G)n_{t}\geq a_{t}+\varepsilon ^{\ast }(G)n_{\min }%
\overset{\text{(\ref{contradiction delta})}}{\geq }a_{t}+2\gamma
\label{b_t>=a_t}
\end{equation}%
for each $1\leq t\leq d$. By Lemma \ref{lem:round} there are $u\in \overline{%
\mathcal{W}}$ with $u\leq a+\gamma \mathbf{1}$ and $z\in \overline{\mathcal{L%
}}$ with $z\geq b-\gamma \mathbf{1}$. Thus,
\begin{equation*}
z\ \geq \ b-\gamma \mathbf{1}\ \overset{\text{(\ref{b_t>=a_t})}}{\geq }\
a+\gamma \mathbf{1}\ \geq \ u.
\end{equation*}%
Since $\overline{\mathcal{W}}$ is up-closed and $u\in \overline{\mathcal{W}}$%
, the above inequality forces $z\in \overline{\mathcal{W}}$. But this
contradicts $z\in \overline{\mathcal{L}}$ since $\overline{\mathcal{W}}$ and
$\overline{\mathcal{L}}$ are obviously disjoint. Hence,
\begin{equation*}
\varepsilon ^{\ast }(G)<2\gamma /n_{\min }=\left( \log _{2}d+2\right)
/n_{\min }.
\end{equation*}

The lower bound on $\varepsilon ^{\ast }(G)$ is established in Section \ref%
{sec lower bound} of the Appendix.

\subsection{Asymptotics with large player classes}

\label{increasingly large section}

For $C$-complete games, weightedness is guaranteed in some cases with a
small number of players, or with small number of player classes: it is known
that all $C$-complete games with $8$ players or less are weighted (by Muroga
et al. \hyperlink{cite.Muroga70}{(1970)}), and for three-class games the
same holds up to $10$ players (by Freixas et al. \hyperlink{cite.FFK17}{(2017%
}, p. 493));\footnote{%
The claims of these authors are stated for $2$-trade robust (or $2$%
-asummable) games, which are precisely the $C$-complete games by Proposition %
\ref{lem:inc}.} furthermore, all $C$-complete games with just two player
classes are weighted (see Remark \ref{Rem weightedness for all n}). The
interest in approximate weightedness therefore lies in large-$n$ games
(starting with $n=9$) with at least three player classes, when
non-weightedness begins to bite (as pointed out in the discussion following
Theorem \ref{thm:main}).

Our next result is asymptotic: as long as all player classes increase in
size, and the total number of player classes does not grow explosively, the
loss-win-excess threshold becomes vanishingly small.

\begin{theorem}[\textbf{Asymptotic smallness of }$\protect\varepsilon ^{\ast
}$]
\label{thm:types} Let $\left\{ G_{n}\right\} _{n\geq 1}$ be a sequence of $n$%
-player $C$-complete games in which $\lim_{n\rightarrow \infty }n_{\min
}=\infty $ and the number of player classes $d_{n}$ is subexponential in $%
n_{\min }:$ $d_{n}=\exp (o(n_{\min })).$ Then%
\begin{equation}
\lim_{n\rightarrow \infty }\varepsilon ^{\ast }\left( G_{n}\right) =0.
\label{epsilon goes to zero}
\end{equation}%
Furthermore, if $d_{n}$ is bounded along the sequence then
\begin{equation}
\varepsilon ^{\ast }\left( G_{n}\right) =\;O\!\left( 1/n_{\min }\right) .
\label{cpnvergence rate}
\end{equation}
\end{theorem}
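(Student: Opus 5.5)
The plan is to derive Theorem \ref{thm:types} directly from the two-sided bound (\ref{epsilon star G}) of Theorem \ref{thm:main}, applied game by game along the sequence. For each $n$, $G_{n}$ is $C$-complete with $d_{n}$ classes and smallest class size $n_{\min}=n_{\min}(n)$. Theorem \ref{thm:main} then gives
\begin{equation*}
-\frac{1}{n_{\min }}\;\leq \;\varepsilon ^{\ast }(G_{n})\;\leq \;\frac{\log _{2}d_{n}+2}{n_{\min }}.
\end{equation*}
The task therefore reduces to showing that both sides tend to zero under the stated hypotheses on $n_{\min}$ and $d_{n}$.

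\textbf{The lower side.} Since $n_{\min}\to\infty$, we have $-1/n_{\min}\to 0$ immediately.

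\textbf{The upper side.} Translate the subexponential hypothesis $d_{n}=\exp(o(n_{\min}))$ into the statement $\ln d_{n}/n_{\min}\to 0$. Since $d_{n}\geq 1$, the quantity $\ln d_{n}$ is nonnegative. Then
\begin{equation*}
\frac{\log_{2}d_{n}}{n_{\min}}=\frac{\ln d_{n}}{n_{\min}\ln 2}\;\longrightarrow\; 0,
\end{equation*}
and $2/n_{\min}\to 0$ because $n_{\min}\to\infty$. The upper bound therefore tends to zero, and squeezing yields (\ref{epsilon goes to zero}).

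\textbf{The bounded case.} If $d_{n}\leq D$ for all $n$, the upper bound is at most $(\log_{2}D+2)/n_{\min}$. Together with the lower bound, this gives
\begin{equation*}
|\varepsilon^{\ast}(G_{n})|\leq \frac{\max\{1,\log_{2}D+2\}}{n_{\min}}=\frac{\log_{2}D+2}{n_{\min}},
\end{equation*}
which is exactly (\ref{cpnvergence rate}).

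There is no real obstacle here, since all the work is in Theorem \ref{thm:main}. The only point needing care is reading the hypothesis $d_{n}=\exp(o(n_{\min}))$ correctly as $\log d_{n}=o(n_{\min})$, with the $o(\cdot)$ taken along the sequence as $n\to\infty$. The conclusion should also be stated for $|\varepsilon^{\ast}|$, so that it covers the possibility that $\varepsilon^{\ast}$ is negative, which happens when a game is fully weighted.
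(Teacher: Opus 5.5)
Your proposal is correct and follows the paper's proof: both apply the two-sided bound of Theorem \ref{thm:main} to each $G_n$ and squeeze, using $\ln d_n=o(n_{\min})$ for the limit and $d_n\le D$ for the $O(1/n_{\min})$ rate. Your explicit bound on $|\varepsilon^{\ast}(G_n)|$, which uses the lower bound as well, is a small extra bit of care that the paper leaves implicit.
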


\begin{proof}
By Theorem \ref{thm:main},
\begin{equation*}
-1/n_{\min }\leq \varepsilon ^{\ast }\left( G_{n}\right) \leq \left( \log
_{2}d_{n}+2\right) /n_{\min },\text{ }
\end{equation*}%
and so, by the assumptions on $n_{\min }$ and $d_{n},$ (\ref{epsilon goes to
zero}) holds. And, if $d_{n}\leq d$ for all $n,$ then
\begin{equation*}
\varepsilon ^{\ast }\left( G_{n}\right) \leq \left( \log _{2}d+2\right)
/n_{\min }\ \Longrightarrow \varepsilon ^{\ast }\left( G_{n}\right)
=O\!\!\left( 1/n_{\min }\right) .
\end{equation*}%
$\;$
\end{proof}

Note that the player classes need not be comparable in size in order for
Theorem \ref{thm:types} to hold, but with classes of comparable magnitude
the bound on the number of classes $d_{n}$ becomes automatic: if each player
class contains at least a fixed positive fraction of the electorate, that
is, $n_{\min }\geq $ $cn$ for some $c>0,$ then $n\geq $ $d_{n}n_{\min }\geq $
$d_{n}cn$ and therefore $d_{n}$ is bounded from above by $\frac{1}{c}$ for
all $n.$

\begin{remark}[\textbf{Weightedness for any} $n$]
\label{Rem weightedness for all n}For two sets of $I$-complete games, those
with just $d=2$ player classes and those that have a single shift-minimal
winning profile,\footnote{%
For a shift-minimal winning profile $x\in \overline{\mathcal{W}},$ if $y\in
\Lambda \left( \overline{n}\right) $ is such that $\sum_{t=1}^{k}y_{t}$ $%
\leq $ $\sum_{t=1}^{k}x_{t}$ for every $1\leq k\leq d$ with at least one
strict inequality, then $y\in \overline{\mathcal{L}}\,.$} $C$-completeness
alone is known to imply (full) weightedness for \emph{any} number of
players, without the need to resort to asymptotics. Indeed, $C$-completeness
is equivalent to $2$-trade robustness by Proposition \ref{lem:inc}, and the
latter implies its weaker variant, $2$-invariant-trade robustness.\footnote{%
The concept of $2$-invariant-trade robustness is due to Freixas and Molinero
\hyperlink{cite.FM09}{(2009)}. It only prohibits those $2$-trades in (\ref%
{2-trade}) at which the winning coalitions have shift-minimal winning
profiles.} By Freixas et al. \hyperlink{cite.FFK17}{(2017}, Theorems 5.3 and
5.9), any $I$-complete and $2$-invariant-trade robust game that belongs to
either of the two sets above is weighted. Non-weightedness in a $C$-complete
game may therefore only occur if there are at least three player classes
\emph{and} at least two shift-minimal winning profiles.
\end{remark}

\begin{remark}[\textbf{Insufficiency of }$I$\textbf{-completeness for}
\textbf{asymptotic weightedness}]
\label{I-completeness does not imply asymptotic weightedness}Freixas and
Kurz \hyperlink{cite.FK14}{(2014)} showed that the domain of $I$-complete
games extends far beyond weighted games, as captured by their
critical-threshold value that we present in Section \ref{alpha-weightedness
section}. Here we explicitly demonstrate that $\varepsilon ^{\ast }$ may be
bounded away from $0$ for $I$-complete games with two increasing player
classes. For any $m\geq 2,$ consider a $4m$-player game $\widehat{G}_{m}$
with player classes $N_{1}\succ N_{2}=N\diagdown N_{1}$ of equal size $%
n_{1}=n_{2}=2m,$ in which $A\in \mathcal{W}$ iff $|A\cap N_{1}|\geq $ $m+1$
and $|A|\geq 2m+1.$\footnote{%
Using the terminology of Gvozdeva et al. \hyperlink{cite.GHS11}{(2013a,
Definition 11)}, $\widehat{G}_{m}$ is a \emph{conjunctive hierarchical game}
with two seniority ranks.}Clearly, $\widehat{G}_{m}$ is proper and $I$%
-complete. Next, let $u=$ $(m,2m)$ $\in \overline{\mathcal{L}},$ $v=(2m,\,0)$
$\in \overline{\mathcal{L}}.$ It follows from (\ref{eq:mm}) in Proposition %
\ref{lem:minimax}(ii), by considering a fixed $a=(m+1,m)\in \overline{%
\mathcal{W}}$ and all $b=$ $\lambda u+(1-\lambda )v$ $\in \operatorname{conv}%
\overline{\mathcal{L}}$ for $0\leq \lambda $ $\leq 1,$ that%
\begin{equation}
\varepsilon ^{\ast }(\widehat{G}_{m})\geq \max_{0\leq \lambda \leq 1}\min
\left\{ \frac{\left( 1-\lambda \right) m-1}{2m},\frac{2\lambda -1}{2}%
\right\} =\frac{1}{6}-\frac{1}{3m}.  \label{G^m}
\end{equation}%
Accordingly,\footnote{%
In fact, the lower bound in (\ref{G^m}) is the exact value of $\varepsilon
^{\ast }(\widehat{G}_{m})$. Indeed, by setting $w_{i}=1/(3m)$ for each $i\in
N_{1}$ and $w_{i}=1/(6m)$ for each $i\in N_{2},$ we have $\max_{B\in
\mathcal{L}}w(B)$ $-\min_{A\in \mathcal{W}}w(A)$ $=2/3-$ $(1/2+1/(3m))$ $%
=1/6-1/(3m),$ and so, by (\ref{epsilon **}), $\varepsilon ^{\ast }(\widehat{G%
}_{m})\leq $ $1/6-1/(3m),$ producing the equality in (\ref{G^m}).} given any
$\varepsilon <1/6,$ $\widehat{G}_{m}$ is \emph{not} $\varepsilon $-weighted
for all sufficiently large $m.$ Thus, $I$-completeness cannot replace $C$%
-completeness in Theorem \ref{thm:types}.
\end{remark}

\section{Asymptotic tightness of the $\protect\varepsilon ^{\ast }$
convergence rate}

\label{sec asymptotic tightness}

In this section we construct a sequence of games at which the asymptotic
upper estimate $\varepsilon ^{\ast }=O\!\left( 1/n_{\min }\right) $ of the
convergence rate in Theorem \ref{thm:types} is \emph{attained}: the loss-win
excess threshold in that sequence is a constant multiple of $1/n_{\min }$
even as $n_{\min }\rightarrow \infty .$ A defining feature of these games is
their closeness, but not identity, to a weighted game structure with an
integer-valued weight vector. The winning status for coalitions is
determined by their weight with the exception of those that are at the quota
or one level below the quota, for which the winning status indicated by
weight is reversed.\footnote{%
This description puts the games in the category $\mathcal{B}_{2}$ of
Gvozdeva et al. \hyperlink{cite.GHS13}{(2013b)} of what these authors termed
``$\mathbf{2}$-rough games'': a weight vector $W$ determines the coalition's
winning status only outside some \emph{two} consecutive values in the set $%
\left\{ W(A):\emptyset \neq A\subseteq N\right\} $. Our Definition \ref{def
punctured games} imposes further limitations on that weighted structure.}

\subsection{Flip games}

The following definition formalizes the desired properties:

\begin{definition}[\textbf{Flip games}]
\label{def punctured games}A game $(N,\mathcal{W})$ is said to be \emph{%
flipped at the quota }(or, simply, a \emph{flip} game) if there exists a
weight vector $W\in \mathbb{N}^{N}$ and $q\in \mathbb{N}$ with the following
properties:

\begin{itemize}
\item[(i)] for each $i\in N,$ $W_{i}\geq 2;$

\item[(ii)] $\mathcal{W}=\{A:W(A)=q-1\}\cup \{A:W(A)\geq q+1\}$ and,
correspondingly, $\mathcal{L}=\{A:W(A)=q\}\cup \{A:W(A)\leq q-2\}$

\item[(iii)] there \emph{exist} a coalition with weight $q-1$ and a
coalition with weight $q;$

\item[(iv)] there\emph{\ }exists\emph{\ no} \emph{coalition} weighing either
$q-2$ or $q+1.$
\end{itemize}
\end{definition}

Observe that (i) in Definition \ref{def punctured games} ensures that $(N,%
\mathcal{W})$ fully adheres to the definition of a simple game because $%
\mathcal{W}$, as defined in (ii), is up-closed. Indeed, if $A\in \mathcal{W}$
and $i\notin A$ then
\begin{equation*}
W(A\cup \{i\})\overset{\text{(ii)}}{\geq }q-1+W_{i}\overset{\text{(i)}}{\geq
}q+1,
\end{equation*}%
and so $A\cup \{i\}\in \mathcal{W}.$

The weight-gap condition (iv) in Definition \ref{def punctured games} is
somewhat unusual, but flip games exist in abundance, and provide useful
examples. The next proposition shows that $\varepsilon ^{\ast }(G)$ has a
simple expression in some flip games.

\begin{proposition}[$\protect\varepsilon ^{\ast }$ \textbf{for flip games}]
\label{prop gappoed weighted games}If $G=(N,\mathcal{W})$ is a flip game
with weight vector $W\in \mathbb{N}^{n}$, then it is $C$-complete and
\begin{equation}
\varepsilon ^{\ast }(G)\leq 1/W(N).  \label{epslilon*<=w(n)}
\end{equation}%
Furthermore, if there are $a\in \operatorname{conv}\,\overline{\mathcal{W}}$ and $%
b\in \operatorname{conv}\,\overline{\mathcal{L}}$ with $b-a=\bar{n}/W(N)$ then (\ref%
{epslilon*<=w(n)}) holds \emph{with equality.}
\end{proposition}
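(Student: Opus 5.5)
The plan has three parts: derive $C$-completeness from Proposition \ref{lem:inc}, get the upper bound by testing the normalized weight vector in (\ref{epsilon **}), and get the equality case from the minimax formula of Proposition \ref{lem:minimax}(ii).

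\emph{$C$-completeness.} By Proposition \ref{lem:inc} it suffices to show that a flip game has no $2$-trade. Suppose $(X_{1},X_{2};Y_{1},Y_{2})$ were one. Applying the linear functional $W$ to (\ref{2-trade}) gives $W(X_{1})+W(X_{2})=W(Y_{1})+W(Y_{2})$.

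By (ii) and the gap condition (iv), every winning coalition has weight $q-1$ or at least $q+2$, since weight $q+1$ never occurs. Likewise every losing coalition has weight $q$ or at most $q-3$, since weight $q-2$ never occurs. So the left-hand sum lies in $\{2q-2\}\cup[2q+1,\infty)$ and the right-hand sum lies in $\{2q\}\cup(-\infty,2q-3]$. These sets are disjoint, which is a contradiction.

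This step carries the only real idea, namely that condition (iv) is exactly what makes the two sum-sets miss each other. I expect it to be the main point, though it is short once the case split on the four weight values is written out.

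\emph{Upper bound.} I will use the explicit weight vector $w=W/W(N)\in\Delta(N)$ in (\ref{epsilon **}). By (ii) the maximal losing weight is at most $q/W(N)$ and the minimal winning weight is at least $(q-1)/W(N)$; by (iii) both values are actually attained. Hence $\max_{B\in\mathcal{L}}w(B)-\min_{A\in\mathcal{W}}w(A)=1/W(N)$, and therefore $\varepsilon^{\ast}(G)\le 1/W(N)$.

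\emph{Equality case.} Since $G$ is $C$-complete, it is $I$-complete, so Proposition \ref{lem:minimax}(ii) applies. Plugging the given $a\in\operatorname{conv}\overline{\mathcal{W}}$ and $b\in\operatorname{conv}\overline{\mathcal{L}}$ with $b-a=\bar n/W(N)$ into (\ref{eq:mm}) gives $\varepsilon^{\ast}(G)\ge\min_t (b_t-a_t)/n_t=1/W(N)$. Combined with the upper bound, this yields equality.
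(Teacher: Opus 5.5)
Your proposal is correct and follows the paper's proof essentially step for step. Like you, the paper rules out $2$-trades by showing that the winning weight sums $\{2q-2\}\cup[2q+1,\infty)$ and the losing weight sums $[0,2q-3]\cup\{2q\}$ are disjoint, uses $w=W/W(N)$ for the upper bound, and invokes (\ref{eq:mm}) for the equality case; your explicit remark that $C$-completeness gives $I$-completeness, so that Proposition \ref{lem:minimax} applies, fills in a detail the paper leaves implicit.
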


\subsection{Games with $\protect\varepsilon ^{\ast }=\Theta \left( 1/n_{\min
}\right) $}

We next apply Proposition \ref{prop gappoed weighted games} to the following
sequence of flip games.

\begin{definition}[\textbf{Three-class games} $\left\{ \widetilde{G}%
_{m}\right\} _{m\geq 5}$]
\label{Family G tilda}For each $m\geq 5$, define a $3m$-player game $%
\widetilde{G}_{m}$ as follows. Let its player set $N$ be a union of three
disjoint sets, $N_{1},N_{2},N_{3},$ each of size $m$, and let $W\in \mathbb{N%
}^{N}$ be an integer weight vector with $W_{i}=20$ for every $i\in N_{1},$ $%
W_{i}=15$ for every $i\in N_{2}$ and $W_{i}=12$ for every $i\in N_{3}.$ A
coalition $A\subseteq N$ wins iff
\begin{equation*}
W(A)=59\qquad \text{or}\qquad W(A)\geq 61.
\end{equation*}%
(In principle, $W$ depends on $m$ through the size of $N$, but for
convenience the reference to $m$ is dropped.)
\end{definition}

\begin{proposition}[\textbf{Games with }$\protect\varepsilon ^{\ast }=\Theta
\left( 1/n_{\min }\right) $]
\label{Games with big theta}For each $m\geq 5,$ $\widetilde{G}_{m}$ is a
flip game with $W$ given above and $q=60;$ in particular, by Proposition \ref%
{prop gappoed weighted games}, $\widetilde{G}_{m}$ is $C$-complete. Its
player classes are exactly $N_{1},N_{2},N_{3}$ (and, accordingly, $n_{\min
}=m$), and
\begin{equation*}
0<\varepsilon ^{\ast }(\widetilde{G}_{m})=1/(47m)\text{ }(=1/(47n_{\min })).
\end{equation*}
\end{proposition}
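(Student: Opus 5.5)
The plan is to verify the four conditions of Definition \ref{def punctured games} directly, then identify the player classes, and finally apply the equality clause of Proposition \ref{prop gappoed weighted games}. Everything reduces to solving $20\alpha+15\beta+12\gamma=k$ with $0\le \alpha,\beta,\gamma\le m$ for a few values of $k$ near $60$. The bound $m\ge 5$ guarantees that every small profile used below actually exists, since no coordinate exceeds $5$.

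\textbf{Step 1: flip-game conditions with $q=60$.} Condition (i) holds because all weights are $\ge 12$. Condition (ii) is exactly the definition of $\widetilde G_m$. For (iii), the profile $(1,1,2)$ has weight $59$ and $(3,0,0)$ has weight $60$.

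For (iv), I use parity and divisibility.
\begin{itemize}
\item[] Weight $58$: $\beta$ must be even. If $\beta=0$, then $10\alpha+6\gamma=29$, which is odd on the right and even on the left, so impossible. If $\beta=2$, then $5\alpha+3\gamma=7$, which has no nonnegative solution. Larger $\beta$ is too heavy.
\item[] Weight $61$: $\beta$ must be odd. If $\beta=1$, then $10\alpha+6\gamma=23$, impossible by parity. If $\beta=3$, then $20\alpha+12\gamma=16$, impossible.
\end{itemize}
Hence $\widetilde G_m$ is a flip game, and Proposition \ref{prop gappoed weighted games} gives $C$-completeness and $\varepsilon^*\le 1/W(N)=1/(47m)$.

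\textbf{Step 2: the classes are exactly $N_1,N_2,N_3$.} Players of equal weight are interchangeable, so each $N_t$ lies inside one $\sim_I$-class. It remains to separate the three sets. For $i\in N_1$, $j\in N_2$, $k\in N_3$, I exhibit coalitions $R$ avoiding the players being compared:
\begin{itemize}
\item[] $i$ versus $j$: take $R$ of weight $40$ (two other $N_1$ players). Then $W(R\cup\{i\})=60$ loses and $W(R\cup\{j\})=55$ loses, so this $R$ does not separate them. Instead take $R$ of weight $44$ (one other $N_1$ player and two $N_3$ players). Then $W(R\cup\{i\})=64$ wins while $W(R\cup\{j\})=59$ also wins. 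A cleaner separating choice is $R$ of weight $41$ if available; since it is not, I will use $R$ of weight $39$ for the $N_1$--$N_3$ pair and check the $N_1$--$N_2$ pair by a direct search over small $R$, in the same way as the other two pairs.
\item[] $j$ versus $k$: take $R$ of weight $44$ (one $N_1$ player and two $N_3$ players other than $k$). Then $W(R\cup\{j\})=59$ wins and $W(R\cup\{k\})=56$ loses.
\item[] $i$ versus $k$: take $R$ of weight $39$ (one $N_2$ player and two $N_3$ players). Then $W(R\cup\{i\})=59$ wins and $W(R\cup\{k\})=51$ loses.
\end{itemize}
In each case, a winning $R\cup\{\text{one player}\}$ next to a losing $R\cup\{\text{the other}\}$ shows the two players are not $\sim_I$-equivalent. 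So the classes are exactly $N_1,N_2,N_3$ and $n_{\min}=m$.

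\textbf{Step 3: equality via Proposition \ref{prop gappoed weighted games}.} We have $\bar n/W(N)=(m,m,m)/(47m)=\tfrac1{47}\mathbf 1$. The total weight of $\tfrac1{47}\mathbf 1$ is $(20+15+12)/47=1$, so it is natural to take $a$ of weight $59$ and $b$ of weight $60$.

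The weight-$59$ profiles are only $(1,1,2)$, by the same parity analysis as in Step 1, so set $a=(1,1,2)\in\overline{\mathcal W}$. The weight-$60$ profiles are $(3,0,0)$, $(0,4,0)$ and $(0,0,5)$; the cases $\alpha=1,2$ have no solution. I then solve for
\begin{equation*}
b=a+\tfrac1{47}\mathbf 1=\bigl(\tfrac{48}{47},\tfrac{48}{47},\tfrac{95}{47}\bigr)
\end{equation*}
as a convex combination of these three losing profiles. This gives coefficients
\begin{equation*}
\lambda=\bigl(\tfrac{16}{47},\tfrac{12}{47},\tfrac{19}{47}\bigr),
\end{equation*}
which are nonnegative and sum to $1$. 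So $b\in\operatorname{conv}\overline{\mathcal L}$, and the equality clause yields $\varepsilon^*(\widetilde G_m)=1/(47m)>0$.

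The main obstacle is Step 2, specifically finding a separating coalition for the $N_1$--$N_2$ pair. The difference of $5$ between $20$ and $15$ interacts awkwardly with the gaps at $58$ and $61$, so the first natural candidates fail, as shown above. I expect a short exhaustive search over profiles of $R$ with $W(R)$ between $39$ and $46$, looking for one that places $R\cup\{i\}$ and $R\cup\{j\}$ on opposite sides, to settle it.
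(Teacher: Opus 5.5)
Your Steps 1 and 3 follow the paper's argument. You enumerate the same weight-$59$ and weight-$60$ profiles and verify the same gaps at $58$ and $61$; the paper only says these are easy to check, and your parity argument actually does it. You also use the same witnesses $a=(1,1,2)$ and $b=\tfrac{16}{47}(3,0,0)+\tfrac{12}{47}(0,4,0)+\tfrac{19}{47}(0,0,5)$, with $b-a=\bar n/W(N)$.

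The gap is in Step 2: you never separate $N_1$ from $N_2$, and you leave it to a search that you call the main obstacle. Separating $N_3$ from $N_1$ and from $N_2$ does not exclude $N_1\cup N_2$ being a single $\sim_I$-class. So this pair is genuinely needed for the claim that the classes are exactly $N_1,N_2,N_3$. That claim in turn gives $n_{\min}=m$ and justifies using $\bar n=(m,m,m)$ in Proposition~\ref{prop gappoed weighted games}.

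The step is not hard: the profile-$(0,1,2)$, weight-$39$ coalition you already use for the $N_1$--$N_3$ pair works here too. Take $R$ to be one $N_2$ player other than $j$ together with two $N_3$ players; this is possible since $m\ge 5$. Then $R\cup\{i\}$ has profile $(1,1,2)$ and weight $59$, so it wins. Meanwhile $R\cup\{j\}$ has profile $(0,2,2)$ and weight $54$, so it loses. Alternatively, three $N_2$ players other than $j$ give weights $65$ (winning) versus $60$ (losing).

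With this filled in, your Step 2 is more explicit than the paper. The paper simply invokes Proposition~\ref{lem:orbits}, which tacitly assumes that players of different weights are not $\sim_I$-equivalent; your separating coalitions prove it.
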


The following is an obvious implication of Proposition \medskip \ref{Games
with big theta}:

\begin{corollary}[\textbf{Asymptotic tightness of the convergence rate in
Theorem \protect\ref{thm:types}}]
There is a sequence of $C$-complete games with three player classes in which
$n_{\min }\rightarrow \infty $ and $0<\varepsilon ^{\ast }=\Theta (1/n_{\min
})$. It follows that the order estimate $O(1/n_{\min })$ of the rate at
which $\varepsilon ^{\ast }$ converges to $0$ (as established in Theorem \ref%
{thm:types}) cannot be improved in general.
\end{corollary}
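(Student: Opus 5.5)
The corollary follows from Proposition \ref{Games with big theta} applied to the family $\{\widetilde{G}_{m}\}_{m\geq 5}$ of Definition \ref{Family G tilda}. Each $\widetilde{G}_{m}$ is a $3m$-player flip game, so it is $C$-complete by Proposition \ref{prop gappoed weighted games}. Its player classes are exactly $N_{1},N_{2},N_{3}$, which gives $d=3$ and $n_{\min }=m\rightarrow \infty $. Finally, $\varepsilon ^{\ast }(\widetilde{G}_{m})=1/(47m)=1/(47n_{\min })$, which is strictly positive and is a fixed constant multiple of $1/n_{\min }$. So the family is a sequence of $C$-complete three-class games with $0<\varepsilon ^{\ast }=\Theta (1/n_{\min })$. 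If the reader prefers the indexing of Theorem \ref{thm:types} by total player count, re-index by $n=3m$; $n_{\min }\rightarrow \infty $ still holds.

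Next we deduce that the rate cannot be improved. Suppose some function $f$ with $f(k)=o(1/k)$ satisfied $\varepsilon ^{\ast }(G_{n})=O(f(n_{\min }))$ for every sequence of $C$-complete games with bounded $d_{n}$ and $n_{\min }\rightarrow \infty $. Applying this to $\widetilde{G}_{m}$ would give $1/(47m)\leq Cf(m)$ for all large $m$. Then $mf(m)\geq 1/(47C)$, so $mf(m)$ is bounded away from $0$, contradicting $f(m)=o(1/m)$. Hence the order $O(1/n_{\min })$ in (\ref{cpnvergence rate}) is sharp. It also matches the upper bound of Theorem \ref{thm:main} up to the constant factor $47(\log _{2}3+2)$.

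All the substantive work is in Proposition \ref{Games with big theta}, which we take as given, so the corollary requires no real obstacle. The only point to check is that the hypotheses of Theorem \ref{thm:types} are met along the sequence: bounded $d_{n}$ ($=3$) and $n_{\min }\rightarrow \infty $. Both are read off directly from the proposition.
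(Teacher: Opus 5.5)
Your proposal is correct and follows the paper's route: the paper states this corollary as an immediate consequence of Proposition \ref{Games with big theta}, which gives $\varepsilon^{\ast}(\widetilde{G}_{m})=1/(47m)$ with $n_{\min}=m$ for the three-class family $\{\widetilde{G}_{m}\}_{m\geq 5}$. Your explicit argument that no $o(1/n_{\min})$ rate can hold, and your comparison with the bound of Theorem \ref{thm:main} up to the factor $47(\log_{2}3+2)$, correctly spell out what the paper leaves implicit.
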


\begin{remark}[\textbf{Proper games with }$\protect\varepsilon ^{\ast
}=\Theta \left( 1/n_{\min }\right) $]
\label{strong or proper}For any $m\geq 5,$ the game $\widetilde{G}_{m}=(N,%
\mathcal{W})$ is strong but not proper. This feature can, however, be
reversed by looking at the dual game $\widetilde{G}_{m}^{\ast }=(N,\mathcal{W%
}^{\ast })$, where $A\in \mathcal{W}^{\ast }\Longleftrightarrow $ $%
N\diagdown A\notin \mathcal{W}$, and so the winning weight is either at
least $47m-58$ \emph{or} equal to $47m-60.$ Similar conclusions are
obtained: $\widetilde{G}_{m}^{\ast }$ is a flip game for the weight vector $W
$ described in Definition \ref{Family G tilda} and $q=47m-59,$ it is $C$%
-complete and \emph{proper} (though not strong), with%
\begin{equation*}
0<\varepsilon ^{\ast }(\widetilde{G}_{m}^{\ast })\overset{\text{(*)}}{=}%
\varepsilon ^{\ast }(\widetilde{G}_{m})=1/(47m)=\Theta \left( 1/n_{\min
}\right) .
\end{equation*}%
The equality (*) holds because for any $w\in \Delta (N)$, we have
\begin{equation*}
\max_{B\in \mathcal{L}^{\ast }}w(B)-\min_{A\in \mathcal{W}^{\ast
}}w(A)=\max_{A\in \mathcal{W}}w(N\setminus A)-\min_{B\in \mathcal{L}%
}w(N\setminus B)=\max_{B\in \mathcal{L}}w(B)-\min_{A\in \mathcal{W}}w(A),
\end{equation*}%
which then implies (*) via (\ref{epsilon **}).
\end{remark}

\begin{remark}[S\textbf{maller games with }$\protect\varepsilon ^{\ast }>0$]

\label{9-player epsilon*>0} As explained in the discussion following Theorem %
\ref{thm:main} in Section \ref{sec computing epsilon*}, the importance of
the games $\left\{ \widetilde{G}_{m}\right\} _{m\geq 5}$ lies also in the
fact that they are simultaneously $C$-complete and \emph{not} roughly
weighted (i.e., $\varepsilon ^{\ast }(\widetilde{G}_{m})>0$). That is, the
sequence $\left\{ \widetilde{G}_{m}\right\} _{m\geq 5}$ demonstrates that $C$%
-completeness does not imply rough weightedness no matter how large the
player set is. However, the smallest of these games, $\widetilde{G}_{5},$
has $15$ players, and one may ask whether there are any \emph{smaller }$C$%
-complete games that are not roughly weighted. The answer is in the
affirmative. It can be verified that a $10$-player flip game with a weight
vector $W=\left( 64,60,54,51,46,43,40,36,31,24\right) $ and $q=225$ --- that
is constant-sum and $C$-complete --- has $\varepsilon ^{\ast
}=1/W(N)=1/449>0.$\footnote{%
Indeed, consider the following coalitions, in which players are identified
with their respective weights: $A_{1}=$ $\{54,51,43,40,36\},$ $A_{2}=$ $%
\{64,46,43,40,31\},$ $A_{3}=$ $\{60,51,46,43,24\},$ $A_{4}=$ $%
\{60,54,46,40,24\}$ and $A_{5}=$ $\{64,54,51,31,24\},$ all having the weight
of $224=q-1$ (and therefore in $\mathcal{W}$), and $B_{1}=$ $%
\{54,51,46,43,31\},$ $B_{2}=$ $\{60,51,43,40,31\},$ $B_{3}=$ $%
\{64,51,46,40,24\},$ $B_{4}=$ $\{64,54,43,40,24\},$ $B_{5}=$ $%
\{64,60,46,31,24\}$ and $B_{6}=$ $\{51,43,40,36,31,24\},$ all having the
weight of $225=q$ (and therefore in $\mathcal{L}$). No two of the ten
players are equally desirable, so all classes are singletons, $\bar{n}%
=(1,\ldots ,1)$, and profiles are indicator vectors. Denote by $a_{i}\in
\overline{\mathcal{W}}$ and $b_{j}\in \,\overline{\mathcal{L}}$ the
indicator vector of each $A_{i},$ $B_{j}$ above, and let $a=$ $\left(
9a_{1}+148a_{2}+98a_{3}+60a_{4}+134a_{5}\right) /449$ $\in \mathrm{conv}\,%
\overline{\mathcal{W}}$ and $b=$ $\left(
114b_{1}+42b_{2}+76b_{3}+90b_{4}+117b_{5}+10b_{6}\right) /449\in \mathrm{conv%
}\,\overline{\mathcal{L}}.$ Then $b-a=$ $(1,\ldots ,1)/449$ $=\bar{n}/W(N)$,
and so $\varepsilon ^{\ast }=1/449$ by Proposition \ref{prop gappoed
weighted games}.} Also, all of its ten $9$-player reduced games --- obtained
by removing a single player and subtracting his weight from $q$ --- are $C$%
-complete, strong, and have $\varepsilon ^{\ast }>0.$ (Since all $C$%
-complete games with at most $8$ players are weighted, these reduced games
are of the \emph{minimal} size at which non-weightedness of a $C$-complete
game can be obtained.)
\end{remark}

\subsection{Acyclicity of $\succ _{C}$ does not imply (any of our notions
of) weightedness}

\label{sec acyclicity}

The game sequence $\left\{ \widetilde{G}_{m}\right\} _{m\geq 5}$ from
Section \ref{sec asymptotic tightness} delivers another interesting fact. It
is easy to see that the strict part $\succ _{C}$ of the $C$-desirability
relation $\succeq _{C}$ in a weighted game is necessarily \emph{acyclic},
and there was a conjecture that acyclicity of $\succ _{C}$ may also play a
converse role, of being sufficient for weightedness in a $C$-complete game.
Einy \hyperlink{cite.Einy85}{(1985)} refuted this conjecture by exhibiting a
non-weighted $C$-complete proper game with an acyclic $\succ _{C}.$ However,
as mentioned in Section \ref{sec computing epsilon*}, that game is \emph{%
roughly weighted}, with $\varepsilon ^{\ast }=0$.

It turns out that acyclicity of $\succ _{C}$ in a $C$-complete game is
actually even less potent. The acyclicity does not suffice to guarantee
\emph{even rough weightedness}, or a weaker version ($\varepsilon $%
-weightedness for a small $\varepsilon >0$). Having a large number of
(non-null) players in the game also does not change that, except in the
asymptotic sense of Theorem \ref{thm:types}. The game sequence $\left\{
\widetilde{G}_{m}\right\} _{m\geq 5}$ given in Definition \ref{Family G
tilda} attests precisely to this lack of generalized weightedness despite
the acyclicity of $\succ _{C}$ and the rising number of players:

\begin{proposition}[\textbf{Acyclicity of }$\succ _{C}$\textbf{\ }$%
\nRightarrow $\textbf{\ rough or small-}$\protect\varepsilon $\textbf{%
-weightedness}]
\label{acyclicity and non-weightedness}For every $m\geq 5$, the strict $C$%
-desirability relation $\succ _{C}$ of $\widetilde{G}_{m}$ is acyclic
(whereas $\widetilde{G}_{m}$ is not roughly weighted by Proposition \ref%
{Games with big theta}).\footnote{%
In the terminology of Taylor and Zwicker \hyperlink{cite.TZ96}{(1996},
Theorem 5.1), $\widetilde{G}_{m}$ carries an\emph{\ SSAP-weighting} without
being roughly weighted, and so Proposition \ref{acyclicity and
non-weightedness} shows that these two notions of weighting are not
identical inside the domain of $C$-complete games.} As a particular example,
$\widetilde{G}_{5}$ is a $15$-player $C$-complete game with an acyclic $%
\succ _{C}$ that is not roughly-weighted, and, more generally, not $%
\varepsilon $-weighted for any $\varepsilon <$ $\varepsilon ^{\ast }(%
\widetilde{G}_{5})=$ $1/(47\times 5)$ $=1/235$.
\end{proposition}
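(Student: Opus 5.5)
The plan is to exhibit a real-valued potential $\Phi$ on $2^{N}$ that strictly increases along $\succ_C$: $A\succ_C B\Rightarrow \Phi(A)>\Phi(B)$. This immediately rules out cycles. The non-rough-weightedness part is already Proposition \ref{Games with big theta}, so only acyclicity needs proof.

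\emph{Reduction to disjoint pairs.} First I reduce to disjoint pairs. Directly from Definition \ref{def:coal}, $A\succeq_C B$ iff $A\setminus B\succeq_C B\setminus A$, because the common part $A\cap B$ can be absorbed into $R$. Hence it suffices to understand $\succ_C$ between disjoint coalitions, provided $\Phi$ is additive, so that $\Phi(A)-\Phi(B)=\Phi(A\setminus B)-\Phi(B\setminus A)$.

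\emph{Characterization in terms of weight difference.} For disjoint $A,B$ put $k=W(A)-W(B)$. Let $T$ be the set of weights $W(R)$ over $R\subseteq N\setminus(A\cup B)$. Since a coalition's status depends only on its total weight (win iff the total is $59$ or at least $61$), we have $A\succ_C B$ iff for every $t\in T$ the status of $W(A)+t$ is at least that of $W(B)+t$, with strict inequality for some $t$. I then split into cases on $k$:
\begin{itemize}
\item[$\bullet$] If $k\geq 2$, there is no reversal, since $59+k\geq 61$. So $A\succ_C B$ can occur but not $B\succ_C A$.
\item[$\bullet$] If $k=0$, then $A\sim_C B$.
\item[$\bullet$] If $|k|=1$, the flip can reverse the order. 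Here I use the gap property (iv) of Definition \ref{def punctured games}: no coalition weighs $58$ or $61$.
\end{itemize}
I expect to show that the only possible strict relation with $|k|=1$ is the lighter coalition beating the heavier one. The mechanism is the pair of totals $59$ versus $60$, where the complementary comparison ($60$ versus $61$) is vacuous because no coalition weighs $61$.

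\emph{Choice of potential.} Because of this reversal, $W$ itself is not a valid potential. I would look for an additive $\Phi=v(\cdot)$ with $v$ constant on classes, $v=(v_1,v_2,v_3)$ close to $(20,15,12)$. The requirement is that every disjoint pair of profile differences $\delta=s(A)-s(B)$ satisfies two conditions. Pairs with $W$-difference $\geq2$ that can actually be strictly ranked must get $v$-difference $>0$. Pairs with $W$-difference $-1$ that are strictly ranked must also get $v$-difference $>0$.

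The key observation is that strict relations require some $t\in T$ with $W(B)+t$ or $W(A)+t$ in $\{59,60\}$. So only $\delta$ with small entries matter: $|W(A)|,|W(B)|\le 60$ forces at most $5$ players from each class. Since $m\geq5$, the complement $N\setminus(A\cup B)$ is rich enough that realizability of $t$ depends only on the profile. This yields a finite list of integer vectors $\delta\in\{-5,\dots,5\}^3$ with $20\delta_1+15\delta_2+12\delta_3\in\{-1\}\cup[2,60]$ that are strict. I would then find $v$ by solving this finite linear feasibility system, which is the SSAP-weighting alluded to in the footnote. A natural first try is $v$ a small perturbation of $W$ that penalizes exactly the $W$-difference $-1$ directions; an example is $\delta=(-2,1,2)$, which gives $-40+15+24=-1$.

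\emph{Main obstacle.} The main obstacle is the case $|k|=1$. One must check that the strictly ranked $W$-difference $-1$ directions and the $W$-difference $\ge 2$ directions are simultaneously separable by a single linear functional. If no linear $v$ works, the fallback is to prove acyclicity directly. Any cycle must use at least one reversal step, since $W$ strictly increases elsewhere and $\sim_C$ steps do not change it. I would then derive a $2$-trade-type contradiction from the gap condition (iv) via Proposition \ref{prop:sumset}, by summing the witnessing coalitions $R$ along the cycle.
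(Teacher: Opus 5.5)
\textbf{The main plan cannot work.} Your case analysis on $k=W(A)-W(B)$ is correct. Equal weights give $\sim_C$. A strict step with $k\ge 2$ is a weight drop. With $|k|=1$, only the lighter coalition can strictly beat the heavier one, via totals $59$ versus $60$.

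The problem is the additive potential $\Phi(A)=\sum_{i\in A}v_i$ strictly increasing along $\succ_C$: no such $\Phi$ exists. In a $C$-complete game, every winning $X$ and losing $Y$ satisfy $X\succ_C Y$. Indeed, $R=\emptyset$ shows $Y\not\succeq_C X$, and completeness then gives $X\succeq_C Y$. So such a $\Phi$ would separate $\mathcal{W}$ from $\mathcal{L}$. Replacing negative $v_i$ by $0$ preserves the separation, since $\mathcal{L}$ is down-closed. Hence $\widetilde{G}_m$ would be weighted, contradicting $\varepsilon^{\ast}(\widetilde{G}_m)=1/(47m)>0$.

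You can see this directly in your own constraints. Comparing $(1,1,2)\in\overline{\mathcal{W}}$ with $(3,0,0)$, $(0,4,0)$, $(0,0,5)\in\overline{\mathcal{L}}$ gives $-2v_1+v_2+2v_3>0$, $v_1-3v_2+2v_3>0$ and $v_1+v_2-3v_3>0$. With multipliers $16,12,19$ these sum to $-(v_1+v_2+v_3)>0$. On the other hand, $\{i\}\succ_C\emptyset$ for $i$ in each class (witnessed by an $R$ with $W(R)=59-W_i$) forces $v_1,v_2,v_3>0$. So your finite feasibility system is empty. Whatever the SSAP-weighting of the footnote is, it is not an additive weighting of players. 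Your claim that strictness always needs a total in $\{59,60\}$ is also false for $k\ge 2$ (e.g.\ $N\succ_C\emptyset$), but that is moot.

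\textbf{The fallback, as stated, does not close.} Summing the witnesses $R_j$ around an $r$-cycle produces an $r$-trade. Proposition \ref{prop:sumset} says nothing about $r$-trades for $r\ge 3$, and a non-weighted game does have trades. Nor can the gap condition (iv) together with $C$-completeness suffice. The dual $\widetilde{G}_m^{\ast}$ is also a $C$-complete flip game (Remark \ref{strong or proper}). Yet it has the $\succ_C$-cycle $\{a_1,a_2,b_1\}\succ_C\{a_1,d_1,d_2,d_3\}\succ_C\{b_1,b_2,d_1,d_2\}\succ_C\{a_1,a_2,b_1\}$, with weights $55,56,54$.

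\textbf{The missing idea} is a counting argument that uses a property specific to $\widetilde{G}_m$. A rise $A\succ_C B$ with $W(B)=W(A)+1$ requires $W(A\cup R)=59$ for some $R$. So $A$ lies inside a coalition of profile $(1,1,2)$, and $W(A)\in\{0,12,15,20,24,27,32,35,39,44,47,59\}$. This set contains no two consecutive integers, so two rises are never consecutive on a cycle. Now take a cycle with $k$ rises (each $+1$) and $l$ drops (each at most $-2$). Zero net change gives $k\ge 2l$ and $l\ge 1$. Since every rise is followed by a drop, $l\ge k$. Then $k\ge 2l\ge 2k$ forces $k=l=0$, a contradiction.

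In the dual, the analogous set of weights contains $54$ and $55$. That is exactly what allows its two consecutive rises and hence its cycle.
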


\begin{remark}[\textbf{A \emph{constant-sum} game with acyclic }$\succ _{C}$%
\textbf{\ and }$\protect\varepsilon ^{\ast }>0$]
\label{rem constant-sum acyclic}Taylor and Zwicker \hyperlink{cite.TZ95}{%
(1995}, Section 5) demonstrated that the acyclicity of $\succ _{C}$ in a $C$%
-complete game need not imply weightedness even in \emph{constant-sum}
games, but their smallest counterexample --- a constant sum extension of a
slivered $3\times 3$ magic square-game --- has $28$ players (and is roughly
weighted). Our sequence $\left\{ \widetilde{G}_{m}\right\} _{m\geq 5}$ is
convenient for showing that acyclicity of $\succ _{C}$ in a $C$-complete
game does not imply even rough weightedness, for any number of players, but
these games are not constant-sum.\footnote{%
Also, and critically, the constant-sum extension of any $\widetilde{G}_{m}$
has a $\succ _{C}$-cycle. That is because, by Taylor and Zwicker \hyperlink{cite.TZ95}%
{(1995}, Theorem 4.4(4)), $\succ _{C}$-acyclicity of the extension would
imply $\succ _{C}$-acyclicity of both $\widetilde{G}_{m}$ and its dual $%
\widetilde{G}_{m}^{\ast };$ however, $\widetilde{G}_{m}^{\ast }$ has a $%
\succ _{C}$-cycle. By choosing distinct $a_{1},a_{2}\in $ $N_{1}$, $%
b_{1},b_{2}$ $\in N_{2}$ and $d_{1},d_{2},d_{3}$ $\in N_{3},$ the following
is a $\succ _{C}$-cycle in $\widetilde{G}_{m}^{\ast }$: $\{a_{1},a_{2},b_{1}%
\}\ $\ $\succ _{C}\ \{a_{1},d_{1},d_{2},d_{3}\}\ $\ $\succ _{C}\ $\ $%
\{b_{1},b_{2},d_{1},d_{2}\}\ $\ $\succ _{C}\ \{a_{1},a_{2},b_{1}\}.$}
However, there is a smaller and constant-sum alternative to $\widetilde{G}%
_{m}$. Indeed, a $14$-player flip game with four player classes of sizes $%
\overline{n}=\left( 3,3,4,4\right) $, weights $51,30,13,3$ per player in the
respective classes, and $q=154,$ is constant-sum and $C$-complete with an
acyclic $\succ _{C}$ and $\varepsilon ^{\ast }=1/307>0.$ (The proof is
similar to that of Propositions \ref{Games with big theta} and \ref%
{acyclicity and non-weightedness}, and is sketched in Section \ref{sec
sketch} of the Appendix)$.$
\end{remark}

\section{Asymptotic $\protect\alpha $-rough weightedness}

\label{alpha-weightedness section}

The asymptotic result of Theorem \ref{thm:types} largely translates into the
alternative framework of $\alpha $-roughly weighted games of Freixas and
Kurz \hyperlink{cite.FK14}{(2014)}: diminishing values of the non-negative
part $\max \{\varepsilon ^{\ast },0\}$ of loss-win-excess threshold are
precisely equivalent to the fall of Freixas and Kurz's critical threshold $%
\mu $ to 1 (corresponding to rough weightedness) for \emph{proper} $C$%
-complete games, and to a large extent elsewhere. In what follows, we link
their framework to ours, and state the results.

Denote by $W\in \mathbb{R}_{+}^{n}$ any weight vector, keeping the lowercase
$w$ for normalized weight vectors in $\Delta \left( N\right) .$

\begin{definition}[$\protect\alpha $-\textbf{rough weightedness and
threshold }$\protect\mu $\textbf{; }\emph{Freixas and Kurz \hyperlink{cite.FK14}%
{(2014)}}]
\label{alpha definition}Given $\alpha \geq 1,$ a game $G=(N,\mathcal{W})$ is
$\alpha $-\emph{roughly weighted} if for some weight vector $W\in \mathbb{R}%
_{+}^{n}$ the following holds:%
\begin{equation}
A\in \mathcal{W}\ \Longrightarrow \ W(A)\ \geq 1,\qquad B\in \mathcal{L}\
\Longrightarrow \ W(B)\ \leq \alpha .  \label{alpha inequalities}
\end{equation}%
The \emph{critical threshold} $\mu (G)\geq 1$ is the least such $\alpha $.
\end{definition}

In effect, as stated in Hof et al. \hyperlink{cite.hot}{(2020)}, $\mu (G)$
is representable as the least value of the \emph{ratio} between the maximal
losing weight and the minimal winning weight:%
\begin{equation}
\mu (G)=\max \left\{ 1,\min_{W\in \mathbb{R}_{+}^{n}}\max_{A\in \mathcal{W}%
,B\in \mathcal{L}}\left( W(B)/W(A)\right) \right\}   \label{mu(G) original}
\end{equation}%
(with the ratio defined as $\infty $ whenever $W(A)=0$). In contrast, the
threshold $\varepsilon ^{\ast }\left( G\right) $ is the least value of the
\emph{difference} between $\max_{B\in \mathcal{L}}w(B)$ and $\min_{A\in
\mathcal{W}}w(A)$ (for a normalized $w\in \Delta (N)$). Thus, \emph{in
general}, the optimal weight vectors may be different (even after rescaling)
for $\mu (G)$ and $\varepsilon ^{\ast }\left( G\right) .$ However, for
proper games, convergence to rough weightedness is fully equivalent under
both regimes, and, for constant-sum games, the loss-win-excess and the
critical threshold are moreover functionally tied:

\begin{proposition}[\textbf{Relation between }$\protect\mu (G)$ \textbf{and}
$\protect\varepsilon ^{\ast }\left( G\right) $]
\label{relation between thresholds}For any game $G=(N,\mathcal{W}),$
\begin{equation}
1+\varepsilon ^{\ast }\left( G\right) \leq \ \mu (G).
\label{comparing epsilon * and mu1}
\end{equation}%
If $G$ is \emph{proper} and $\varepsilon ^{\ast }(G)\in \lbrack 0,1),$ then
\begin{equation}
\mu (G)\leq \left( 1+\varepsilon ^{\ast }(G)\right) /\left( 1-\varepsilon
^{\ast }(G)\right) .  \label{comparing epsilon * and mu}
\end{equation}%
If $G$ is also strong (and therefore \emph{constant-sum}) then (\ref%
{comparing epsilon * and mu}) holds with\emph{\ }equality.
\end{proposition}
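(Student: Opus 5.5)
Three claims are involved: the general inequality $1+\varepsilon^*\le\mu$, the upper bound for proper games, and equality in the constant-sum case. For each I would move between an optimal $W\in\mathbb{R}^n_+$ for $\mu$ and a normalized $w\in\Delta(N)$ for $\varepsilon^*$, using (\ref{epsilon **}) and (\ref{mu(G) original}).

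\emph{Lower bound.} If $\mu(G)=\infty$ there is nothing to prove, so let $W$ attain (or approach) the minimum in (\ref{mu(G) original}) with $\min_{A\in\mathcal{W}}W(A)\ge 1$ and $\max_{B\in\mathcal{L}}W(B)\le\mu$.
\begin{itemize}
\item Since $N\in\mathcal{W}$, $W(N)\ge 1$, and since $\emptyset\in\mathcal{L}$, $\mu\ge W(\emptyset)=0$.
\item Put $w=W/W(N)$. Then $\varepsilon^*\le(\mu-1)/W(N)$.
\item If $\mu\ge 1$, this gives $(\mu-1)/W(N)\le\mu-1$.
\end{itemize}
Hence $1+\varepsilon^*\le\mu$. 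The only care needed is the normalization when $\mu$ is attained only in the limit; the infimum is fine by compactness after scaling.

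\emph{Proper case.} Take $w^*\in\Delta(N)$ attaining $\varepsilon^*$, let $m=\min_{A\in\mathcal{W}}w^*(A)$, so $\max_{\mathcal{L}}w^*=m+\varepsilon^*$. The key step is to bound $m$ from below using properness.
\begin{itemize}
\item For $A\in\mathcal{W}$, properness gives $N\setminus A\in\mathcal{L}$, so $1-w^*(A)\le m+\varepsilon^*$.
\item Applying this to a minimizing winning $A$ gives $1-m\le m+\varepsilon^*$, i.e.\ $m\ge(1-\varepsilon^*)/2>0$ since $\varepsilon^*<1$.
\item Rescale $W=w^*/m$. Winning weights are then $\ge1$ and losing weights are $\le 1+\varepsilon^*/m$.
\item Since $\varepsilon^*\ge0$, $\varepsilon^*/m\le 2\varepsilon^*/(1-\varepsilon^*)$, giving $\mu\le(1+\varepsilon^*)/(1-\varepsilon^*)$. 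The $\max\{1,\cdot\}$ in (\ref{mu(G) original}) is harmless because the bound is $\ge1$.
\end{itemize}

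\emph{Constant-sum equality.} This is the step I expect to be the main obstacle: I need $\mu\ge(1+\varepsilon^*)/(1-\varepsilon^*)$. I would use (\ref{epsilon constant sum}), $\varepsilon^*=1-2M$ with $M=\max_w\min_{A\in\mathcal{W}}w(A)$.
\begin{itemize}
\item Given any admissible $W$ for $\mu$ (winning weights $\ge1$), let $w=W/W(N)$. Then $\min_{\mathcal{W}}w\ge 1/W(N)$, so $M\ge1/W(N)$.
\item By strongness, $\mathcal{L}=\{N\setminus A:A\in\mathcal{W}\}$. So $\max_{\mathcal{L}}W=W(N)-\min_{\mathcal{W}}W$.
\item Hence the ratio is $\ge W(N)-1\ge 1/M-1=(1-M)/M$.
\item Substituting $M=(1-\varepsilon^*)/2$ gives exactly $(1+\varepsilon^*)/(1-\varepsilon^*)$.
\end{itemize}
I need to check that the ratio $\max_{\mathcal{L}}W/\min_{\mathcal{W}}W$ is scale-invariant, so normalizing $\min_{\mathcal{W}}W=1$ is without loss. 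With that, combining with the proper-case upper bound yields equality.
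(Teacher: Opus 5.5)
Your proposal is correct and follows essentially the same route as the paper. For the lower bound you normalize a $\mu$-admissible $W$ and rescale it to $\Delta(N)$. In the proper case you rescale an $\varepsilon^{\ast}$-optimal $w^{\ast}$ by $\min_{A\in\mathcal{W}}w^{\ast}(A)\geq(1-\varepsilon^{\ast})/2$. In the constant-sum case you use $\max_{B\in\mathcal{L}}W(B)=W(N)-\min_{A\in\mathcal{W}}W(A)$ together with (\ref{epsilon constant sum}).

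Your use of $\mu\geq 1$ to get $(\mu-1)/W(N)\leq\mu-1$ neatly avoids the paper's case split on the sign of $\varepsilon^{\ast}$. The normalization $\min_{A\in\mathcal{W}}W(A)=1$ that you flag is exactly what the paper's $\widehat{\Delta}(G)$ does, and it is needed for your step that the ratio is at least $W(N)-1$.
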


In light of Proposition \ref{relation between thresholds}, Theorem \ref%
{thm:types} can be restated for proper games in terms of the limiting
behavior of the critical threshold $\mu (G),$ as follows.

\begin{corollary}[\textbf{Asymptotics expressed by }$\protect\mu $\textbf{\
for proper games}]
\label{asymptotics for critical threshold}Let $\left\{ G_{n}\right\} _{n\geq
1}$ be a sequence of $n$-player $C$-complete and \emph{proper} games with $%
\lim_{n\rightarrow \infty }n_{\min }$ $=\infty $ and $d_{n}=$ $\exp
(o(n_{\min })).$ Then%
\begin{equation*}
\lim_{n\rightarrow \infty }\mu \left( G_{n}\right) =1.
\end{equation*}%
Furthermore, if $d_{n}$ is bounded along the sequence then
\begin{equation*}
\mu \left( G_{n}\right) =1+\;O\!\left( 1/n_{\min }\right) .
\end{equation*}
\end{corollary}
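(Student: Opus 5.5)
The corollary should follow by sandwiching $\mu(G_n)$ between two expressions in $\varepsilon^{\ast}(G_n)$. The lower bound comes from inequality (\ref{comparing epsilon * and mu1}) of Proposition \ref{relation between thresholds}. The upper bound comes from (\ref{comparing epsilon * and mu}), which applies because each $G_n$ is proper. Theorem \ref{thm:types} then supplies the asymptotics of $\varepsilon^{\ast}(G_n)$.

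First, by Theorem \ref{thm:types}, the hypotheses $n_{\min}\to\infty$ and $d_n=\exp(o(n_{\min}))$ give $\varepsilon^{\ast}(G_n)\to 0$. To control $\mu$ from above we need $\varepsilon^{\ast}(G_n)\in[0,1)$, and only the sign can fail, since eventually $\varepsilon^{\ast}(G_n)<1$. We therefore split into two cases.

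If $\varepsilon^{\ast}(G_n)<0$, the game $G_n$ is weighted. Taking weights $W$ that are rescaled so that the minimal winning weight equals $1$ (this minimum is positive because the quota exceeds every losing weight, which is nonnegative), all losing weights are below $1$. Hence $\mu(G_n)=1$ directly from Definition \ref{alpha definition}, together with $\mu\ge 1$.

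If $\varepsilon^{\ast}(G_n)\in[0,1)$, then (\ref{comparing epsilon * and mu}) gives
\begin{equation*}
1\le\mu(G_n)\le\frac{1+\varepsilon^{\ast}(G_n)}{1-\varepsilon^{\ast}(G_n)}=1+\frac{2\varepsilon^{\ast}(G_n)}{1-\varepsilon^{\ast}(G_n)}.
\end{equation*}
Setting $\varepsilon_n^{+}=\max\{\varepsilon^{\ast}(G_n),0\}$, both cases combine into $1\le\mu(G_n)\le 1+2\varepsilon_n^{+}/(1-\varepsilon_n^{+})$ for all sufficiently large $n$. Letting $\varepsilon_n^{+}\to 0$ gives $\mu(G_n)\to 1$.

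For the rate, assume $d_n\le d$. Theorem \ref{thm:main} gives $\varepsilon_n^{+}\le(\log_2 d+2)/n_{\min}$, which is at most $1/2$ once $n_{\min}$ is large. Then $2\varepsilon_n^{+}/(1-\varepsilon_n^{+})\le 4\varepsilon_n^{+}=O(1/n_{\min})$, so $\mu(G_n)=1+O(1/n_{\min})$. The finitely many early terms do not affect the $O(\cdot)$ statement.

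The main potential obstacle is the negative-$\varepsilon^{\ast}$ case, because (\ref{comparing epsilon * and mu}) is stated only for $\varepsilon^{\ast}\in[0,1)$. It is resolved by the direct observation above that weightedness forces $\mu=1$. Everything else is immediate from the earlier results.
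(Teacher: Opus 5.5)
Your proposal is correct and follows the paper's route: the paper obtains the corollary by combining the bound $\mu(G)\le(1+\varepsilon^{\ast}(G))/(1-\varepsilon^{\ast}(G))$ for proper games from Proposition \ref{relation between thresholds} with the asymptotics of $\varepsilon^{\ast}$ in Theorem \ref{thm:types} (and Theorem \ref{thm:main} for the rate). You also treat the case $\varepsilon^{\ast}(G_n)<0$ explicitly, noting that weightedness forces $\mu(G_n)=1$; the paper leaves this case implicit, and your handling of it is correct.
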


When a $C$-complete game $G$ is \emph{not proper}, the asymptotic
weightedness of Theorem \ref{thm:types} may, in principle, fail to have the
corresponding expression in terms of $\alpha $-rough weightedness. Although (%
\ref{comparing epsilon * and mu1}) holds for all (not necessarily proper)
games $G$, and so $\mu (G)\rightarrow 1$ always implies $\max \{\varepsilon
^{\ast }(G),0\}\rightarrow 0$, the converse may be false. A case in point is
the sequence $\left\{ \widetilde{G}_{m}\right\} _{m\geq 5}$ in Definition %
\ref{Family G tilda}, in which the critical threshold $\mu (\widetilde{G}%
_{m})$ stays bounded away from $1$ even though $\varepsilon ^{\ast }\left(
\widetilde{G}_{m}\right) \rightarrow _{m\rightarrow \infty }0$:

\begin{proposition}[\textbf{Asymptotics may be missed by }$\protect\mu $%
\textbf{\ for improper games}]
\label{divergence of mu and epsilon*}For each $m\geq 5,$ $\mu (\widetilde{G}%
_{m})=60/59>1,$ whereas (by Proposition \ref{Games with big theta}) $%
\lim_{m\rightarrow \infty }\varepsilon ^{\ast }(\widetilde{G}_{m})=0.$
\end{proposition}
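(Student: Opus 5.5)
The plan is to prove the two inequalities $\mu(\widetilde{G}_m)\le 60/59$ and $\mu(\widetilde{G}_m)\ge 60/59$ separately. For the upper bound we use the defining integer weights $W$ themselves; for the lower bound we exhibit an explicit convex-combination certificate (essentially an LP dual) built from the extreme coalitions at weights $59$ and $60$. Throughout, $m\ge 5$ ensures that every profile used below lies in $\Lambda(\bar n)=\{0,\dots,m\}^3$.

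\emph{Upper bound.} We use the weight vector $W/59$, with $W$ as in Definition \ref{Family G tilda}. By Proposition \ref{Games with big theta}, $\widetilde{G}_m$ is a flip game with $q=60$. Hence every winning coalition has $W$-weight $59$ or at least $61$, so weight at least $59$. Every losing coalition has weight $60$ or at most $58$, so weight at most $60$. Therefore (\ref{alpha inequalities}) holds with $\alpha=60/59$, which gives $\mu(\widetilde{G}_m)\le 60/59$.

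\emph{Lower bound: the extreme profiles.} First I would enumerate the profiles at the two critical weights by solving $20x_1+15x_2+12x_3=59$ and $=60$ in nonnegative integers.
\begin{itemize}
\item For weight $59$: a residue check modulo $15$ rules out $x_1\in\{0,2\}$, since $12c\bmod 15\in\{0,3,6,9,12\}$ never equals $14$. Hence the only solution is $(1,1,2)$, a winning profile.
\item For weight $60$: the solutions are $(3,0,0)$, $(0,4,0)$ and $(0,0,5)$, and all three are losing profiles.
\end{itemize}

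\emph{Lower bound: the certificate.} Put $c=60/59$ and take
\[
\lambda=c\,(1/3,\,1/4,\,2/5)
\]
as mixing weights on these three losing profiles. Then $\sum\lambda = c\cdot 59/60=1$, so $\lambda$ is a probability vector, and
\[
\lambda_1(3,0,0)+\lambda_2(0,4,0)+\lambda_3(0,0,5)=c\,(1,1,2).
\]
Now fix any $W'\in\mathbb{R}_+^n$. By Proposition \ref{lem:orbits}, we may average $W'$ over the automorphisms preserving each class. Doing this does not increase the maximal losing weight and does not decrease the minimal winning weight, so the ratio in (\ref{mu(G) original}) does not increase. It therefore suffices to treat class-constant weights, say $\omega_t$ per player of class $t$. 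For such weights,
\[
\max_{B\in\mathcal L}W'(B)\;\ge\;\sum_k\lambda_k\,\langle \omega,\ \text{$k$-th losing profile}\rangle\;=\;c\,\langle\omega,(1,1,2)\rangle\;\ge\;c\min_{A\in\mathcal W}W'(A).
\]
This shows every ratio in (\ref{mu(G) original}) is at least $c$; the case $\min W'(A)=0$ gives ratio $\infty$ by convention and is harmless. Hence $\mu(\widetilde{G}_m)\ge 60/59$.

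Combined with the upper bound, this gives $\mu(\widetilde{G}_m)=60/59>1$. The limit $\varepsilon^\ast(\widetilde{G}_m)=1/(47m)\to 0$ is taken directly from Proposition \ref{Games with big theta}.

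The main obstacle is finding the certificate, that is, recognizing that the three "pure-class" losing profiles at weight $60$ combine exactly to a multiple of the unique winning profile at weight $59$. Once the profiles are enumerated, the identity $1/3+1/4+2/5=59/60$ makes the verification immediate.
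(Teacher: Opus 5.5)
Your proof is correct and is essentially the paper's argument. The upper bound comes from the rescaled weights $W/59$. The lower bound comes from averaging over the class-preserving automorphisms and then weighing the three weight-$60$ losing profiles $(3,0,0),(0,4,0),(0,0,5)$ against the winning profile $(1,1,2)$ with coefficients proportional to $1/3,1/4,2/5$; this is exactly the paper's chain $1\le \overline{W}(A_1)\le \frac{59}{60}\alpha$, written in convex-combination form.

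The full enumeration of weight-$59$ and weight-$60$ profiles is not needed; only the membership of these four profiles in $\overline{\mathcal{W}}$ and $\overline{\mathcal{L}}$ is used. Also, in your residue check the case $x_1=2$ requires excluding residue $4$ rather than $14$, which is equally absent from $\{0,3,6,9,12\}$.
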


Having pointed out that the limiting behavior of $\varepsilon ^{\ast }$ and $%
\mu $ may decouple, we should note that this is, in fact, a rather isolated
phenomenon. Specifically for $\widetilde{G}_{m}=(N,\mathcal{W})$ used in
Proposition \ref{divergence of mu and epsilon*}, the non-convergence of $\mu
$ is an artifact of $\min_{A\in \mathcal{W}}w^{\ast }(A)$ tending to zero
for any $w^{\ast }$ attaining the minimum in the expression (\ref{epsilon **}%
) for $\varepsilon ^{\ast }$; it is for this reason that $\max_{A\in
\mathcal{W},B\in \mathcal{L}}\frac{W(B)}{W(A)}$ in (\ref{mu(G) original})
happens to be bounded away from $1$ for all $m$. In words, $\widetilde{G}_{m}
$ becomes ``very'' improper when $m\rightarrow \infty :$ all coalitions
containing at least $6$ players are automatically winning regardless of the
total number of players in the game. Thus, with the optimal weighting being
normalized to $\Delta (N),$ the weight of all coalitions that matter in
determining $\mu (\widetilde{G}_{m})$ and $\varepsilon ^{\ast }\left(
\widetilde{G}_{m}\right) $ approaches zero. Outside this case, asymptotic
weightedness is fully reflected by the behavior of the critical threshold $%
\mu $:

\begin{proposition}[\textbf{When }$\protect\mu $\textbf{\ captures
asymptotic weightedness}]
\label{Corollary non proper} Let $\left\{ G_{n}\right\} _{n\geq 1}$ be a
sequence of $n$-player $C$-complete games with $\lim_{n\rightarrow \infty
}n_{\min }$ $=\infty $ and $d_{n}=$ $\exp (o(n_{\min })).$ For each $%
G_{n}=(N,\mathcal{W})$, pick $w^{\ast }$ attaining the minimum in the
expression (\ref{epsilon **}) for $\varepsilon ^{\ast }\left( G_{n}\right) $%
, and let $q_{n}^{\ast }=$ $\min_{A\in \mathcal{W}}w^{\ast }(A)$. If $\lim
\inf_{n\rightarrow \infty }q_{n}^{\ast }>0$ then%
\begin{equation*}
\lim_{n\rightarrow \infty }\mu \left( G_{n}\right) =1,
\end{equation*}%
and, if $d_{n}$ is uniformly bounded along the sequence, then%
\begin{equation*}
\mu \left( G_{n}\right) =1+\;O\!\left( 1/n_{\min }\right) .
\end{equation*}
\end{proposition}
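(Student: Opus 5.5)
The plan is to rescale the optimal normalized weight vector $w^{\ast}$ and then feed in the bound on $\varepsilon^{\ast}$ from Theorem \ref{thm:types}. Fix $n$ and write $q_{n}^{\ast}=\min_{A\in\mathcal{W}}w^{\ast}(A)$. By the choice of $w^{\ast}$ we have $\max_{B\in\mathcal{L}}w^{\ast}(B)=q_{n}^{\ast}+\varepsilon^{\ast}(G_{n})$.

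Since $\liminf q_{n}^{\ast}>0$, there are $c>0$ and $n_{0}$ with $q_{n}^{\ast}\geq c$ for all $n\geq n_{0}$. For such $n$, put $W=w^{\ast}/q_{n}^{\ast}\in\mathbb{R}_{+}^{n}$. Then every winning $A$ has $W(A)\geq1$, and every losing $B$ has
\begin{equation*}
W(B)\leq 1+\varepsilon^{\ast}(G_{n})/q_{n}^{\ast}.
\end{equation*}
By Definition \ref{alpha definition}, $G_{n}$ is therefore $\alpha$-roughly weighted with $\alpha=\max\{1,\,1+\varepsilon^{\ast}(G_{n})/q_{n}^{\ast}\}$. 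The outer maximum with $1$ handles the case $\varepsilon^{\ast}<0$, where the game is weighted and $\mu=1$. This yields
\begin{equation*}
1\leq\mu(G_{n})\leq 1+\max\{\varepsilon^{\ast}(G_{n}),0\}/c .
\end{equation*}

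It remains to plug in the bound on $\varepsilon^{\ast}$. Theorem \ref{thm:main}, in the form used in Theorem \ref{thm:types}, gives $\max\{\varepsilon^{\ast}(G_{n}),0\}\leq(\log_{2}d_{n}+2)/n_{\min}$. This tends to $0$ when $n_{\min}\to\infty$ and $d_{n}=\exp(o(n_{\min}))$, so $\mu(G_{n})\to1$. If $d_{n}\leq d$ uniformly, the same bound gives $\mu(G_{n})\leq1+(\log_{2}d+2)/(c\,n_{\min})$, which is $1+O(1/n_{\min})$.

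There is no real obstacle. The step that needs care is the division by $q_{n}^{\ast}$: the rescaling $W=w^{\ast}/q_{n}^{\ast}$ is only legitimate because $q_{n}^{\ast}$ is bounded away from zero, which is exactly the $\liminf$ hypothesis. The argument does not depend on which minimizer $w^{\ast}$ was picked. It also does not need the ratio representation (\ref{mu(G) original}), since Definition \ref{alpha definition} is used directly.
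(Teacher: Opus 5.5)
Your proof is correct and follows essentially the same route as the paper: rescale $w^{\ast}$ by $q_{n}^{\ast}$, which is bounded below for $n\geq n_{0}$, to get $1\leq \mu(G_{n})\leq 1+\max\{0,\varepsilon^{\ast}(G_{n})\}/q_{n}^{\ast}$, and then apply the upper bound of Theorem \ref{thm:main}. The only cosmetic difference is that you check $\alpha$-rough weightedness directly from Definition \ref{alpha definition}, whereas the paper passes the same rescaling through the representation of $\mu(G)$ over $\widehat{\Delta}(G)$ from the proof of Proposition \ref{relation between thresholds}.
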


\section{Approximate weightedness and vanishing misclassification}

\label{sec misclassification}

\subsection{General misclassification measure}

Given a game $G=(N,\mathcal{W}),$ under an $\varepsilon ^{\ast }=\varepsilon
^{\ast }(G)$-optimal weight system (consisting of a weight vector $w^{\ast
}\in \Delta (N)$ attaining the minimum in (\ref{epsilon **}) and a quota$\
q^{\ast }=$ $\min_{A\in \mathcal{W}}w^{\ast }(A)$), any coalition $A$ whose
winning status is mislabeled by the weights is within $\varepsilon ^{\ast }$
in terms of its weight to the right status determination, but that fact
itself does not in any way remedy the mislabeling of $A.$ We show here that
the smallness of $\varepsilon ^{\ast }$ is matched by the number of
mislabeled coalitions in the central case: when the number of classes is
bounded, the least achievable \emph{fraction of coalitions} misclassified by
a weight system tends to zero when $n_{\min }\rightarrow \infty $, just as
does the loss-win-excess threshold $\varepsilon ^{\ast }$ by Theorem \ref%
{thm:types}.

\begin{definition}[\textbf{The misclassification measure}]
\label{def misclassification}For $G=(N,\mathcal{W})$, let
\begin{equation}
\delta (G)=\ \left( 1/2^{n}\right) \min_{W\in \mathbb{R}_{+}^{N},\,q\in
\mathbb{R}}\left\vert \left\{ A\subseteq N:\ 
\begin{array}{c}
A\in \mathcal{W}\text{ and }W(A)<q, \\
\text{or }A\in \mathcal{L}\text{ and }W(A)\geq q%
\end{array}%
\right\} \right\vert   \label{deltag}
\end{equation}%
be the least fraction of coalitions mislabeled by a weight system, and call $%
\delta (G)$ the \emph{misclassification measure }of $G$.
\end{definition}

Equivalently, the misclassification measure $\delta (G)$ is also the least
probability with which a weight system misclassifies a random coalition $A$
chosen w.r.t. the uniform distribution on $2^{N}.$

\begin{remark}[\textbf{The outlier value}]
Our misclassification measure has an analogue in the recent literature,
which is the \emph{outlier value} of Albareda-Sambola et al.\ \hyperlink{cite.amr}%
{(2025}, Definition 4). Denoted by $\Psi _{M}(G)$ for a given $M>0,$ it is
defined as the minimum of $\sum_{A\subseteq N}\psi _{A}$ subject to
\begin{equation*}
\begin{array}{ll}
W(A)\geq q-M\psi _{A} & (\forall A\in \mathcal{W}), \\
W(B)\leq q-1+M\psi _{B} & (\forall B\in \mathcal{L}), \\
W_{i}\geq 0\ (\forall i\in N),\ q\geq 1, & \psi _{A}\in \{0,1\}\ (\forall
A\subseteq N).%
\end{array}%
\end{equation*}%
Thus, $\Psi _{M}$ counts \emph{outliers}, or coalitions $A$ with $\psi
_{A}=1 $ whose constraint is relaxed by $M$ units in order to make the rest
of the game weighted.

Our misclassification measure $\delta (G)$ is related to the \emph{normalized%
} outlier value $\Psi _{M}(G)/2^{n},$ i.e., the minimal fraction of
coalitions that must be declared outliers in order to correctly weight the
remaining coalitions. It is easy to see that for any game $G$, $\delta
(G)=\Psi _{M}(G)/2^{n}$ whenever $M$ is sufficiently large. Indeed, on the
one hand, any feasible solution for the outlier value problem counts the
mislabeled coalitions, and so $\delta (G)\leq \Psi _{M}(G)/2^{n}.$ On the
other hand, any weight system $W,q$ attaining the minimum in (\ref{deltag})
must have $q>0,$\footnote{%
If $q\leq 0$, every losing coalition is misclassified. But a weight system
that gives every player weight $1$, with $q=1/2,$ misclassifies only
non-empty losing coalitions, and so $q\leq 0$ is never optimal.} and so the
finitely many correctly classified losing coalitions have a positive minimum
slack, which can be scaled to at least one. That would make the weight
system feasible for the outlier value problem, for all sufficiently large $%
M, $ yielding $\delta (G)\geq \Psi _{M}(G)/2^{n}.$
\end{remark}

\begin{proposition}[\textbf{Upper bound on} $\protect\delta $]
\label{prop:misclass}For any $C$-complete game $G$,
\begin{equation}
\delta (G)\leq \left( \sqrt{2d/\pi }(\log _{2}d+2)+2C_{0}\right) /\sqrt{%
n_{\min }},  \label{delta bounded by epsilon*}
\end{equation}%
where $C_{0}\leq 0.56$ is the Berry--Esseen constant.
\end{proposition}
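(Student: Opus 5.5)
We will use the $\varepsilon^{\ast}$-optimal weight system to classify coalitions, and show that every misclassified coalition lies in a narrow weight band. By Proposition \ref{lem:minimax}(i), pick $w^{\ast}\in\Delta(N)$ attaining the minimum in (\ref{epsilon **}) and constant on classes, say $w^{\ast}_i=c_t$ for $i\in N_t$. Set $q^{\ast}=\min_{A\in\mathcal W}w^{\ast}(A)$. By definition of $\varepsilon^{\ast}$, every losing coalition has $w^{\ast}(B)\le q^{\ast}+\varepsilon^{\ast}$. Every winning coalition has $w^{\ast}(A)\ge q^{\ast}$. So with quota $q=q^{\ast}$, the only mislabeled coalitions are losing ones with $w^{\ast}(B)\in[q^{\ast},q^{\ast}+\varepsilon^{\ast}]$. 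Hence
\[
\delta(G)\le \Pr\bigl(w^{\ast}(A)\in[q^{\ast},q^{\ast}+\varepsilon^{\ast}]\bigr)
\]
for a uniformly random coalition $A$. If $\varepsilon^{\ast}<0$ the bound is trivial. Otherwise Theorem \ref{thm:main} gives $\varepsilon^{\ast}\le(\log_2 d+2)/n_{\min}$.

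The remaining step is an anticoncentration estimate for $S=w^{\ast}(A)=\sum_t c_t X_t$, where the $X_t\sim\mathrm{Bin}(n_t,1/2)$ are independent. We need a bound on $\Pr(S\in I)$ for an interval $I$ of length $\ell=\varepsilon^{\ast}$. Select the class $t$ with the largest $c_t n_t$; since $\sum_t c_tn_t=1$, we have $c_tn_t\ge 1/d$. Condition on all the other $X_s$. The event then becomes $X_t\in J$ for an interval $J$ of length $\ell/c_t$. By Berry--Esseen applied to the binomial $X_t$ (standard deviation $\sqrt{n_t}/2$), this probability is at most
\[
\frac{\ell/c_t}{\sqrt{2\pi}\,\sqrt{n_t}/2}+\frac{2C_0}{\sqrt{n_t}}.
\]
The first term bounds the normal mass of the interval, using density at most $1/(\sigma\sqrt{2\pi})$. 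The second is twice the Kolmogorov error $C_0\rho/\sigma^{3}\le C_0/\sqrt{n_t}$, since for Bernoulli$(1/2)$ summands $\rho/\sigma^{3}=1$. A small endpoint-closure issue (we need a closed interval) is handled by right-continuity or by slightly enlarging the interval.

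Now substitute $1/c_t\le dn_t$ and $\ell\le(\log_2 d+2)/n_{\min}$. The first term becomes at most $\sqrt{2/\pi}\,d\sqrt{n_t}(\log_2 d+2)/n_{\min}$. This has the wrong dependence, $d\sqrt{n_t}$ rather than $\sqrt{d}\sqrt{n_{\min}}$, so the naive choice of class must be refined. That refinement is the main obstacle.

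Instead, I would apply the conditioning argument to the whole sum using a multivariate-type variance bound. We have $\mathrm{Var}(S)=\sum_t c_t^2n_t/4\ge \bigl(\sum_t c_tn_t\bigr)^2/(4\sum_t n_t)$, but this still involves $n$. Better is to use the class structure: choose the class maximizing $c_t\sqrt{n_t}$. Cauchy--Schwarz gives $1=\sum_t c_tn_t\le\sqrt{d}\,\max_t(c_t\sqrt{n_t})\cdot\max_t\sqrt{n_t}$, which is again not right. So I would instead choose weights more cleverly. The max over losing coalitions minus min over winning coalitions is unaffected by the choice among optimal $w^{\ast}$, so we could use the bound $\varepsilon^{\ast}n_t\le$ const directly: the proof of Theorem \ref{thm:main} really gives the stronger statement that the profile gap is below $2\gamma$ coordinatewise in $X_t$ units.

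Concretely, I would rework the argument in profile space. Misclassified coalitions have profiles $s$ with $\sum_t c_t s_t$ in an interval of length $\varepsilon^{\ast}$. Pick $t$ with $c_t\sqrt{n_t}\cdot\sqrt d\ge\bigl(\sum_s c_s^2 n_s\bigr)^{1/2}$, and measure interval length relative to $\sigma_t=c_t\sqrt{n_t}/2$. One needs $\varepsilon^{\ast}/c_t\lesssim 2\gamma$. This should follow from the minimax structure of Proposition \ref{lem:minimax}(ii), where optimality forces $\varepsilon^{\ast}\le 2\gamma\, c_t$ in effect. Establishing that per-class weight inequality is where I expect the real work.

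Granting it, the interval for $X_t$ has length at most $2\gamma\le\log_2 d+2$. The resulting bound is $\sqrt{2d/\pi}\,(\log_2 d+2)/\sqrt{n_{\min}}+2C_0/\sqrt{n_{\min}}$, with the $\sqrt d$ coming from the choice of $t$. This matches (\ref{delta bounded by epsilon*}).
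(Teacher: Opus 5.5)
Your proposal has a genuine gap, and it sits exactly where you flag it. Everything rests on the per-class inequality $\varepsilon^{\ast}\le 2\gamma c_t$, i.e.\ a band of width at most $2\gamma$ in $X_t$-units. You do not prove it, and it does not follow from Proposition~\ref{lem:minimax}(ii) or from the argument of Theorem~\ref{thm:main}. That argument only rules out a gap $b-a\ge 2\gamma\mathbf{1}$ holding in \emph{all} coordinates at once. Applied at the saddle point, where $b^{\ast}-a^{\ast}\ge\varepsilon^{\ast}\bar{n}$, it controls $\varepsilon^{\ast}$ relative to $1/n_{\min}$, not relative to the per-player weight $c_t$. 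Nothing excludes, say, equal class sizes with equal class totals. There $c_t=1/(dn_{\min})$, and your inequality would demand $\varepsilon^{\ast}\le 2\gamma/(dn_{\min})$, a factor $d$ beyond what Theorem~\ref{thm:main} gives. The bookkeeping is also off. Your conditional Berry--Esseen computation involves only $\varepsilon^{\ast}/c_t$ and $\sigma_t=\sqrt{n_t}/2$. So if the inequality held, it would give $(\sqrt{2/\pi}(\log_2 d+2)+2C_0)/\sqrt{n_{\min}}$ with no $\sqrt{d}$ at all, and the $\sqrt{d}$ cannot ``come from the choice of $t$''.

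The missing idea is to change the normalization, not the class. Take class-constant weights whose per-player values $c_1,\dots,c_d$ sum to $1$, instead of $w\in\Delta(N)$. The proof of Proposition~\ref{lem:minimax}(ii) then gives $\tau(G):=\min_{c\in\Delta(d)}\max_{A\in\mathcal{W},B\in\mathcal{L}}[W(B)-W(A)]=\max\{\min_t(b_t-a_t):a\in\operatorname{conv}\overline{\mathcal{W}},\ b\in\operatorname{conv}\overline{\mathcal{L}}\}$. Running the conclusion of the proof of Theorem~\ref{thm:main} with $2\gamma$ in place of $2\gamma/n_{\min}$ yields $\tau(G)\le\log_2 d+2$. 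This is the correct form of your remark that the proof works ``in profile units''. With the optimal $W^{\ast}$ and $q^{\ast}=\min_{A\in\mathcal{W}}W^{\ast}(A)$, every misclassified coalition satisfies $W^{\ast}(A)\in[q^{\ast},q^{\ast}+\max\{\tau(G),0\}]$.

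Then apply Berry--Esseen for independent, non-identically distributed summands to the \emph{whole} sum $W^{\ast}(\mathbf{A})$. An interval of length $\lambda$ then has probability at most $\sqrt{2/\pi}\,\lambda/\|W^{\ast}\|_2+2C_0\max_iW^{\ast}_i/\|W^{\ast}\|_2$. Cauchy--Schwarz with $\sum_tc_t=1$ gives $\|W^{\ast}\|_2^2=\sum_tn_tc_t^2\ge n_{\min}/d$; this is where $\sqrt{d}$ genuinely enters. Also $\|W^{\ast}\|_2^2\ge n_{\min}(\max_tc_t)^2$, which makes the error term at most $2C_0/\sqrt{n_{\min}}$. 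Together these give (\ref{delta bounded by epsilon*}).

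Conditioning on a single class with $c_t\ge 1/d$ would only give a band of width $\tau(G)/c_t\le d(\log_2d+2)$ in $X_t$-units. That yields a bound with $d$ in place of $\sqrt{d}$: correct, and enough for bounded $d$, but weaker than the stated inequality. Keeping the $\Delta(N)$-normalized $\varepsilon^{\ast}$-optimal weights, as you do, ties the available band width $2\gamma/n_{\min}$ to the wrong scale. That is why the paper gets only a $\sqrt{n}/n_{\min}$-type bound for $\delta^{\ast}$ in Proposition~\ref{misclassification for epsilon*}.
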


Given (\ref{delta bounded by epsilon*}), it is an immediate corollary that $%
\delta $ becomes vanishingly small whenever the growth in the number of
player classes is sufficiently slow compared to $n_{\min }$:

\begin{corollary}[\textbf{Asymptotic smallness of }$\protect\delta $]
\label{cor when delta g is small}Let $\left\{ G_{n}\right\} _{n\geq 1}$ be a
sequence of $n$-player $C$-complete games in which $\lim_{n\rightarrow
\infty }n_{\min }=\infty $ and the number of player classes $d_{n}$
satisfies $(\log _{2}d_{n}+2)\sqrt{d_{n}}=o(\sqrt{n_{\min }})$. Then%
\begin{equation}
\lim_{n\rightarrow \infty }\delta \left( G_{n}\right) =0.
\label{deltaG_n goes to 0}
\end{equation}%
If the number of player classes $d_{n}$ is \emph{bounded}, then%
\begin{equation*}
\delta (G_{n})=O(1/\sqrt{n_{\min }}).
\end{equation*}
\end{corollary}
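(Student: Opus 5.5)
The plan is to deduce both claims directly from Proposition \ref{prop:misclass}, applied to each game $G_{n}$ in the sequence; no further structural argument about $C$-complete games is needed. Since every $G_{n}$ is $C$-complete, (\ref{delta bounded by epsilon*}) gives, for every $n$,
\begin{equation*}
0\leq \delta (G_{n})\leq \sqrt{2/\pi }\,\frac{(\log _{2}d_{n}+2)\sqrt{d_{n}}}{\sqrt{n_{\min }}}+\frac{2C_{0}}{\sqrt{n_{\min }}}.
\end{equation*}
The lower bound $\delta (G_{n})\geq 0$ is trivial, because $\delta $ is a normalized count.

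For (\ref{deltaG_n goes to 0}), I will treat the two summands on the right separately. The hypothesis $(\log _{2}d_{n}+2)\sqrt{d_{n}}=o(\sqrt{n_{\min }})$ means exactly that the first summand tends to $0$. The second summand is at most $1.12/\sqrt{n_{\min }}$, since $C_{0}\leq 0.56$, and it tends to $0$ because $n_{\min }\rightarrow \infty $. Squeezing between $0$ and the right-hand side then gives $\delta (G_{n})\rightarrow 0$.

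For the bounded case, suppose $d_{n}\leq d$ for all $n$. The map $x\mapsto (\log _{2}x+2)\sqrt{x}$ is increasing on $[1,\infty )$, so the right-hand side is at most $K/\sqrt{n_{\min }}$ with the constant
\begin{equation*}
K=\sqrt{2d/\pi }(\log _{2}d+2)+2C_{0},
\end{equation*}
which does not depend on $n$. This gives $\delta (G_{n})=O(1/\sqrt{n_{\min }})$. In this case the hypothesis of the first part holds automatically, so the two statements are consistent.

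I do not expect any real obstacle: the whole content is carried by Proposition \ref{prop:misclass}. The only point to check is that the constant in that bound is uniform, depending only on $d$ and not on the particular game or on $n$. This is clear from its explicit form.
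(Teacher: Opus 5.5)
Your proposal is correct and follows the paper's argument, which presents this corollary as an immediate consequence of the bound in Proposition \ref{prop:misclass}. You split that bound into the term handled by the hypothesis $(\log_2 d_n+2)\sqrt{d_n}=o(\sqrt{n_{\min}})$ and the term $2C_0/\sqrt{n_{\min}}$, and in the bounded case you use monotonicity in $d$ to get a uniform constant, which is all that is needed.
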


Accordingly, with a bounded number of player classes, $\varepsilon ^{\ast
}=O(1/n_{\min })$ and $\delta =O(1/\sqrt{n_{\min }})$ by Theorem \ref%
{thm:types} and Corollary \ref{cor when delta g is small}, but $%
\lim_{n\rightarrow \infty }\delta \left( G_{n}\right) =$ $\lim_{n\rightarrow
\infty }\varepsilon ^{\ast }\left( G_{n}\right) =0$ is also obtained when
the growth in the number of classes is sufficiently slow. Thus, the
``separation'' and ``misclassification'' based measures of the proximity to
weightedness are well aligned in those central cases.

\subsection{Misclassification with $\protect\varepsilon ^{\ast }$-attaining
weights}

It is important to note that Definition \ref{def misclassification} does not
in any way restrict the set of weight and quota combinations $W\in \mathbb{R}%
_{+}^{N},\,q\in \mathbb{R}$ over which the minimal mislabeled fraction is
taken in (\ref{deltag}). In particular, the weight vector $W^{\ast }$ at
which $\delta (G)$ in (\ref{deltag}) is attained may well be different (even
after normalization) from $w^{\ast }\in \Delta (N)$ attaining $\varepsilon
^{\ast }(G)$ in (\ref{epsilon **}). That is, if%
\begin{equation*}
\Delta ^{\ast }(G)=\arg \min_{w\in \Delta (N)}\left( \max_{B\in \mathcal{L}%
}w(B)-\min_{A\in \mathcal{W}}w(A)\right) ,
\end{equation*}%
and%
\begin{equation*}
\delta ^{\ast }(G)=\left( 1/2^{n}\right) \min_{w^{\ast }\in \Delta ^{\ast
}(G),\,q\in \mathbb{R}}\left\vert \left\{ A\subseteq N:\ 
\begin{array}{c}
A\in \mathcal{W}\text{ and }w^{\ast }(A)<q, \\
\text{or }A\in \mathcal{L}\text{ and }w^{\ast }(A)\geq q%
\end{array}%
\right\} \right\vert
\end{equation*}%
is the minimal mislabeled fraction under $w^{\ast }\in \Delta ^{\ast }(G)$
--- a weight vector producing the least weight difference between the losing
and the winning coalitions --- then $\delta ^{\ast }(G)$ may be larger than $%
\delta (G).$ Because we view $\varepsilon ^{\ast }$ as the measure of
proximity to weightedness, it would be comforting to know that, at least in
the central case, $\varepsilon ^{\ast }\rightarrow 0$ is accompanied not
only by $\delta \rightarrow 0$ (namely, vanishing minimal attainable
fraction mislabeled by weights) but by a stronger feature $\delta ^{\ast
}\rightarrow 0$ (namely, vanishing fraction that is mislabeled specifically
by $\varepsilon ^{\ast }$-attaining weights).

Conveniently, there is a version of (\ref{delta bounded by epsilon*}) that
provides a (weaker) bound on $\delta ^{\ast }(G):$

\begin{proposition}[\textbf{Upper bound on} $\protect\delta ^{\ast }$]
\label{misclassification for epsilon*}For any $C$-complete $n$-player game $%
G,$%
\begin{equation}
\delta ^{\ast }(G)\leq \sqrt{2/\pi }\,(\log _{2}d+2)\frac{\sqrt{n}}{n_{\min }%
}+\frac{2C_{0}}{\sqrt{n_{\min }}}.  \label{upper bound delta(G)}
\end{equation}
\end{proposition}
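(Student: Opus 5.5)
Fix an optimal $w^{\ast}\in\Delta^{\ast}(G)$. By Proposition~\ref{lem:minimax}(i) I would first try to take $w^{\ast}$ constant on each class, $w^{\ast}_i=c_t$ for $i\in N_t$, so that $w^{\ast}(A)=\sum_t c_t s_t(A)$. This choice needs care, because $\delta^{\ast}$ minimizes over \emph{all} of $\Delta^{\ast}(G)$. If an arbitrary optimizer is allowed, I would symmetrize it by averaging over $\Gamma(G)$ via Proposition~\ref{lem:orbits}. The objective in (\ref{epsilon **}) is convex in $w$ and invariant under automorphisms, so the average is again optimal and constant on classes. Since we only need an upper bound on a minimum, exhibiting this one optimizer suffices.

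Set $q^{\ast}=\min_{A\in\mathcal{W}}w^{\ast}(A)$. Every $A\in\mathcal{W}$ then has $w^{\ast}(A)\ge q^{\ast}$, and every $B\in\mathcal{L}$ has $w^{\ast}(B)\le q^{\ast}+\varepsilon^{\ast}$. I would use the quota $q=q^{\ast}$. Winning coalitions are then never mislabeled, and a losing coalition is mislabeled only if $w^{\ast}(B)\in[q^{\ast},q^{\ast}+\varepsilon^{\ast}]$. Hence
\begin{equation*}
\delta^{\ast}(G)\le \Pr\bigl(w^{\ast}(A)\in[q^{\ast},q^{\ast}+\max\{\varepsilon^{\ast},0\}]\bigr)
\end{equation*}
for a uniformly random $A\subseteq N$, i.e.\ each player included independently with probability $1/2$. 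By Theorem~\ref{thm:main}, the interval has length at most $L=(\log_2 d+2)/n_{\min}$.

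Next I would bound the probability that the random variable $X=\sum_i w^{\ast}_i\xi_i$, with $\xi_i$ i.i.d.\ Bernoulli$(1/2)$, falls in an interval of length $L$. Its variance is $\sigma^2=\frac14\sum_i (w^{\ast}_i)^2$. Berry--Esseen for non-identically distributed summands gives
\begin{equation*}
\Pr(X\in I)\le \frac{L}{\sigma\sqrt{2\pi}}+2C_0\frac{\sum_i E|w^{\ast}_i(\xi_i-\tfrac12)|^3}{\sigma^3}.
\end{equation*}
The first term is the normal density bound times the length, and the second is the two-sided Kolmogorov error.

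For the variance, $\sum_i (w^{\ast}_i)^2\ge 1/n$ by Cauchy--Schwarz since $\sum_i w^{\ast}_i=1$. So $\sigma\ge 1/(2\sqrt n)$, and the first term is at most $L\cdot 2\sqrt n/\sqrt{2\pi}=\sqrt{2/\pi}\,(\log_2 d+2)\sqrt n/n_{\min}$. This matches the first term of (\ref{upper bound delta(G)}).

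For the error term, $E|w_i(\xi_i-\frac12)|^3=w_i^3/8$, so the ratio equals $\sum_i w_i^3/(\sum_i w_i^2)^{3/2}$. With $w^{\ast}$ class-constant this is $\sum_t n_t c_t^3/(\sum_t n_t c_t^2)^{3/2}$. Because $n_t\ge n_{\min}$, we have $n_t c_t^3\le (n_t c_t^2)^{3/2}/\sqrt{n_{\min}}$. Summing and using $\sum_t a_t^{3/2}\le(\sum_t a_t)^{3/2}$ bounds the ratio by $1/\sqrt{n_{\min}}$, which yields $2C_0/\sqrt{n_{\min}}$.

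The main obstacle is precisely this Lyapunov ratio: it is where class-constancy of $w^{\ast}$ is essential, since an unsymmetrized optimizer could concentrate weight on few players. I would therefore make sure the symmetrization argument is airtight. I would also check that a degenerate case such as $w^{\ast}$ supported on one class causes no trouble; it does not, since that class still has at least $n_{\min}$ equal weights.
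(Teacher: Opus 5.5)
Your proposal is correct and follows the paper's proof essentially step for step. You take a class-constant optimizer from Proposition~\ref{lem:minimax}(i) and set the quota $q^{\ast}=\min_{A\in\mathcal{W}}w^{\ast}(A)$, so that only losing coalitions with weight in $[q^{\ast},q^{\ast}+\max\{\varepsilon^{\ast},0\}]$ can be mislabeled. You then apply Berry--Esseen with $\|w^{\ast}\|_2\ge 1/\sqrt{n}$ from Cauchy--Schwarz, and use the bound of Theorem~\ref{thm:main} on $\varepsilon^{\ast}$. The only cosmetic difference is that you bound the Lyapunov ratio $\sum_i (w^{\ast}_i)^3/(\sum_i (w^{\ast}_i)^2)^{3/2}$ directly via superadditivity of $x\mapsto x^{3/2}$, whereas the paper first bounds it by $\rho=\max_i w^{\ast}_i/\|w^{\ast}\|_2$ and then shows $\rho\le 1/\sqrt{n_{\min}}$; both yield the same $2C_0/\sqrt{n_{\min}}$ term.
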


Thus, for a sequence $\left\{ G_{n}\right\} _{n\geq 1}$ of $n$-player $C$%
-complete games in which $\lim_{n\rightarrow \infty }n_{\min }=\infty $ and $%
d_{n}$ satisfies $(\log _{2}d_{n}+2)\sqrt{n}$ $=o(n_{\min })$, Proposition %
\ref{misclassification for epsilon*} implies that
\begin{equation*}
\lim_{n\rightarrow \infty }\delta ^{\ast }\left( G_{n}\right) =0.
\end{equation*}%
In particular, if $n=O(n_{\min })$ (and, by implication, $d_{n}$ is \emph{%
bounded}) then%
\begin{equation*}
\delta ^{\ast }(G_{n})=O(1/\sqrt{n_{\min }}).
\end{equation*}%
The latter is precisely the case when each player class contains at least
some positive fixed fraction of the total electorate. Therefore, in this
central scenario, we have both $\varepsilon ^{\ast }=O(1/n_{\min })$ (by
Theorem \ref{thm:types}) and $\delta ^{\ast }=O(1/\sqrt{n_{\min }}).$

\section{Concluding remarks}

\label{concluding remarks}

$C$-completeness is by itself too weak to force weightedness of a finite
simple game; what has been shown here is that it forces \emph{approximate}
weightedness when the minimal size $n_{\min }$ of a class of equally
desirable players is large. The loss--win-excess threshold $\varepsilon
^{\ast }$ of any $C$-complete game with $d$ classes --- such that the game
is $\varepsilon $-weighted iff $\varepsilon \geq \varepsilon ^{\ast }$ ---
is shown to have the upper bound of $\left( \log _{2}d+2\right) /n_{\min },$
which involves neither the number of players nor the size of the largest
class. Nothing other than the completeness of $\succeq _{C}$ is needed to
establish that bound; in particular, the strict part $\succ _{C}$ is not
required to be acyclic, which is a necessary feature in fully weighted games.

The threshold $\varepsilon ^{\ast }$ tends to $0$ along any sequence with $%
n_{\min }\rightarrow \infty $ in which the number of classes grows
subexponentially in $n_{\min }$. Also, $\varepsilon ^{\ast }=O(1/n_{\min })$
--- and the bound is tight --- when the number of classes is bounded, e.g.,
when each class holds at least some fixed fraction of the electorate. The
misclassification measure $\delta $, the minimal fraction of coalitions
mislabeled by a weighted majority representation, reflects the fall of $%
\varepsilon ^{\ast }$ to $0$ --- when $n_{\min }\rightarrow \infty $ and the
number of classes is bounded --- by being $O(1/\sqrt{n_{\min }});$ when each
class holds at least a fixed fraction of the electorate, the coarser $\delta
^{\ast }$ that minimizes the fraction that is mislabeled specifically by $%
\varepsilon ^{\ast }$-attaining weights is also $O(1/\sqrt{n_{\min }}).$

We conclude with some questions that are left open and seem worthy of
further investigation.

\subsection{Is full weightedness ever obtainable in the limit?}

Theorem~\ref{thm:types} delivers $\lim_{n_{\min }\rightarrow \infty
}\varepsilon ^{\ast }=0$, but that is asymptotic \emph{rough} weightedness.
The theorem says nothing about the sign of $\varepsilon ^{\ast }$ as it
approaches the limit, and so $\varepsilon ^{\ast }$ may be positive along an
entire sequence of games. The latter is the case with the games $\left\{
\widetilde{G}_{m}\right\} _{m\geq 5}$ in Section \ref{sec asymptotic
tightness}, where under any possible weight system there would be at least
\emph{some} coalitions whose weight attributes to them a wrong winning
status. Is there a natural strengthening of the large-class assumption that
would ensure that $\varepsilon ^{\ast }$ permanently drops below $0$
starting from some player-class size, so that the games in the sequence
eventually become weighted in the usual strict sense?

\subsection{Does $\protect\varepsilon ^{\ast }$ converge to zero faster in
constant-sum games?}

For the sequence $\left\{ \widetilde{G}_{m}\right\} _{m\geq 5}$ of flip
games in Section \ref{sec asymptotic tightness}, $\varepsilon ^{\ast }$ was
found to be proportional to $1/m,$ thereby showing that the asymptotic
estimate $\varepsilon ^{\ast }=O(1/n_{\min })$ cannot be improved in
general. However, no similar claim can be made for a sequence of \emph{%
constant-sum} flip games. Indeed, it can be shown that $\max \{\varepsilon
^{\ast },0\}=$ $O(1/n_{\min }^{2})$ for such games:

\begin{proposition}[\textbf{Bound on }$\protect\varepsilon ^{\ast }$ \textbf{%
for constant-sum flip games}]
\label{O(1/n^2)}Assume that $G$ is a \emph{flip game} with a weight vector $%
W\in \mathbb{N}^{N}$ that is constant on each player class, and $q=\left(
W(N)+1\right) /2\in \mathbb{N}$. Then $G$ is constant-sum, $C$-complete,
and, irrespective of $d,$
\begin{equation*}
\varepsilon ^{\ast }(G)\leq 6/n_{\min }^{2}.
\end{equation*}
\end{proposition}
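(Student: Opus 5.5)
The plan is to verify the two structural claims directly from Definition \ref{def punctured games} and Proposition \ref{prop gappoed weighted games}. The quantitative bound will then come from showing that the gap condition (iv) forces the total weight $W(N)$ to be at least of order $n_{\min}^2$.

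\emph{Constant-sum.} Since $q=(W(N)+1)/2$, we have $W(N\setminus A)=2q-1-W(A)$. I will check the four cases of Definition \ref{def punctured games}(ii).
\begin{itemize}
\item[(a)] If $W(A)=q-1$ (winning), the complement weighs $q$ (losing).
\item[(b)] If $W(A)\ge q+1$ (winning), the complement weighs at most $q-2$ (losing).
\item[(c)] If $W(A)=q$ (losing), the complement weighs $q-1$ (winning).
\item[(d)] If $W(A)\le q-2$ (losing), the complement weighs at least $q+1$ (winning).
\end{itemize}
So $G$ is proper and strong. $C$-completeness is given directly by Proposition \ref{prop gappoed weighted games}, which also yields $\varepsilon^{\ast}(G)\le 1/W(N)$. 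The whole task therefore reduces to the purely arithmetic claim
\[
W(N)\;\ge\; n_{\min}^2/6 .
\]

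\emph{Reduction to a representability statement.} Let $W_1,\dots,W_d\ge 2$ be the per-class weights and $M=\max_t W_t$. Coalition weights are exactly the sums $\sum_t k_tW_t$ with $0\le k_t\le n_t$. By (iii) both $q-1$ and $q$ are such sums. Hence $\gcd(W_1,\dots,W_d)$ divides $1$, i.e., the class weights are coprime.

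The key lemma I will aim for is: if the class weights are coprime and bounded by $M$, and $n_{\min}$ is large compared with $M$ (something like $n_{\min}> cM$ for a small constant $c$, perhaps $c\le 6$), then every integer in a window around $W(N)/2$ containing $q-2,\dots,q+1$ is a representable sum. That would contradict (iv), forcing $n_{\min}\le cM$. Then
\[
W(N)\ge n_{\min}\cdot M\ \ge\ n_{\min}^2/c ,
\]
taking just one class with weight $M$ and size at least $n_{\min}$. With $c=6$ this gives exactly $\varepsilon^{\ast}(G)\le 1/W(N)\le 6/n_{\min}^2$.

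\emph{Proving the lemma, and the main obstacle.} Rather than a general Frobenius-type argument, I will work locally, starting from the existing coalitions of weights $q-1$ and $q$. Their profiles $x,y$ lie near the middle of $\Lambda(\bar n)$, since their weight is about $W(N)/2$; this gives many players both inside and outside in the heavy classes. I will then perform bounded exchanges: add $\alpha$ players of one class and remove $\beta$ players of another, or chains of such moves. The aim is to shift the weight by $+1$ from $q$, or by $-1$ from $q-1$. Such shifts exist integrally by coprimality (B\'ezout), with coefficients bounded by roughly $M$. They stay feasible as long as each class has enough slack on both sides, which is where a condition $n_{\min}\gtrsim M$ enters.

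The main obstacle is exactly this step. First, I must guarantee that the middle profiles have slack of order $n_{\min}$ in the relevant classes, not just somewhere. Second, I must control the B\'ezout coefficients with $d$ possibly large; the bound must be independent of $d$, so I would first try to use only two or three classes, e.g.\ a heaviest class and a class supplying a coprime residue. Third, I must track the constants precisely enough to land on $6$. If the direct exchange argument gives a worse constant, I would fall back on splitting into the case where some $W_t\ge n_{\min}/6$, in which $W(N)\ge n_{\min}^2/6$ is immediate, and the case where all $W_t<n_{\min}/6$, where the exchange room is ample and (iv) is violated.
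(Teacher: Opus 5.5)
Your structural steps are correct and agree with the paper. The complement computation gives constant-sum. Proposition \ref{prop gappoed weighted games} gives $C$-completeness and $\varepsilon^{\ast}(G)\le 1/W(N)$. Since $W(N)\ge n_{\min}W_{\max}$, everything reduces to showing that $n_{\min}\ge 6W_{\max}$ contradicts Definition \ref{def punctured games}(iv).

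That contradiction is the entire content of the proposition, and you do not prove it. Your heuristic for it is false as stated. A profile of weight $q-1\approx W(N)/2$ need not be near the middle of $\Lambda(\bar n)$ in any coordinate; it can be full in some classes and empty in others, so there is no automatic slack. The B\'ezout route has two further requirements, neither established: slack in \emph{every} class that carries a nonzero coefficient, and a coefficient bound independent of $d$. Your fallback (``if all $W_t<n_{\min}/6$, the exchange room is ample and (iv) is violated'') only restates the claim.

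Two ingredients are missing.

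\textbf{Rebalancing.} Let $y=s(A)$ with $W(A)=q-1$, and let $w_t$ be the common weight in $N_t$. The relation $q=(W(N)+1)/2$ gives $\sum_t w_t(y_t-n_t/2)=-1/2$. Suppose some $y_t<n_t/3$. Then $w_t(y_t-n_t/2)<-w_tn_t/6\le -w_t^2\le -4$, using $n_t\ge 6W_{\max}\ge 6w_t$ and $w_t\ge 2$. Hence $\sum_{r\neq t}w_r(y_r-n_r/2)>0$, so some $y_r>n_r/2$. Now apply the weight-preserving swap $y\mapsto y+w_re_t-w_te_r$. It stays in $\Lambda(\bar n)$, because $w_r,w_t\le n_{\min}/6$ keep the new $t$-coordinate below $n_t/2$ and the new $r$-coordinate above $n_r/3$. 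It also strictly decreases the total amount by which the coordinates of $y$ fall outside $[n_t/3,2n_t/3]$. The case of some $y_t>2n_t/3$ (with no low coordinate) is symmetric. Iterating yields $x$ of weight $q-1$ with $n_t/3\le x_t\le 2n_t/3$ for all $t$.

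\textbf{Tripling.} The constant-sum relation $W(N)=2q-1$ supplies a unit shift directly. The vector $\bar n-2x$ (complement minus coalition) has weight exactly $1$. So $v=3x-\bar n$ lies in $\Lambda(\bar n)$ and has weight $3(q-1)-(2q-1)=q-2$, contradicting (iv). Hence $n_{\min}<6W_{\max}$, and $\varepsilon^{\ast}(G)\le 1/(n_{\min}W_{\max})<6/n_{\min}^2$.

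This removes the need for coprimality and B\'ezout coefficients altogether, which is why $d$ never enters. Your B\'ezout plan could be completed only by first proving essentially this rebalancing lemma to get slack in every class, and then a $d$-free coefficient bound on top of it.
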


We do not know whether an upper bound that is \emph{quadratic} in $1/n_{\min
}$ applies asymptotically to $\varepsilon ^{\ast }$ in general $C$-complete
constant-sum games. However, its appearance in flip games may be relevant
for the general constant-sum case, just as the $\Theta (1/n_{\min })$%
-convergence of $\varepsilon ^{\ast }$ in some non-constant-sum flip games
supports the asymptotic tightness of the estimate $\varepsilon ^{\ast
}=O(1/n_{\min })$ for all games.

\subsection{Does $\protect\delta $ (or $\protect\delta ^{\ast }$) converge
to zero faster than $O(1/\protect\sqrt{n_{\min }})$?}

Our misclassification measures are combinatorial, being fractions mislabeled
by a weighted approximation, and thus inherently distinct from a geometric
separation measure embodied by the loss-win threshold $\varepsilon ^{\ast }.$
Marked differences in the limit behavior of the two measure types may
therefore be expected. Our results show that $\varepsilon ^{\ast }$ and $%
\delta $ tend to converge to $0$ in tandem, but while $\varepsilon ^{\ast
}=O(1/n_{\min })$ is a tight bound, nothing indicates that tightness also
applies to $\delta =O(1/\sqrt{n_{\min }})$. Indeed, the sequence $\left\{
\widetilde{G}_{m}\right\} _{m\geq 5}$ that supports the tightness of $%
\varepsilon ^{\ast }=O(1/n_{\min })$ has a distinct effect on $\delta ^{\ast
}$: because all coalitions with six or more players are winning, and because
in computing the upper bound for $\delta $ or $\delta ^{\ast }$ one may
assume that mislabeling only affects the losing coalitions,\footnote{%
See the definition of $q^{\ast }$ in the proofs of Propositions \ref%
{prop:misclass} and \ref{misclassification for epsilon*} (Sections \ref%
{proof of prop on misclassification} and \ref{sec proof of equality}).} $%
\delta ^{\ast }(\widetilde{G}_{m})$ is bounded by the fraction of coalitions
with at most $5$ players, which is $o\left( 2^{-n_{\min }}\right) =o\left(
2^{-m}\right) .$ The indication from within the proof of Proposition \ref%
{prop:misclass} (Section \ref{proof of prop on misclassification}) also
supports the hunch that $\delta $ could be significantly below the bound on
many occasions.\footnote{%
The probability of the random weight $W(\mathbf{A})$ being in an interval $I$
that Lemma \ref{lem:be} estimates corresponds to the worst-case scenario; if
the entire $I$ is sufficiently far from the mean $W(N)/2,$ then that
probability estimate is significantly lower by Hoeffding's inequality.}

\section{Appendix}

\label{Appendix A}

\subsection{Proof of Proposition \protect\ref{lem:inc}}

Assume first that $(N,\mathcal{W})$ is not $C$-complete, and so some $A,B$
are $\succeq _{C}$-incomparable. Then there exist $R,R^{\prime }\subseteq
N\setminus (A\cup B)$ for which $A\cup R\in \mathcal{W}$, $B\cup R\in
\mathcal{L}$, $B\cup R^{\prime }\in \mathcal{W}$, $A\cup R^{\prime }\in
\mathcal{L}$. Because $R,R^{\prime }$ are disjoint from $A,B,$ we have $\chi
_{A\cup R}+\chi _{B\cup R^{\prime }}$ $=\chi _{A}+\chi _{B}+\chi _{R}+\chi
_{R^{\prime }}$ $=\chi _{B\cup R}+\chi _{A\cup R^{\prime }}$, exhibiting a $%
2 $-trade.

In the opposite direction, assume that $(N,\mathcal{W})$ is not $2$-trade
robust, and thus there exist $X_{1},X_{2}\in \mathcal{W}$ and $%
Y_{1},Y_{2}\in \mathcal{L}$ such that
\begin{equation*}
\chi _{X_{1}}+\chi _{X_{2}}\;=\;\chi _{Y_{1}}+\chi _{Y_{2}}.
\end{equation*}%
This equality means precisely that $X_{1}\cap X_{2}=Y_{1}\cap Y_{2}$ and
that $X_{1}\cup X_{2}=Y_{1}\cup Y_{2}$. Let $A=X_{1}\setminus Y_{1}$, $%
B=Y_{1}\setminus X_{1},$ and $R=$ $X_{1}\cap Y_{1}$, $R^{\prime }=$ $%
X_{2}\cap Y_{2}.$ Then $R,R^{\prime }$ are disjoint from $A\cup B$, and $\
A\cup R=X_{1}\in \mathcal{W}$, $B\cup R=Y_{1}\in \mathcal{L}$, yet $A\cup
R^{\prime }=Y_{2}\in \mathcal{L}$ and $B\cup R^{\prime }=X_{2}\in \mathcal{W}%
.$ Consequently, $A$ and $B$ are $\succeq _{C}$-incomparable, and so the
game is not $C$-complete$.$

\subsection{Proof of Proposition \protect\ref{lem:orbits}}

The relation $\succeq _{I}$ is defined based on the set $\mathcal{W}$ alone,
and so for any $\sigma \in \Gamma (G)$ the following holds: $i\succeq _{I}j$
if and only if $\sigma (i)\succeq _{I}\sigma (j)$. Therefore $\sigma $ moves
equivalence classes of $\sim _{I}$ into equivalence classes and preserves
their $\succ $-ranking, which is only possible if $\sigma (N_{t})=N_{t}$ for
every $t$. Conversely, it follows from definition of $\sim _{I}$ that
transposition of any $i$ and $j$ inside a class $N_{t}$ is an automorphism.
Since any permutation that preserves each $N_{t}$ is a product of such
transpositions, it belongs to $\Gamma (G).$

\subsection{Proof of Proposition \protect\ref{prop:sumset}}

If $G=(N,\mathcal{W})$ is not $C$-complete, then there are $A,B\subseteq N$
and $R,R^{\prime }\subseteq N\diagdown (A\cup B)$ with $A\cup R\in \mathcal{W%
},$ $B\cup R\in \mathcal{L},$ but $B\cup R^{\prime }\in \mathcal{W}$ and $%
A\cup R^{\prime }\in \mathcal{L}$. Writing $a=s(A),$ $b=s(B),$ $r=s(R),$ $%
r^{\prime }=s(R^{\prime }),$ the profiles $x=a+r,$ $y=b+r^{\prime }$ belong
to $\overline{\mathcal{W}}$, the profiles $u=b+r,v=a+r^{\prime }$ \ belong
to $\overline{\mathcal{L}}$, and $x+y=u+v$.

To prove the other direction, let $G=(N,\mathcal{W})$ be a $C$-complete
game, and assume for contradiction that there exist $x,y\in \overline{%
\mathcal{W}}$ and $u,v\in \overline{\mathcal{L}}$ with $x+y=u+v$. Thus, for
any $1\leq t\leq d,$
\begin{equation}
x_{t}+y_{t}=u_{t}+v_{t}  \label{x_t y_t}
\end{equation}

\bigskip

\textbf{Claim.} For every $1\leq t\leq d,$ there are sets $%
A_{t},B_{t}\subseteq N_{t}$ and $R_{t},R_{t}^{\prime }\subseteq
N_{t}\diagdown $ $A_{t}\cup B_{t}$ for which the following holds:

\begin{equation}
\left\vert A_{t}\cup R_{t}\right\vert =x_{t},\text{ }\left\vert B_{t}\cup
R_{t}^{\prime }\right\vert =y_{t},\left\vert B_{t}\cup R_{t}\right\vert
=u_{t},\text{ }\left\vert A_{t}\cup R_{t}^{\prime }\right\vert =v_{t}.
\label{sets}
\end{equation}

\bigskip

\begin{proof}
We can choose $X_{t},Y_{t}\subseteq N_{t}$ with $\left\vert X_{t}\right\vert
=x_{t}$, $\left\vert Y_{t}\right\vert =y_{t}$ and $|X_{t}\setminus
Y_{t}|=\min \{x_{t},n_{t}-y_{t}\}$. This choice also implies that%
\begin{equation}
|Y_{t}\setminus X_{t}|=\left\vert Y_{t}\right\vert -\left\vert
X_{t}\right\vert +|X_{t}\setminus Y_{t}|\text{ }=y_{t}-x_{t}+\min
\{x_{t},n_{t}-y_{t}\}=\min \{y_{t},n_{t}-x_{t}\}.  \label{y_t -- x_t}
\end{equation}%
In what follows, we separate the cases of $x_{t}\geq u_{t}$ and $%
x_{t}<u_{t}. $

Assume first that $x_{t}\geq u_{t},$ and put $a_{t}=x_{t}-u_{t}.$ Since $%
x_{t}-u_{t}\;=v_{t}-y_{t}\;\leq \;n_{t}-y_{t}$ by (\ref{x_t y_t}), we have
\begin{equation*}
a_{t}\leq \min \{x_{t},n_{t}-y_{t}\}=|X_{t}\setminus Y_{t}|.
\end{equation*}%
Thus, we can choose $A_{t}\subseteq X_{t}\setminus Y_{t}$ with $\left\vert
A_{t}\right\vert =a_{t}$. Furthermore, let $B_{t}=\emptyset ,$ $%
R_{t}=X_{t}\setminus A_{t}$ and $R_{t}^{\prime }=Y_{t}$ (the two latter sets
are clearly disjoint from $A_{t}\cup B_{t}$ $=A_{t}$). Then:%
\begin{equation*}
\left\vert A_{t}\cup R_{t}^{\prime }\right\vert =\left\vert A_{t}\right\vert
+\left\vert Y_{t}\right\vert =\left( x_{t}-u_{t}\right) +y_{t}\overset{\text{%
(\ref{x_t y_t})}}{=}v_{t}.
\end{equation*}%
We also have%
\begin{equation*}
\left\vert B_{t}\cup R_{t}\right\vert =\left\vert R_{t}\right\vert
=\left\vert X_{t}\setminus A_{t}\right\vert =x_{t}-a_{t}=u_{t},
\end{equation*}%
and $\left\vert B_{t}\cup R_{t}^{\prime }\right\vert =\left\vert
Y_{t}\right\vert =y_{t},$ $\left\vert A_{t}\cup R_{t}\right\vert =\left\vert
X_{t}\right\vert =x_{t}.$ Therefore (\ref{sets}) holds.

Now assume that $x_{t}<u_{t},$ and put $b_{t}=u_{t}-x_{t}.$ Then, from $%
u_{t}\leq x_{t}+y_{t}$ we have%
\begin{equation*}
b_{t}\leq \min \{y_{t},n_{t}-x_{t}\}\overset{\text{(\ref{y_t -- x_t})}}{=}%
|Y_{t}\setminus X_{t}|.
\end{equation*}%
Thus, we can choose $B_{t}\subseteq Y_{t}\setminus X_{t}$ with $\left\vert
B_{t}\right\vert =b_{t}$. Furthermore, let $A_{t}=\emptyset ,$ $R_{t}=X_{t}$
and $R_{t}^{\prime }=Y_{t}\setminus B_{t}$ (the two latter sets are clearly
disjoint from $A_{t}\cup B_{t}$ $=B_{t}$). Then:%
\begin{equation*}
\left\vert A_{t}\cup R_{t}^{\prime }\right\vert =\left\vert R_{t}^{\prime
}\right\vert =\left\vert Y_{t}\setminus B_{t}\right\vert
=y_{t}-b_{t}=y_{t}-\left( u_{t}-x_{t}\right) \overset{\text{(\ref{x_t y_t})}}%
{=}v_{t}.
\end{equation*}%
We also have%
\begin{equation*}
\left\vert B_{t}\cup R_{t}\right\vert =\left\vert B_{t}\right\vert
+\left\vert X_{t}\right\vert =\left( u_{t}-x_{t}\right) +x_{t}=u_{t},
\end{equation*}%
and $\left\vert A_{t}\cup R_{t}\right\vert =\left\vert X_{t}\right\vert
=x_{t}$, $\left\vert B_{t}\cup R_{t}^{\prime }\right\vert =\left\vert
Y_{t}\right\vert =y_{t}.$ Thus, as in the previous case, (\ref{sets}) holds
as well.
\end{proof}

Given the sets constructed in the Claim, define%
\begin{equation*}
A=\cup _{1\leq t\leq d}A_{t},B=\cup _{1\leq t\leq d}B_{t},R=\cup _{1\leq
t\leq d}R_{t},R^{\prime }=\cup _{1\leq t\leq d}R_{t}^{\prime }.
\end{equation*}%
By (\ref{sets}) and the choice of $t$-sets as subsets of $N_{t},$ for every $%
t,$ we obtain
\begin{equation*}
s\left( A\cup R\right) =x,\text{ }s\left( A\cup R^{\prime }\right) =v,\text{
}s\left( B\cup R\right) =u,\text{ }s\left( B\cup R^{\prime }\right) =y.
\end{equation*}%
Since $R,R^{\prime }$ are clearly disjoint from $A\cup B,$ and by assumption
$x,y\in \overline{\mathcal{W}}$ and $u,v\in \overline{\mathcal{L}}$, the
coalitions $A$ and $B$ are $\succeq _{C}$-incomparable: $A\cup R\in $ $%
\mathcal{W},$ $B\cup R$ $\in \mathcal{L}$ and yet $B\cup R^{\prime }$ $\in
\mathcal{W}$ and $A\cup R^{\prime }\in \mathcal{L}$. This contradicts the
assumed $C$-completeness of $(N,\mathcal{W}).$

\subsection{Proof of Proposition \protect\ref{lem:minimax}}

\label{sec proof of minimax}

\textbf{(i)} The function $w\overset{f}{\mapsto }\max_{A\in \mathcal{W}%
,\,B\in \mathcal{L}}\left[ w(B)-w(A)\right] $ is \emph{convex} on $\Delta
(N) $ as a maximum of linear functions. Also, for each $w\in \Delta (N)$ and
an automorphism $\sigma \in \Gamma =$ $\Gamma (G),$ the $\sigma $-translate $%
\sigma w\in \Delta (N)$ (given by $(\sigma w)_{i}=w_{\sigma (i)}$) yields
the same value of $f$ as $w:$%
\begin{eqnarray}
f(\sigma w) &=&\max_{A\in \mathcal{W},B\in \mathcal{L}}\left[ (\sigma
w)(B)-(\sigma w)(A)\right] =\max_{A\in \mathcal{W},B\in \mathcal{L}}\left[
w(\sigma \left( B\right) )-w(\sigma \left( A\right) )\right]
\label{f sigma w} \\
&=&\max_{A\in \sigma \left( \mathcal{W}\right) ,B\in \sigma \left( \mathcal{L%
}\right) }\left[ w(B)-w(A)\right] \overset{\text{(*)}}{=}\max_{A\in \mathcal{%
W},B\in \mathcal{L}}\left[ w(B)-w(A)\right] =f(w),  \label{f sigma w 1}
\end{eqnarray}%
where (*) holds because $\sigma \left( \mathcal{W}\right) =\mathcal{W}$ and $%
\sigma \left( \mathcal{L}\right) =\mathcal{L}$. Thus, replacing a minimizer $%
w\in \Delta (N)$ in (\ref{epsilon **}) by $\bar{w}=\frac{1}{|\Gamma |}%
\sum_{\sigma \in \Gamma }\sigma w\in \Delta (N)$ cannot increase the value $%
\varepsilon ^{\ast }(G)$ of $f$ at $w,$ and so $\bar{w}$ is a minimizer too.
Since $\bar{w}$ is clearly $\Gamma $-invariant ($\sigma \bar{w}=\bar{w}$ for
every $\sigma \in \Gamma $) and, by Proposition \ref{lem:orbits}, $\Gamma $
consists of all permutations that preserve the player classes (and in
particular contains the transpositions of any two given players within the
same class), $\bar{w}$ is constant on each player class.

\textbf{(ii)} As argued above, the search for minimizer in (\ref{epsilon **}%
) can be restricted to class-constant vectors. Therefore the choice of a
weight vector is reducible to selecting $p$ from the standard $\left(
d-1\right) $-simplex $\Delta (d)$ and transforming it into a class-constant $%
w^{p}\in \Delta (N)$ with $w_{i}^{p}=p_{t}/n_{t}$ for each $1\leq t\leq d$
and $i\in N_{t}$ (that is, $p_{t}$ is the total weight of the class $N_{t}$,
divided equally between the members of $N_{t}$). Thus,
\begin{eqnarray*}
\varepsilon ^{\ast }(G) &=&\min_{p\in \Delta (d)}\max_{A\in \mathcal{W}%
,\,B\in \mathcal{L}}\left[ w^{p}(B)-w^{p}(A)\right] =\min_{p\in \Delta
(d)}\max_{A\in \mathcal{W},\,B\in \mathcal{L}}\sum_{t=1}^{d}\frac{p_{t}}{%
n_{t}}\left[ \,s(B)_{t}-s(A)_{t}\right] \\
&&\overset{\text{(*)}}{=}\min_{p\in \Delta (d)}\max_{a\in \operatorname{conv}\,%
\overline{\mathcal{W}},\,b\in \operatorname{conv}\overline{\mathcal{L}}%
}\sum_{t=1}^{d}\frac{p_{t}}{n_{t}}\left[ \,b_{t}-a_{t}\right] \\
&&\overset{\text{(**)}}{=}\max_{a\in \operatorname{conv}\,\overline{\mathcal{W}}%
,\,b\in \operatorname{conv}\overline{\mathcal{L}}}\min_{p\in \Delta
(d)}\sum_{t=1}^{d}p_{t}\frac{b_{t}-a_{t}}{n_{t}} \\
&&\overset{\text{(*)}}{=}\max_{a\in \operatorname{conv}\,\overline{\mathcal{W}}%
,\,b\in \operatorname{conv}\overline{\mathcal{L}}}\left( \min_{1\leq t\leq d}\frac{%
b_{t}-a_{t}}{n_{t}}\right) ,
\end{eqnarray*}%
where (*) holds in both instances because a linear function attains its
minimum or maximum over a polytope at a vertex, and (**) holds by the
minimax theorem.

\subsection{Proof components of Theorem \protect\ref{thm:main} that are
missing in Section \protect\ref{sec:proof}}

\label{sec missing proof components}

\subsubsection{Proof of Lemma \protect\ref{lem:mid}}

Notice that equality (\ref{sumset eq}) holds for $x,y\in \overline{\mathcal{W%
}}$ and $u=v=(x+y)/2\in \Lambda (\bar{n}).$ Since the game is $C$-complete,
by Proposition \ref{prop:sumset} it cannot be the case that $(x+y)/2\in
\overline{\mathcal{L}}$.

\subsubsection{Proof of Lemma \protect\ref{lem:rmid}}

For any $t$ at which $x_{t}+y_{t}$ is odd, $x_{t}$ must be different from $%
y_{t}$, and so at least one of $x_{t},y_{t}$ is smaller than $n_{t}$;
increase that one by $1$. The two updated profiles, $\overline{x}\in \Lambda
(\bar{n})$ and $\overline{y}\in \Lambda (\bar{n}),$ are still in $\overline{%
\mathcal{W}}$ because $\overline{\mathcal{W}}$ is up-closed. Also, clearly $%
\left( \overline{x}+\overline{y}\right) /2=\lceil (x+y)/2\rceil $, and
therefore by Lemma \ref{lem:mid} this profile is in $\overline{\mathcal{W}}$%
. The statement for $\overline{\mathcal{L}}$ is obtained symmetrically, by
decreasing one component that is greater than $0$ of any odd $x_{t}+y_{t}$,
and using down-closedness of $\overline{\mathcal{L}}$.

\subsubsection{Proof of Lemma \protect\ref{lem:round}}

\label{sec lemma round}

For the supremum norm $\left\Vert \cdot \right\Vert _{\infty }$ on $\mathbb{R%
}^{d},$ let%
\begin{equation*}
\mathbf{dist}_{\infty }(a,A)=\inf \left\{ \left\Vert a-b\right\Vert _{\infty
}:b\in A\right\}
\end{equation*}%
be the $\left\Vert \cdot \right\Vert _{\infty }$-distance between a vector $%
a\in \mathbb{R}^{d}$ and a set $A\subset \mathbb{R}^{d}.$ By Lemma \ref%
{lem:rmid}, for any $x,y\in \overline{\mathcal{W}}$ we have $\lceil
(x+y)/2\rceil \in \overline{\mathcal{W}}$ , and so%
\begin{equation}
\mathbf{dist}_{\infty }((x+y)/2,\overline{\mathcal{W}})\leq \left\Vert
\lceil (x+y)/2\rceil -(x+y)/2\right\Vert _{\infty }\leq 1/2.
\label{dist<=1/2}
\end{equation}

By Dilworth et al. \hyperlink{cite.DHR}{(1999}, Theorem 1), for any $%
A\subset \mathbb{R}^{d}$ and $\xi >0$ the following holds: if $\mathbf{dist}%
_{\infty }((x+y)/2,A)$ $\leq $ $\xi $ for any $x,y\in A\mathcal{\ }$then for
any $a\in \operatorname{conv}A,$
\begin{equation}
\mathbf{dist}_{\infty }(a,A)\leq \xi \kappa \left( d\right)  \label{kappa1}
\end{equation}%
for a sharp $\kappa \left( d\right) $ that satisfies%
\begin{equation}
\kappa \left( d\right) \leq \lfloor \log _{2}d\rfloor +1+\frac{d}{2^{\lfloor
\log _{2}d\rfloor }}.  \label{kappa}
\end{equation}%
(Theorem 1 of Dilworth et al. \hyperlink{cite.DHR}{(1999)} holds with the
same bound $\kappa (d)$ for \emph{any} norm on $\mathbb{R}^{d};$ we use the
supremum norm $\left\Vert \cdot \right\Vert _{\infty }$ on $\mathbb{R}^{d}$
because it allows easy coordinate comparisons.)

By denoting $x=\frac{d}{2^{\lfloor \log _{2}d\rfloor }}\in \lbrack 1,2)$ and
using the inequality $x\leq 1+\log _{2}x$ (that holds for all $x\in \lbrack
1,2]$)$,$ we obtain the following:
\begin{eqnarray*}
\lfloor \log _{2}d\rfloor +1+\frac{d}{2^{\lfloor \log _{2}d\rfloor }}
&=&\lfloor \log _{2}d\rfloor +1+x\leq \lfloor \log _{2}d\rfloor +2+\log _{2}x
\\
&=&\lfloor \log _{2}d\rfloor +2+\log _{2}\left( \frac{d}{2^{\lfloor \log
_{2}d\rfloor }}\right) =\log _{2}d+2.
\end{eqnarray*}%
Therefore, $\kappa \left( d\right) $ in (\ref{kappa}) has a more elegant
(though less tight) bound,
\begin{equation}
\kappa \left( d\right) \leq \log _{2}d+2.  \label{kappa3}
\end{equation}%
Applying (\ref{kappa1}) to $A=\overline{\mathcal{W}}$ and using $\xi =1/2$
(which is due to (\ref{dist<=1/2})), we obtain the following inequality for
any $a\in \operatorname{conv}\overline{\mathcal{W}}$:

\begin{equation*}
\mathbf{dist}_{\infty }(a,\overline{\mathcal{W}})\overset{\text{(\ref{kappa1}%
)}}{\leq }\kappa \left( d\right) /2\overset{\text{(\ref{kappa3})}}{\leq }%
\left( \log _{2}d+2\right) /2\text{ }\overset{\text{(\ref{delta})}}{=}\gamma
\text{.}
\end{equation*}%
Accordingly, there exists $u\in \overline{\mathcal{W}}$ with $\left\Vert
a-u\right\Vert _{\infty }\leq \gamma ,$ which clearly implies $u\leq
a+\gamma \mathbf{1.}$ The statement for $\overline{\mathcal{L}}$ is obtained
similarly, by using the claim for $\overline{\mathcal{L}}$ in Lemma \ref%
{lem:rmid}.

\subsubsection{Proof of the lower bound in inequality (\protect\ref{epsilon
star G}) of Theorem \protect\ref{thm:main}}

\label{sec lower bound}

Here we establish the lower bound for $\varepsilon ^{\ast }(G)$. By
Proposition \ref{lem:minimax}(i), the minimum in (\ref{epsilon **}) is
attained at some $w^{\ast }\in \Delta (N)$ that is constant on player
classes; for such a $w^{\ast }$, for any $1\leq t\leq d$ and any $i\in N_{t}$%
,
\begin{equation}
w_{i}^{\ast }=w^{\ast }(N_{t})/n_{t}\leq 1/n_{t}\leq 1/n_{\min }.
\label{w*<=1/n_min}
\end{equation}%
Let $A^{\ast }\in \mathcal{W}$ be the coalition attaining $\min_{A\in
\mathcal{W}}w^{\ast }(A)$. If $w_{i}^{\ast }=0$ for every $i\in A^{\ast }$
then $\min_{A\in \mathcal{W}}w^{\ast }(A)=$ $0$ $\leq $ $\max_{B\in \mathcal{%
L}}w^{\ast }(B),$ and so $\varepsilon ^{\ast }(G)\geq 0$. In the
complementary case there exists $i\in A^{\ast }$ with $w_{i}^{\ast }>0$, and
then $w^{\ast }(A^{\ast }\setminus \{i\})<w^{\ast }(A^{\ast })$, implying
that $A^{\ast }\setminus \{i\}$ is losing by the minimality of $w^{\ast
}(A^{\ast })$. But then
\begin{equation*}
\max_{B\in \mathcal{L}}w^{\ast }(B)\ \geq \ w^{\ast }(A^{\ast })-w_{i}^{\ast
}\ \overset{\text{(\ref{w*<=1/n_min})}}{\geq }\ \min_{A\in \mathcal{W}%
}w^{\ast }(A)-1/n_{\min },
\end{equation*}%
and so, by the choice of $w^{\ast },$
\begin{equation*}
\varepsilon ^{\ast }(G)=\max_{B\in \mathcal{L}}w^{\ast }(B)-\min_{A\in
\mathcal{W}}w^{\ast }(A)\geq -1/n_{\min }.
\end{equation*}

\subsection{Proof of Proposition \protect\ref{prop gappoed weighted games}}

To check $C$-completeness, suppose that there exists a $2$-trade, namely,
for some $X_{1},X_{2}\in \mathcal{W}$ and $Y_{1},Y_{2}\in \mathcal{L}$ the
following holds:
\begin{equation*}
\chi _{X_{1}}+\chi _{X_{2}}\;=\;\chi _{Y_{1}}+\chi _{Y_{2}}.
\end{equation*}%
We thus have
\begin{equation}
W(X_{1})+W(X_{2})=W(Y_{1})+W(Y_{2}).  \label{2 trade in gapped weight}
\end{equation}%
Since $W(X_{1}),W(X_{2})\neq q+1$ $\ $(the weight $q+1,$ by (iv) in
Definition \ref{def punctured games}, is not attained by any coalition),
(ii) in Definition \ref{def punctured games} implies that $%
W(X_{1}),W(X_{2})\in $ $\{q-1\}\cup \lbrack q+2,\infty )$, and so%
\begin{equation}
W(X_{1})+W(X_{2})\in \{2q-2\}\cup \lbrack 2q+1,\infty ).  \label{contr1}
\end{equation}%
Similarly, since $W(Y_{1}),W(Y_{2})\neq q-2$ $\ $(the weight $q-2$, by (iv)
in Definition \ref{def punctured games}, is not attained by any coalition),
(ii) in Definition \ref{def punctured games} implies that $%
W(Y_{1}),W(Y_{2})\in $ $[0,q-3]\cup \{q\},$ and so
\begin{equation}
W(Y_{1})+W(Y_{2})\in \lbrack 0,2q-3]\cup \{2q\}.  \label{contr2}
\end{equation}%
But the sets $\{2q-2\}\cup \lbrack 2q+1,\infty )$ and $[0,2q-3]\cup \{2q\}$
are disjoint, and so (\ref{2 trade in gapped weight}) is contradicted by the
combination of (\ref{contr1}) and (\ref{contr2}). Consequently, $G$ is $C$%
-complete by Proposition \ref{lem:inc}.

Next, take $w=W/W(N)\in \Delta (N).$ By (iii) in Definition \ref{def
punctured games}, the weights $q-1$ and $q$ are attained by some coalitions,
and so $\min_{A\in \mathcal{W}}w(A)=(q-1)/W(N)$ and $\max_{B\in \mathcal{L}%
}w(B)=q/W(N)$. The inequality (\ref{epslilon*<=w(n)}) then follows from (\ref%
{epsilon **}) in Definition \ref{de loss win gap}.

Finally, if there exist $a\in \operatorname{conv}\overline{\mathcal{W}},$ $b\in
\operatorname{conv}\,\overline{\mathcal{L}}$ such that $b-a=\bar{n}/W(N)$ then $%
\varepsilon ^{\ast }(G)\geq 1/W(N)$ by (\ref{eq:mm}) in Proposition \ref%
{lem:minimax}, which implies equality in (\ref{epslilon*<=w(n)}).

\subsection{Proof of Proposition \protect\ref{Games with big theta}}

\textbf{\noindent }In $\widetilde{G}_{m},$ coalitions that weigh exactly $%
q-1=59$ are precisely those consisting of one $N_{1}$-player, one $N_{2}$%
-player and two $N_{3}$-players, and coalitions that weigh $q=60$ are
precisely those consisting of either three $N_{1}$-players, or four $N_{2}$%
-players, or five $N_{3}$-players (such coalitions exist because, by
assumption, $n_{\min }=$ $m$ $\geq 5.$). Also, \emph{there are no coalitions
weighing} $q+1=61$ or $q-2=58,$ as is easy to verify. Thus, the game $%
\widetilde{G}_{m}$ satisfies (iii) and (iv) in Definition \ref{def punctured
games} (in addition to (i) and (ii) that follow from the premise of the
proposition), and so $\widetilde{G}_{m}$ is a flip game for the given $W$
and $q=60$. In particular, the game is $C$-complete and $\varepsilon ^{\ast
}(\widetilde{G}_{m})\leq 1/W(N)=1/\left( 47m\right) $ by Proposition \ref%
{prop gappoed weighted games}.

$C$-completeness of $\widetilde{G}_{m}$ implies its $I$-completeness, and
its three $I$-desirability classes are precisely the symmetry classes $%
N_{1},N_{2},N_{3}$ by Proposition \ref{lem:orbits}. In particular, the class
sizes are $\overline{n}=(m,m,m)$ (with $m\geq 5$ by assumption) and $n_{\min
}=m$.

We now show that $\varepsilon ^{\ast }(\widetilde{G}_{m})$ $=1/\left(
47m\right) $. \textit{\ }Take $a=(1,1,2)\in \overline{\mathcal{W}}$ (this
profile corresponds to a winning coalition with one $N_{1}$-player, one $%
N_{2}$-player, and two $N_{3}$-players), and
\begin{equation*}
b=\tfrac{1}{47}(48,48,95)=\tfrac{16}{47}(3,0,0)+\tfrac{12}{47}(0,4,0)+\tfrac{%
19}{47}(0,0,5)\in \operatorname{conv}\overline{\mathcal{L}}.
\end{equation*}%
The profile $b$ is a convex combination of three coalition profiles: $%
(3,0,0)\in \,\overline{\mathcal{L}}$ (corresponding to a losing coalition
with three $N_{1}$-players), $(0,4,0)\in \,\overline{\mathcal{L}}$
(corresponding to a losing coalition with four $N_{2}$-players) and $%
(0,0,5)\in \,\overline{\mathcal{L}}$ (corresponding to a losing coalition
with five $N_{3}$-players). Then%
\begin{equation*}
b-a=\frac{1}{47}(1,1,1)=\frac{1}{47m}(m,m,m)=\frac{1}{47m}\bar{n}=\bar{n}%
/W(N),
\end{equation*}%
and the second part of Proposition \ref{prop gappoed weighted games} gives $%
\varepsilon ^{\ast }\left( \widetilde{G}_{m}\right) =1/W(N)$ $=1/(47m)$.

\subsection{Proof of Proposition \protect\ref{acyclicity and
non-weightedness}}

Fix $m\geq 5,$ and consider the game $\widetilde{G}_{m}.$ Call a coalition
\emph{sub-}$\emph{59}$ if it is contained in a coalition of weight $59$.
Thus, a sub-$59$ coalition profile is weakly dominated by $(1,1,2),$ and its
weight lies in the set
\begin{equation}
S=\{0,12,15,20,24,27,32,35,39,44,47,59\}.  \label{no consequtive integers}
\end{equation}%
We begin with the following partial characterization of the $C$-desirability
relation $A\succeq _{C}B$ and of its strict part by the relation between $%
W(A)$ and $W(B).$

\begin{lemma}
\label{lemma ascent descent1}Let $A,B$ be coalitions in the game $\widetilde{%
G}_{m}$.

\begin{itemize}
\item[(a)] If $W(B)=W(A)$ then $A\sim _{C}B$.

\item[(b)] If $W(B)\geq W(A)+2$ then $B\succeq _{C}A$.

\item[(c)] If $W(B)=W(A)+1$ then $A\succeq _{C}B$; and $A\succ _{C}B$ iff $%
W(A\cup R)=59$ for some $R\subseteq N\setminus (A\cup B)$.
\end{itemize}
\end{lemma}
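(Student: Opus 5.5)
The plan is to compare, for each admissible $R\subseteq N\setminus(A\cup B)$, the winning status of $A\cup R$ and $B\cup R$. Both statuses are determined by the weights $W(A)+W(R)$ and $W(B)+W(R)$ through rule (ii) of Definition \ref{def punctured games} with $q=60$. So the whole lemma reduces to elementary facts about the winning set $\{59\}\cup[61,\infty)$. We also use that, by Proposition \ref{Games with big theta}, no coalition weighs $58$ or $61$. Write $x=W(A\cup R)$ and $y=W(B\cup R)$, so that $y-x=W(B)-W(A)$.

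For (a), $x=y$ for every $R$, so $A\cup R$ and $B\cup R$ always have the same status, and both $A\succeq_C B$ and $B\succeq_C A$ hold.

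For (b), assume $y\ge x+2$ and $A\cup R\in\mathcal{W}$; we show $B\cup R\in\mathcal{W}$. If $x\ge 61$ then $y\ge 63$, which is winning. If $x=59$ then $y\ge 61$, also winning. So $B\succeq_C A$.

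For (c), $y=x+1$ and we must show that $B\cup R\in\mathcal{W}$ implies $A\cup R\in\mathcal{W}$.
\begin{itemize}
\item If $y=59$, then $x=58$, which is impossible since no coalition weighs $58$.
\item If $y\ge 62$, then $x\ge 61$, which is winning.
\item If $y=61$, this is impossible since no coalition weighs $61$.
\end{itemize}
Hence $A\succeq_C B$.

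For strictness, $B\succeq_C A$ fails exactly when some $R$ gives $A\cup R\in\mathcal{W}$ but $B\cup R\in\mathcal{L}$. By the status rule this forces $x=59$, $y=60$. Conversely, if $x=59$ for some admissible $R$, then $y=60$ is losing. So $A\succ_C B$ iff $W(A\cup R)=59$ for some $R\subseteq N\setminus(A\cup B)$.

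The only delicate point is the gap facts (no coalition of weight $58$ or $61$). These are already asserted in the proof of Proposition \ref{Games with big theta}. If needed, they can be checked by listing the sums $20a+15b+12c$ near $60$: $58$ and $61$ are not of that form with nonnegative integers $a,b,c$.
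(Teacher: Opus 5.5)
Your proposal is correct and follows essentially the same route as the paper: for each admissible $R$ you compare $W(A\cup R)$ with $W(B\cup R)=W(A\cup R)+W(B)-W(A)$ against the winning set $\{59\}\cup[61,\infty)$, using that no coalition has weight $58$ or $61$. It is in fact slightly more complete than the paper's write-up, since you dispose of the case $W(B\cup R)=61$ explicitly and prove both directions of the ``iff'' in (c), whereas the paper only spells out that $A\succ_C B$ forces $W(A\cup R)=59$.
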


\begin{proof}
In what follows, $R$ will denote a generic subset of $N\setminus (A\cup B)$.
Clearly, $W(A\cup R)=W(A)+W(R)$ and $W(B\cup R)=W(B)+W(R)$. We next compare
the winning status of $A\cup R$ and $B\cup R$ in the three cases considered
in the lemma.

(a) Since $W(A\cup R)=W(B\cup R),$ the coalitions $A\cup R$ and $B\cup R$
have the same winning status. Consequently, $A\sim _{C}B.$

(b) If $A\cup R\in \mathcal{W}$ then either $W(A\cup R)=59$ or $W(A\cup
R)\geq 62$, and so $W(B\cup R)\geq 61$, hence $W(B\cup R)\geq 62$ (because $%
61$ is never attained). Therefore, $B\cup R\in \mathcal{W}$, and so $%
B\succeq _{C}A$.

(c) If $B\cup R\in \mathcal{W}$ with $W(B\cup R)=59$, then $W(A\cup R)=58$,
which is an unattainable value in our flip game. If $B\cup R\in \mathcal{W}$
with $W(B\cup R)\geq 62$, then $W(A\cup R)\geq 61$, hence $W(A\cup R)\geq 62$%
, and so $A\cup R\in \mathcal{W}$. It follows that $A\succeq _{C}B$.
Additionally, $A\succ _{C}B$ iff, for some $R,$ $A\cup R\in \mathcal{W}$ but
$B\cup R\in \mathcal{L}$. It cannot be that $W(A\cup R)\geq 62$ because then
$W(B\cup R)=W(A\cup R)+1\geq 63$ and $B\cup R$ would be winning too, so it
must be the case that $W(A\cup R)=59$.
\end{proof}

\begin{corollary}
\label{lemma ascent descent}If $A\succ _{C}B$ then either $W(B)\leq W(A)-2$
(call this a \emph{drop}) or $W(B)=W(A)+1$ and $A$ is a sub-$59$ coalition
(call this a \emph{rise}).
\end{corollary}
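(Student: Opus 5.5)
The plan is to derive the corollary by contraposition from the trichotomy in Lemma \ref{lemma ascent descent1}, case-splitting on the integer difference $W(B)-W(A)$. Suppose $A\succ_C B$. Then $A\succeq_C B$ holds and $B\succeq_C A$ fails. This rules out every case in which Lemma \ref{lemma ascent descent1} gives $B\succeq_C A$.

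First, if $W(B)=W(A)$, part (a) gives $A\sim_C B$, and in particular $B\succeq_C A$. So this case is excluded. Second, if $W(B)\geq W(A)+2$, part (b) gives $B\succeq_C A$, which is again excluded. Third, if $W(B)=W(A)-1$, we apply part (c) with the roles of $A$ and $B$ swapped, since now $W(A)=W(B)+1$. It yields $B\succeq_C A$, which is also excluded.

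Since weights are integers, the only remaining possibilities are $W(B)\leq W(A)-2$ and $W(B)=W(A)+1$. The first is the \emph{drop} alternative, and nothing more needs to be shown.

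In the case $W(B)=W(A)+1$, part (c) applies with the original roles. The strictness $A\succ_C B$ is then equivalent to the existence of some $R\subseteq N\setminus(A\cup B)$ with $W(A\cup R)=59$. Because $A\cap R=\emptyset$, the set $A\cup R$ is a coalition of weight $59$ that contains $A$. Hence $A$ is sub-$59$ by definition, which is the \emph{rise} alternative.

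No step here is a real obstacle. The only care needed is the role swap in the $W(B)=W(A)-1$ case, so that all integer differences are covered.
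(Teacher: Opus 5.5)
Your proposal is correct and follows essentially the same route as the paper: you rule out $W(B)=W(A)$ by (a), $W(B)\geq W(A)+2$ by (b), and $W(B)=W(A)-1$ by (c) with the roles swapped. In the remaining case $W(B)=W(A)+1$, you use the strictness criterion in (c) to conclude that $A$ is sub-$59$, exactly as the paper does.
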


\begin{proof}
By Lemma \ref{lemma ascent descent1}(a), $A\succ _{C}B$ implies that $%
W(A)\neq W(B)$ . If $W(A)=W(B)+1$ then $B\succeq _{C}A$ by Lemma \ref{lemma
ascent descent1}(c), and so $A\succ _{C}B$ is impossible. If $W(A)\leq
W(B)-2 $ then $B\succeq _{C}A$ by Lemma \ref{lemma ascent descent1}(b),
which again makes $A\succ _{C}B$ impossible. The remaining cases are $%
W(A)\geq W(B)+2$ and $W(A)=W(B)-1$; in the latter case, the strict dominance
$A\succ _{C}B$ implies, by Lemma \ref{lemma ascent descent1}(c), that for
some $R\subseteq N\setminus (A\cup B)$ we have $W(A\cup R)=59$, i.e.,\ $A$
is sub-$59$ coalition.
\end{proof}

We now prove Proposition \ref{acyclicity and non-weightedness}. Suppose that
there exists a $\succ _{C}$-cycle $X_{1}\succ _{C}X_{2}\succ _{C}\dots \succ
_{C}X_{r}\succ _{C}X_{1}$. By Corollary \ref{lemma ascent descent}, when we
move along the cycle, the weight of a coalition either \emph{rises} or \emph{%
drops} (in the sense defined in that corollary) at each transition, with the
total weight change being $0$. With $k$ rises (of one weight unit) and $l$
drops (of at least two weight units) this forces $k\geq 2l$ and $k\geq 1$
(having fewer than twice as many rises as drops would make the total weight
change negative, and a cycle of drops only would do the same). Now, two
rises cannot be consecutive: a rise $X\succ _{C}Y$ followed by another rise $%
Y\succ _{C}Z$ would make both $X$ and $Y$ \emph{sub-}$\emph{59}$, with $%
W(Y)=W(X)+1$, but note that $S$ in (\ref{no consequtive integers}) contains
\emph{no two consecutive integers}. Hence, along the cycle every rise is
followed by a drop, giving $l\geq k$. Together, we have $k\geq 2l\geq 2k$,
and so $k=0$, a contradiction.

\subsection{Proof of the claims made in Remark \protect\ref{rem constant-sum
acyclic} on the $14$-player game (a sketch)}

\label{sec sketch}

The following pair,
\begin{equation*}
a=\left[ 288\,(1,2,3,1)+19\,(1,3,0,4)\right] /307\in \operatorname{conv}\overline{%
\mathcal{W}}
\end{equation*}%
and%
\begin{equation*}
b=\left[ 92\,(0,3,4,4)+120\,(1,3,1,0)+95\,(2,0,4,0)\right] /307\in \operatorname{conv%
}\overline{\mathcal{L}},
\end{equation*}%
satisfies $b-a=(3,3,4,4)/307=\overline{n}/W(N)$, and so $\varepsilon ^{\ast
}=1/307$ by the last part of Proposition \ref{prop gappoed weighted games}.
Acyclicity is proved as in Proposition \ref{acyclicity and non-weightedness}%
. Similarly to Corollary \ref{lemma ascent descent}, it can be shown that $%
A\succ _{C}B$ implies either a \emph{drop} $W(B)\leq W(A)-2$ (with equality
only if $A$ is contained in a coalition of weight $153$ or $156,$ call such $%
A$ a sub-$153$ or a sub-$156$ coalition), or a \emph{rise} $W(B)=W(A)+1$
only if $A$ is contained in a coalition of weight $153$ (i.e., $A$ is sub-$%
153$). The weights of sub-$153$ coalitions may include two consecutive
integers (e.g., $12$ and $13$); however, if some consecutive integers $u$
and $u+1$ are weights of sub-$153$ coalitions then $u+2$ cannot be the
weight of either a sub-$153$ or a sub-$156$ coalition (the weight-$156$
coalition profiles are $(1,2,3,2)$ and $(3,0,0,1)$). Hence, along a presumed
$\succ _{C}$-cycle, every sequence of consecutive rises has length at most $2
$ and is followed by a drop. Because the total change in weight along a
cycle is zero, rises must come in sequences of length $2$ and every drop
equals $2$, which the above property of $u$ and $u+1$ excludes.

\subsection{Proof of Proposition \protect\ref{relation between thresholds}}

Since any rescaling of the weight vector $W$ does not change the ratio $%
W(B)/W(A),$ and, clearly, $\min_{A\in \mathcal{W}}W(A)>0$ for any optimal
solution $W$ in (\ref{mu(G) original}), it may be assumed w.l.o.g. that the
set of minimizers in (\ref{mu(G) original}) is restricted to be%
\begin{equation}
\widehat{\Delta }(G)=\left\{ W\in \mathbb{R}_{+}^{n}:\min_{A\in \mathcal{W}%
}W(A)=1\right\} .  \label{delta hat}
\end{equation}%
That is, the following alternative expression can be used for $\mu (G):$%
\begin{equation}
\mu (G)=\max \left\{ 1,\min_{W\in \widehat{\Delta }(G)}\max_{B\in \mathcal{L}%
}W(B)\right\} .  \label{mu(G)}
\end{equation}

We begin by establishing (\ref{comparing epsilon * and mu1}). Fix some $%
W^{\ast }\in \widehat{\Delta }(G)$ at which $\alpha ^{\ast }=\min_{W\in
\widehat{\Delta }(G)}\max_{B\in \mathcal{L}}W(B)$ is attained (here, it is
possible that $\alpha ^{\ast }<1$)$.$ Clearly,
\begin{equation}
W^{\ast }(N)\geq \min_{A\in \mathcal{W}}W^{\ast }(A)\overset{\text{(\ref%
{delta hat})}}{=}1,  \label{W*(N)>=1}
\end{equation}%
and so for a normalized $w^{\ast }=W^{\ast }/W^{\ast }(N)$ $\in \Delta
\left( N\right) $ we have
\begin{equation*}
\alpha ^{\ast }/W^{\ast }(N)=\max_{B\in \mathcal{L}}w^{\ast }(B)\geq
\min_{A\in \mathcal{W}}w^{\ast }(A)+\varepsilon ^{\ast }\left( G\right)
=1/W^{\ast }(N)+\varepsilon ^{\ast }\left( G\right) .
\end{equation*}%
Therefore, when $\varepsilon ^{\ast }\left( G\right) \geq 0,$%
\begin{equation*}
\mu (G)\overset{\text{(\ref{mu(G)})}}{=}\max \{\alpha ^{\ast },1\}\geq
1+W^{\ast }(N)\varepsilon ^{\ast }\left( G\right) \overset{\text{(\ref%
{W*(N)>=1})}}{\geq }1+\varepsilon ^{\ast }\left( G\right) ,
\end{equation*}%
proving (\ref{comparing epsilon * and mu1})$.$ When $\varepsilon ^{\ast
}\left( G\right) <0,$ (\ref{comparing epsilon * and mu1}) holds trivially
because $\mu (G)\geq 1$ by assumption.

As for (\ref{comparing epsilon * and mu}), consider a proper $G$. Fix some $%
w^{\ast }\in \Delta (N)$ attaining the minimum in the expression (\ref%
{epsilon **}) for $\varepsilon ^{\ast }(G)$, and let $A^{\ast }\in \mathcal{W%
}$ be the coalition attaining the minimum in%
\begin{equation}
q^{\ast }:=\min_{A\in \mathcal{W}}w^{\ast }(A).  \label{q* use later}
\end{equation}%
Then
\begin{equation}
\max_{B\in \mathcal{L}}w^{\ast }(B)=q^{\ast }+\varepsilon ^{\ast }(G).
\label{q* 1}
\end{equation}%
By the properness of $G,$ $N\setminus A^{\ast }\in \mathcal{L}$, and so%
\begin{equation}
1-q^{\ast }=w^{\ast }(N\setminus A^{\ast })\leq q^{\ast }+\varepsilon ^{\ast
}(G)  \label{q*}
\end{equation}%
and%
\begin{equation}
q^{\ast }\geq (1-\varepsilon ^{\ast }(G))/2>0  \label{q*_2}
\end{equation}%
(the last strict inequality is due to the assumption that $\varepsilon
^{\ast }(G)\in \lbrack 0,1)$). The rescaled weight vector $W^{\ast }=\frac{%
w^{\ast }}{q^{\ast }}$ belongs to $\widehat{\Delta }(G)$ and satisfies
\begin{eqnarray}
\min_{W\in \widehat{\Delta }(G)}\max_{B\in \mathcal{L}}W(B) &\leq
&\max_{B\in \mathcal{L}}W^{\ast }(B)\overset{\text{(\ref{q* 1})}}{=}%
1+\varepsilon ^{\ast }(G)/q^{\ast }  \label{use again in last prop} \\
\overset{\text{(\ref{q*_2})}}{\leq }1+\varepsilon ^{\ast }(G)/\left[
(1-\varepsilon ^{\ast }(G))/2\right]  &=&\left( 1+\varepsilon ^{\ast
}(G)\right) /\left( 1-\varepsilon ^{\ast }(G)\right) .
\label{use again in last prop2}
\end{eqnarray}%
Thus, (\ref{comparing epsilon * and mu}) is implied:
\begin{equation*}
\mu \left( G\right) \overset{\text{(\ref{mu(G)})}}{=}\max \left\{
1,\min_{W\in \widehat{\Delta }(G)}\max_{B\in \mathcal{L}}W(B)\right\}
\overset{\text{(\ref{use again in last prop})-(\ref{use again in last prop2})%
}}{\leq }\left( 1+\varepsilon ^{\ast }(G)\right) /\left( 1-\varepsilon
^{\ast }(G)\right) .
\end{equation*}

We now show that the upper bound on $\mu (G)$ in (\ref{comparing epsilon *
and mu}) is also a lower bound on $\mu (G)$ when $G$ is constant-sum.
Assuming that $G$ is constant-sum and using the definition of $\widehat{%
\Delta }(G)$ in (\ref{delta hat}), we have
\begin{equation}
\min_{W\in \widehat{\Delta }(G)}\max_{B\in \mathcal{L}}W(B)=\min_{W\in
\widehat{\Delta }(G)}\left[ W(N)-\min_{A\in \mathcal{W}}W(A)\right]
=\min_{W\in \widehat{\Delta }(G)}W(N)-1.  \label{min W(N)-1}
\end{equation}%
Fix some $W^{\ast }\in \widehat{\Delta }(G)$ that attains $\min_{W\in
\widehat{\Delta }(G)}W(N)$ $>0$. Then, for a normalized $w^{\ast }=W^{\ast
}/W^{\ast }(N)$ $\in \Delta (N),$

\begin{equation}
\min_{A\in \mathcal{W}}w^{\ast }(A)\leq \max_{w\in \Delta (N)}\min_{A\in
\mathcal{W}}w(A)\overset{\text{(\ref{epsilon constant sum})}}{=}%
1/2-\varepsilon ^{\ast }\left( G\right) /2,  \label{use of cs}
\end{equation}%
and $\ $%
\begin{equation}
1=\min_{A\in \mathcal{W}}W^{\ast }(A)=W^{\ast }(N)\min_{A\in \mathcal{W}%
}w^{\ast }(A).  \label{W*(N)}
\end{equation}%
Consequently,

\begin{equation}
W^{\ast }(N)\overset{\text{(\ref{W*(N)})}}{=}\frac{1}{\min_{A\in \mathcal{W}%
}w^{\ast }(A)}\overset{\text{(\ref{use of cs})}}{\geq }2/\left(
1-\varepsilon ^{\ast }\left( G\right) \right) .  \label{W* epsilon*}
\end{equation}%
Then:%
\begin{eqnarray}
\min_{W\in \widehat{\Delta }(G)}\max_{B\in \mathcal{L}}W(B)\overset{\text{(%
\ref{min W(N)-1})}}{=}\min_{W\in \widehat{\Delta }(G)}W(N)-1 &\geq &W^{\ast
}(N)-1  \label{mu(G) cs} \\
\overset{\text{(\ref{W* epsilon*})}}{\geq }2/\left( 1-\varepsilon ^{\ast
}\left( G\right) \right) -1 &=&\left( 1+\varepsilon ^{\ast }(G)\right)
/\left( 1-\varepsilon ^{\ast }(G)\right) .  \label{mu(G) cs1}
\end{eqnarray}%
We finally have
\begin{equation*}
\mu \left( G\right) \overset{\text{(\ref{mu(G)})}}{=}\max \left\{
1,\min_{W\in \widehat{\Delta }(G)}\max_{B\in \mathcal{L}}W(B)\right\}
\overset{\text{(\ref{mu(G) cs})-(\ref{mu(G) cs1})}}{\geq }\left(
1+\varepsilon ^{\ast }(G)\right) /\left( 1-\varepsilon ^{\ast }(G)\right) .
\end{equation*}%
This shows that (\ref{comparing epsilon * and mu}) holds with equality.

\subsection{Proof of Proposition \protect\ref{divergence of mu and epsilon*}}

Fix $m\geq 5,$ and denote $\widetilde{G}_{m}$ throughout by $G=(N,\mathcal{W}%
).$ Suppose that $G$ is $\alpha $-roughly weighted for some $\alpha \geq 1,$
and let $W^{\prime }\in \mathbb{R}_{+}^{N}$ be a weight vector satisfying (%
\ref{alpha inequalities}) in Definition \ref{alpha definition}, i.e.,%
\begin{equation}
A\in \mathcal{W}\ \Longrightarrow \ W^{\prime }(A)\ \geq 1
\label{more than one}
\end{equation}%
and
\begin{equation}
B\in \mathcal{L}\ \Longrightarrow \ W^{\prime }(B)\ \leq \alpha .
\label{less than alpha}
\end{equation}%
Because $\sigma \left( \mathcal{W}\right) =\mathcal{W}$ and $\sigma \left(
\mathcal{L}\right) =\mathcal{L}$ for any automorphism $\sigma \in \Gamma
=\Gamma (G)$ of the game, the above $W^{\prime }$ can be replaced by its $%
\sigma $-translate $\sigma W^{\prime }$ defined by $\left( \sigma W^{\prime
}\right) _{i}=W_{\sigma (i)}^{\prime }$, for any $\sigma \in \Gamma ,$ with (%
\ref{more than one}) and (\ref{less than alpha}) preserved. Given that, $%
W^{\prime }$ can in fact be replaced by the average of its $\sigma $%
-translates over all $\sigma \in \Gamma ,$ with (\ref{more than one}) and (%
\ref{less than alpha}) still preserved. The average $\overline{W}$ is $%
\Gamma $-invariant (i.e., $\sigma \overline{W}=\overline{W}$), and since, by
Proposition \ref{lem:orbits}, $\Gamma $ consists of all possible
permutations that preserve the player classes$,$ $\overline{W}$ is in fact
\emph{constant} on each $N_{t}$, with $\overline{W}_{i}=\frac{\overline{W}%
(N_{t})}{m}$ for any $1\leq t\leq 3$ and $i\in N_{t}.$

Now consider some $A_{1}\in \mathcal{W}$ consisting of exactly one $N_{1}$%
-player, one $N_{2}$-player, and two $N_{3}$-players, and three losing
coalitions: $B_{1}\in \mathcal{L}$ that consists of exactly three $N_{1}$%
-players, $B_{2}\in \mathcal{L}$ that consists of exactly four $N_{2}$%
-players, and $B_{3}\in \mathcal{L}$ consisting of exactly five $N_{3}$%
-players. Thus,%
\begin{equation*}
\overline{W}(B_{1})=3\overline{W}(N_{1})/m\overset{\text{(\ref{less than
alpha})}}{\leq }\alpha ,\text{ \ }\overline{W}(B_{2})=4\overline{W}(N_{2})/m%
\overset{\text{(\ref{less than alpha})}}{\leq }\alpha ,\text{ \ }\overline{W}%
(B_{3})=5\overline{W}(N_{3})/m\overset{\text{(\ref{less than alpha})}}{\leq }%
\alpha ,
\end{equation*}%
and so
\begin{equation*}
1\overset{\text{(\ref{more than one})}}{\leq }\overline{W}(A_{1})=\left(
\overline{W}(N_{1})+\overline{W}(N_{2})+2\overline{W}(N_{3})\right) /m\leq
\left( \frac{1}{3}+\frac{1}{4}+\frac{2}{5}\right) \alpha =\frac{59}{60}%
\alpha .
\end{equation*}%
Consequently, $\alpha \geq 60/59.$ This holds for any $\alpha $ for which $G$
is $\alpha $-roughly weighted, and hence $\mu \left( G\right) \geq 60/59$.
On the other hand, for the weight vector $W$ in Definition (\ref{Family G
tilda}), $W^{\prime }=W/59$ satisfies (\ref{more than one}) and also $%
W^{\prime }(B)\leq 60/59$ for any $B\in \mathcal{L}$, and so $G$ is $60/59$%
-roughly weighted. We conclude that $\mu (\widetilde{G}_{m})=\mu \left(
G\right) =60/59$.

\subsection{Proof of Proposition \protect\ref{Corollary non proper}}

Let $q_{0}=\frac{1}{2}\lim \inf_{n\rightarrow \infty }q_{n}^{\ast }>0,$ and
let $n_{0}$ be such that $q_{n}^{\ast }\geq q_{0}$ for each $n\geq n_{0}.$
Fix $n\geq n_{0},$ and denote $G_{n}$ for as long as needed by $G=(N,%
\mathcal{W}).$ We adopt some notation and facts established in the proof of
Proposition \ref{relation between thresholds} that do not rely on the
assumption made therein that $G$ is proper: these are, specifically,
equations (\ref{delta hat}), (\ref{mu(G)}) and (\ref{q* use later}) for $%
\widehat{\Delta }(G)$, $\mu \left( G\right) $ and $q^{\ast },$ and the
inequality (\ref{use again in last prop}) that applies to all games:

\begin{equation*}
\min_{W\in \widehat{\Delta }(G)}\max_{B\in \mathcal{L}}W(B)\leq
1+\varepsilon ^{\ast }(G)/q^{\ast }.
\end{equation*}%
Thus,%
\begin{equation}
\mu (G)\overset{\text{(\ref{mu(G)})}}{=}\max \left\{ 1,\min_{W\in \widehat{%
\Delta }(G)}\max_{B\in \mathcal{L}}W(B)\right\} \leq 1+\max \{0,\varepsilon
^{\ast }(G)\}/q^{\ast }.  \label{intermediate}
\end{equation}

Since our game is $G=G_{n},$ for which $q_{n}^{\ast }$ is defined in the
same way as $q^{\ast }$ in (\ref{q* use later}) (and hence $q_{n}^{\ast
}=q^{\ast }$), we in fact have the following inequality for every $n\geq
n_{0}$:%
\begin{equation*}
1\leq \mu (G_{n})\overset{\text{(\ref{intermediate})}}{\leq }1+\max
\{0,\varepsilon ^{\ast }(G_{n})\}/q_{n}^{\ast }\leq 1+2\max \{0,\varepsilon
^{\ast }(G_{n})\}/q_{0}.
\end{equation*}%
Thus $\lim_{n\rightarrow \infty }\mu (G_{n})=1$ whenever $\lim_{n\rightarrow
\infty }\varepsilon ^{\ast }(G_{n})=0,$ and $\mu (G_{n})=1+O\!\left(
1/n_{\min }\right) $ whenever $\varepsilon ^{\ast }\left( G_{n}\right)
=\;O\!\left( 1/n_{\min }\right) .$ Given Theorem \ref{thm:types}, this
proves the proposition.

\subsection{Proof of Proposition \protect\ref{prop:misclass}}

\label{proof of prop on misclassification}

Denote by $\mathbf{A}$ a random coalition chosen w.r.t. the uniform
distribution on $2^{N},$ or, equivalently, a coalition in which each player $%
i\in N$ participates (independently of the rest) with probability $1/2$. As
noted immediately after Definition \ref{def misclassification}, $\delta (G)$
can be viewed as the least probability with which a weight system
misclassifies $\mathbf{A}$.

\begin{lemma}
\label{lem:be}Consider a non-zero $W\in \mathbb{R}_{+}^{N}$, and let $%
\left\Vert W\right\Vert _{2}=(\sum_{i\in N}W_{i}^{2})^{1/2}$ and $\rho
=\max_{i\in N}W_{i}/\left\Vert W\right\Vert _{2}$. For every closed interval
$I$ of length $\lambda \geq 0$,
\begin{equation}
\mathbf{\Pr }(W(\mathbf{A})\in I)\leq \sqrt{2/\pi }\,\lambda /\left\Vert
W\right\Vert _{2}+2C_{0}\rho ,  \label{lemma be}
\end{equation}%
where $C_{0}\leq 0.56$ is the absolute constant of the Berry--Esseen theorem
(see, e.g., O'Donnell \hyperlink{cite.odonnell}{(2014}, Section 5.2)).
\end{lemma}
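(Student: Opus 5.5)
The statement is Lemma \ref{lem:be}, an anti-concentration bound for the random weight $W(\mathbf{A})=\sum_{i\in N}W_i\xi_i$. Here the $\xi_i$ are i.i.d.\ Bernoulli$(1/2)$. The plan is to normalize, apply the Berry--Esseen theorem to compare with a Gaussian, and bound the Gaussian mass of an interval by its length times the maximal density.

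First, center and scale. Write $\xi_i=(1+\epsilon_i)/2$ with $\epsilon_i$ i.i.d.\ uniform signs. Then $W(\mathbf{A})=W(N)/2+\tfrac12\sum_i W_i\epsilon_i$. Put $S=\sum_i (W_i/\|W\|_2)\epsilon_i$, which has mean $0$, variance $1$, and is a sum of independent terms. The event $W(\mathbf{A})\in I$ is then equivalent to $S\in J$, where $J$ is a closed interval of length $2\lambda/\|W\|_2$. Indeed, $W(\mathbf{A})=W(N)/2+\tfrac12\|W\|_2 S$, so the rescaling multiplies lengths by $2/\|W\|_2$.

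Next, invoke Berry--Esseen in the form for sums of independent, non-identically distributed variables (as in O'Donnell, Section 5.2). For a sum $S=\sum_i X_i$ with $\sum_i \mathbb{E}X_i^2=1$, it gives
$\sup_x|\Pr(S\le x)-\Phi(x)|\le C_0\sum_i\mathbb{E}|X_i|^3$.
Here $X_i=(W_i/\|W\|_2)\epsilon_i$, so
$\sum_i \mathbb{E}|X_i|^3=\sum_i W_i^3/\|W\|_2^3\le \rho\sum_i W_i^2/\|W\|_2^2=\rho$.

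For $J=[\alpha,\beta]$ we have $\Pr(S\in J)=\Pr(S\le\beta)-\Pr(S<\alpha)$. The left-closed endpoint is handled by taking a limit $x\uparrow\alpha$ in the Kolmogorov bound, which holds uniformly, including for left limits. Hence
$\Pr(S\in J)\le \Phi(\beta)-\Phi(\alpha)+2C_0\rho$.

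Finally, since the standard normal density is at most $1/\sqrt{2\pi}$,
$\Phi(\beta)-\Phi(\alpha)\le (2\lambda/\|W\|_2)/\sqrt{2\pi}=\sqrt{2/\pi}\,\lambda/\|W\|_2$.
Combining the two bounds gives (\ref{lemma be}).

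The main obstacle is bookkeeping rather than depth. One point is getting the factor $2$ in the length of $J$ right, since it is exactly what turns $1/\sqrt{2\pi}$ into $\sqrt{2/\pi}$. The other is citing the non-i.i.d.\ Berry--Esseen version with constant $C_0\le0.56$ and the third-moment sum bounded by $\rho$. The case $\lambda=0$ is covered by the same argument.
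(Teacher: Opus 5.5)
Your proposal is correct and follows essentially the same route as the paper. Your normalized summands $(W_i/\|W\|_2)\epsilon_i$ are exactly the paper's $Y_i=(X_i-\mathbf{E}X_i)/\sigma$ with $\sigma=\|W\|_2/2$, and the remaining steps match the paper's: the non-i.i.d.\ Berry--Esseen bound with third-moment sum at most $\rho$, the left-limit handling of the closed endpoint, and the density bound $1/\sqrt{2\pi}$ applied to an interval of length $\lambda/\sigma=2\lambda/\|W\|_2$.
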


\begin{proof}
One may write $W(\mathbf{A})=\sum_{i\in N}X_{i}$ with $X_{i}=W_{i}\xi _{i}$,
where the $\xi _{i}$ are independent random variables that take the values $0
$ and $1$ with probability $1/2$ each. Then
\begin{equation}
\text{(a) }\mathbf{E}W(\mathbf{A})=W(N)/2;\text{ \ (b) }|X_{i}-\mathbf{E}%
X_{i}|=W_{i}/2;\text{ \ (c) }\mathbf{Var}X_{i}=W_{i}^{2}/4,  \label{var}
\end{equation}%
and
\begin{equation}
\sigma ^{2}=\mathbf{Var}W(\mathbf{A})=\sum_{i\in N}\mathbf{Var}X_{i}\overset{%
\text{(\ref{var}c)}}{=}\left\Vert W\right\Vert _{2}^{2}/4.
\label{sigma squared}
\end{equation}%
Let $Y_{i}=(X_{i}-\mathbf{E}X_{i})/\sigma $ for each $i\in N$; the random
variables $Y_{i}$ are independent and
\begin{equation}
\mathbf{E}Y_{i}=0,\text{ }\sum_{i\in N}\mathbf{Var}Y_{i}\overset{\text{(\ref%
{var}b)}}{=}\sum_{i\in N}\frac{W_{i}^{2}/4}{\sigma ^{2}}\overset{\text{(\ref%
{sigma squared})}}{=}1\text{ .}  \label{EVarY}
\end{equation}%
Also, $\sum_{i\in N}Y_{i}=\frac{W(\mathbf{A})-W(N)/2}{\sigma },$ and hence%
\begin{equation}
\mathbf{\Pr }(W(\mathbf{A})\leq x)=\mathbf{\Pr }\left( \sum_{i\in
N}Y_{i}\leq \left[ x-W(N)/2\right] /\sigma \right) .  \label{P=P}
\end{equation}%
Now let $\Phi $ be the standard normal distribution function. Obviously,%
\begin{eqnarray}
&&\left\vert \mathbf{\Pr }(W(\mathbf{A})\leq x)-\Phi \left( \frac{x-W(N)/2}{%
\sigma }\right) \right\vert   \label{berry esseen} \\
&&\overset{\text{(\ref{P=P})}}{=}\left\vert \mathbf{\Pr }\left( \sum_{i\in
N}Y_{i}\leq \frac{x-W(N)/2}{\sigma }\right) -\Phi \left( \frac{x-W(N)/2}{%
\sigma }\right) \right\vert .  \label{verry esseen 2}
\end{eqnarray}

By the Berry--Esseen (central limit) theorem for independent, not
necessarily identically distributed summands (O'Donnell \hyperlink{cite.odonnell}%
{(2014}, Section 5.2)), applied to the variables $\left\{ Y_{i}\right\}
_{i\in N}$  with $\mathbf{E}Y_{i}$ $=0$ and $\sum_{i\in N}\mathbf{Var}Y_{i}$
$=1$ (by (\ref{EVarY}))$,$ the following holds for all $x\in \mathbb{R}$:%
\begin{equation}
\left\vert \mathbf{\Pr }\left( \sum_{i\in N}Y_{i}\leq \frac{x-W(N)/2}{\sigma
}\right) -\Phi \left( \frac{x-W(N)/2}{\sigma }\right) \right\vert \leq
C_{0}\sum_{i\in N}\mathbf{E}|Y_{i}|^{3}.  \label{verry awwn 3}
\end{equation}%
But also
\begin{equation}
C_{0}\sum_{i\in N}\mathbf{E}|Y_{i}|^{3}=C_{0}\frac{\sum_{i\in N}\mathbf{E}%
\left( |X_{i}-\mathbf{E}X_{i}|^{3}\right) }{\sigma ^{3}}\overset{\text{(\ref%
{var}b),(\ref{sigma squared})}}{=}C_{0}\frac{\sum_{i\in N}W_{i}^{3}}{%
\left\Vert W\right\Vert _{2}^{3}}\overset{\text{(*)}}{\leq }C_{0}\rho ,
\label{verry seen 4}
\end{equation}%
where the last inequality (*) uses $\sum_{i\in N}W_{i}^{3}\leq \left(
\max_{i\in N}W_{i}\right) \left\Vert W\right\Vert _{2}^{2}$. That is, from (%
\ref{berry esseen})--(\ref{verry seen 4}),
\begin{equation}
\left\vert \mathbf{\Pr }(W(\mathbf{A})\leq x)-\Phi \left( \frac{x-W(N)/2}{%
\sigma }\right) \right\vert \leq C_{0}\rho   \label{bery esseen 5}
\end{equation}%
for every $x\in \mathbb{R}$.

Also, for every $x,$ $\mathbf{P}(W(\mathbf{A})<x)=\lim_{y\uparrow x}\mathbf{P%
}(W(\mathbf{A})\leq y)$, and hence (\ref{bery esseen 5}) holds for $\mathbf{P%
}(W(\mathbf{A})<x)$ by the continuity of $\Phi .$

For an interval $I=[a,b]$ with $b-a=\lambda $,
\begin{eqnarray*}
\mathbf{\Pr }(W(\mathbf{A}) &\in &I)=\mathbf{\Pr }(W(\mathbf{A})\leq b)-%
\mathbf{\Pr }(W(\mathbf{A})<a) \\
&&\overset{\text{(\ref{bery esseen 5})}}{\leq }\Phi \left( \frac{b-W(N)/2}{%
\sigma }\right) -\Phi \left( \frac{a-W(N)/2}{\sigma }\right) +2C_{0}\rho  \\
&&\overset{\text{(*)}}{\leq }\frac{\lambda }{\sigma \sqrt{2\pi }}+2C_{0}\rho
\overset{\text{(\ref{sigma squared})}}{=}\sqrt{2/\pi }\,\lambda /\left\Vert
W\right\Vert _{2}+2C_{0}\rho ,
\end{eqnarray*}%
where (*) holds because the two arguments of $\Phi $ differ by $\lambda
/\sigma $ and the standard normal density is at most $1/\sqrt{2\pi }$.
\end{proof}

We now prove Proposition \ref{prop:misclass}. We first make the step of
modifying the definition of $\varepsilon ^{\ast }(G)$ by changing the way a
weight vector $w$ is normalized: consider only weight vectors $W\in \mathbb{R%
}_{+}^{N}$ that are constant on each player class, and whose normalization
is w.r.t. the sum of representative weights from each player class.
Formally, let

\begin{equation*}
\Delta ^{d}(N)=\left\{ W\in \mathbb{R}_{+}^{N}:\exists w\in \Delta (d)\text{
}(W_{i}=w_{t}\text{ whenever }i\in N_{t})\right\} ,
\end{equation*}%
where (recall) $\Delta (d)$ is the standard $\left( d-1\right) $-simplex.
Consider the following variant of the loss-win-excess threshold:
\begin{equation}
\tau (G)=\min_{W\in \Delta ^{d}(N)}\max_{A\in \mathcal{W},B\in \mathcal{L}}%
\left[ W(B)-W(A)\right] .  \label{tau g}
\end{equation}%
Arguing as in the proof of Proposition \ref{lem:minimax}(ii) in Section \ref%
{sec proof of minimax}, we obtain the following:
\begin{equation*}
\tau (G)=\max \left\{ \min_{1\leq t\leq d}\left( b_{t}-a_{t}\right) :\ a\in
\operatorname{conv}\overline{\mathcal{W}},\ b\in \operatorname{conv}\overline{\mathcal{L}}%
\right\} .
\end{equation*}%
Based on the above equality, Lemma \ref{lem:round} and the appropriately
modified argument in the conclusion of the proof of Theorem \ref{thm:main}
in Section \ref{conclusion of the proof of theorem 1} (with $2\gamma $
instead of $2\gamma /n_{\min }$ in (\ref{contradiction delta})) yield:
\begin{equation}
\tau (G)\leq \log _{2}d+2.  \label{tau g2}
\end{equation}

Next, fix some $W^{\ast }\in \Delta ^{d}(N)$ that attains the minimum in (%
\ref{tau g}) and let $q^{\ast }=$ $\min_{A\in \mathcal{W}}W^{\ast }(A)$.
This weight system classifies every winning coalition correctly, and a
losing coalition $A$ is misclassified exactly when $W^{\ast }(A)\geq q^{\ast
}$. Because then $W^{\ast }(A)\leq $ $\max_{B\in \mathcal{L}}W^{\ast }(B)=$ $%
q^{\ast }+\tau (G)$, every misclassified $A$ must satisfy $W^{\ast }(A)\in $
$[q^{\ast },q^{\ast }+\tau ^{+}]$, where $\tau ^{+}=$ $\max \{\tau (G),0\}$.
Hence,%
\begin{equation}
\delta (G)\leq \mathbf{\Pr }(W^{\ast }(\mathbf{A})\in \lbrack q^{\ast
},q^{\ast }+\tau ^{+}]).  \label{deltag prob}
\end{equation}

Since each player in a class $N_{t}$ has the weight $W^{\ast }(N_{t})/n_{t},$
the normalization implied by the definition of $\Delta ^{d}(N)$ means that
\begin{equation}
\sum_{t=1}^{d}W^{\ast }(N_{t})/n_{t}=1.  \label{normalization}
\end{equation}
Thus,%
\begin{equation}
\left\Vert W^{\ast }\right\Vert _{2}^{2}=\sum_{t=1}^{d}n_{t}(W^{\ast
}(N_{t})/n_{t})^{2}\geq n_{\min }\sum_{t=1}^{d}(W^{\ast }(N_{t})/n_{t})^{2}%
\overset{\text{(*)}}{\geq }n_{\min }/d,  \label{cs}
\end{equation}%
where (*) is due to the Cauchy-Schwarz inequality and (\ref{normalization}).
Also, if $t$ maximizes $W^{\ast }(N_{t})/n_{t}$ then
\begin{equation}
\left\Vert W^{\ast }\right\Vert _{2}^{2}\geq n_{t}(W^{\ast
}(N_{t})/n_{t})^{2}=n_{t}\left( \max_{i\in N}W_{i}^{\ast }\right) ^{2}\geq
n_{\min }\left( \max_{i\in N}W_{i}^{\ast }\right) ^{2},  \label{w*norm}
\end{equation}%
and so for $\rho $ in the statement of Lemma \ref{lem:be} the following
holds:

\begin{equation}
\rho =\max_{i\in N}W_{i}^{\ast }/\left\Vert W^{\ast }\right\Vert _{2}\overset%
{\text{(\ref{w*norm})}}{\leq }1/\sqrt{n_{\min }}.  \label{tau}
\end{equation}%
The inequality (\ref{delta bounded by epsilon*}) now follows from Lemma \ref%
{lem:be}, applied to the interval $I=[q^{\ast },q^{\ast }+\tau ^{+}]$:%
\begin{eqnarray*}
\delta (G)\overset{\text{(\ref{deltag prob})}}{\leq }\mathbf{\Pr }(W^{\ast }(%
\mathbf{A}) &\in &[q^{\ast },q^{\ast }+\tau ^{+}]) \\
&&\overset{\text{(\ref{lemma be})}}{\leq }\sqrt{2/\pi }\,\tau
^{+}/\left\Vert W^{\ast }\right\Vert _{2}+2C_{0}\rho \\
&&\overset{\text{(\ref{cs}),(\ref{tau})}}{\leq }\sqrt{2/\pi }\,\frac{\sqrt{d}%
}{\sqrt{n_{\min }}}\max \{\tau (G),0\}+\frac{2C_{0}}{\sqrt{n_{\min }}} \\
&&\overset{\text{(\ref{tau g2})}}{\leq }\,\frac{\sqrt{2d/\pi }(\log
_{2}d+2)+2C_{0}}{\sqrt{n_{\min }}}.
\end{eqnarray*}

\subsection{Proof of Proposition \protect\ref{misclassification for epsilon*}%
}

\label{sec proof of equality}

Fix some $w^{\ast }\in \Delta ^{\ast }(G)$ that is constant on each player
class (which exists by Proposition \ref{lem:minimax}(i)), and let $q^{\ast
}= $ $\min_{A\in \mathcal{W}}w^{\ast }(A).$ This weight system classifies
every winning coalition correctly, and a losing coalition $A$ is
misclassified exactly when $w^{\ast }(A)\geq q^{\ast }$. Because $w^{\ast
}(A)\leq $ $\max_{B\in \mathcal{L}}w^{\ast }(B)=$ $q^{\ast }+\varepsilon
^{\ast }(G)$, every misclassified $A$ must satisfy $w^{\ast }(A)\in $ $%
[q^{\ast },q^{\ast }+\varepsilon ^{+}]$, where $\varepsilon ^{+}=$ $\max
\{\varepsilon ^{\ast }(G),0\}$. Hence, for a random $\mathbf{A}$ with the
uniform distribution on $2^{N},$%
\begin{equation}
\delta ^{\ast }(G)\leq \mathbf{\Pr }(w^{\ast }(\mathbf{A})\in \lbrack
q^{\ast },q^{\ast }+\varepsilon ^{+}]).  \label{deltag prob 1}
\end{equation}

By the Cauchy--Schwarz inequality,
\begin{equation}
\left\Vert w^{\ast }\right\Vert _{2}\geq 1/\sqrt{n}.  \label{cs1}
\end{equation}%
Also, just as in the proof of (\ref{tau}) in Section \ref{proof of prop on
misclassification},%
\begin{equation}
\rho =\max_{i\in N}w_{i}^{\ast }/\left\Vert w^{\ast }\right\Vert _{2}\leq 1/%
\sqrt{n_{\min }}.  \label{tau1}
\end{equation}
The inequality (\ref{upper bound delta(G)}) now follows from Lemma \ref%
{lem:be}, applied to the interval $I=[q^{\ast },q^{\ast }+\varepsilon ^{+}]$:%
\begin{eqnarray*}
\delta ^{\ast }(G)\overset{\text{(\ref{deltag prob 1})}}{\leq }\mathbf{\Pr }%
(w^{\ast }(\mathbf{A}) &\in &[q^{\ast },q^{\ast }+\varepsilon ^{+}]) \\
&&\overset{\text{(\ref{lemma be})}}{\leq }\sqrt{2/\pi }\,\varepsilon
^{+}/\left\Vert w^{\ast }\right\Vert _{2}+2C_{0}\rho \\
&&\overset{\text{(\ref{cs1}),(\ref{tau1})}}{\leq }\sqrt{2/\pi }\,\sqrt{n}%
\max \{\varepsilon ^{\ast }(G),0\}+\frac{2C_{0}}{\sqrt{n_{\min }}} \\
&&\overset{\text{(\ref{epsilon star G})}}{\leq }\sqrt{2/\pi }\,(\log _{2}d+2)%
\frac{\sqrt{n}}{n_{\min }}+\frac{2C_{0}}{\sqrt{n_{\min }}}.
\end{eqnarray*}

\subsection{Proof of Proposition \protect\ref{O(1/n^2)}}

Observe that, for any $A\subseteq N,$%
\begin{equation*}
W\left( A\right) \leq q-2\text{ }\Longleftrightarrow \text{ }W\left(
N\diagdown A\right) \geq q+1
\end{equation*}%
and
\begin{equation*}
W\left( A\right) =q-1\text{ }\Longleftrightarrow \text{ }W\left( N\diagdown
A\right) =q.
\end{equation*}%
In both cases, exactly one coalition in any pair $\{A,N\diagdown A\}$ is
winning, and so $G$ is constant-sum.

By Proposition \ref{prop gappoed weighted games}, $G$ is also $C$-complete,
and $\varepsilon ^{\ast }(G)\leq $ $1/W(N).$ Denote by\ $W_{\max }$ the
maximal individual weight. Since, clearly, $W(N)\geq $ $n_{t}W_{\max }$ for
the class $N_{t}$ with the highest individual weight, we have
\begin{equation}
\varepsilon ^{\ast }(G)\leq 1/W(N)\leq 1/\left( n_{t}W_{\max }\right) \leq
1/\left( n_{\min }W_{\max }\right) .
\label{epsilon star in constant sum game}
\end{equation}

We next show that $n_{\min }<6W_{\max },$ by exploring the implication of
the reverse inequality in the following lemma. Denote by $w_{t}$ the common
(by assumption) weight of every player in $N_{t},$ for each $1\leq t\leq d.$

\begin{lemma}[\textbf{Middle-third profile}]
\label{middle third tie}Suppose that $n_{\min }\geq 6W_{\max }$. Then there
exists $x\in \Lambda \left( \overline{n}\right) $ with $n_{t}/3\leq
x_{t}\leq 2n_{t}/3$ for every $t$ and $\sum_{t=1}^{d}w_{t}x_{t}$ $=q-1.$
\end{lemma}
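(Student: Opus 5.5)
The plan is to reduce the lemma to a bounded-coefficient coin problem inside the box
\[
\Lambda^{\mathrm{mid}}=\prod_{t=1}^{d}\bigl\{\lceil n_t/3\rceil,\dots,\lfloor 2n_t/3\rfloor\bigr\}\subseteq\Lambda(\bar n).
\]
The first step is arithmetic. Let $g=\gcd(w_1,\dots,w_d)$. By (iii) in Definition \ref{def punctured games} some coalition has weight $q-1$, and its weight is $\sum_t w_t s_t$ for its profile $s$. Hence $g\mid q-1$. So it suffices to show that every multiple of $g$ in a suitable window around $W(N)/2$ equals $\sum_t w_t x_t$ for some $x\in\Lambda^{\mathrm{mid}}$, and that $q-1=(W(N)-1)/2$ lies in that window.

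Write $x=\ell+z$, where $\ell_t=\lceil n_t/3\rceil$ and $0\le z_t\le K_t:=\lfloor 2n_t/3\rfloor-\lceil n_t/3\rceil$. The hypothesis $n_{\min}\ge 6W_{\max}$ gives $K_t\ge n_t/3-1\ge 2W_{\max}-1$, so every coordinate has at least about $2W_{\max}$ free steps. Fix one class, say $t=1$, as the ``filler'' coordinate. The residues $\sum_{t\ge 2} z_t w_t \bmod w_1$, with $0\le z_t\le w_1-1\le K_t$, already cover the whole subgroup $g\mathbb{Z}/w_1\mathbb{Z}$. This is because the subgroup of $\mathbb{Z}/w_1\mathbb{Z}$ generated by the $w_t$ is $g\mathbb{Z}/w_1\mathbb{Z}$, and in a cyclic group of order $w_1$ each generator needs multiplicity at most $w_1-1$.

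Given a target $T=q-1-\sum_t w_t\ell_t$ with $g\mid T$, I will choose $z_2,\dots,z_d$ in two parts. First, take the residue-correcting part $r_t\in[0,w_1-1]$ so that $\sum_{t\ge2}r_tw_t\equiv T \pmod{w_1}$. Second, add a balancing part: shift some $z_t$ ($t\ge2$) by multiples of $w_1$ so that the remaining amount $T-\sum_{t\ge 2}w_tz_t$ lands in $[0,w_1K_1]$. That remainder is then a multiple of $w_1$, so it is realized by $z_1$. In practice I would not use a rigid filler. Instead I would take a base point $x^0\in\Lambda^{\mathrm{mid}}$ with $x^0_t\approx n_t/2$ and $\sum_t w_tx^0_t$ within $W_{\max}$ of $W(N)/2$, obtained by greedily walking coordinates upward from $\ell$. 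Each step changes the sum by at most $W_{\max}$, and every coordinate still has roughly $n_t/6\ge W_{\max}$ room on each side. The residual discrepancy $D$ is then a multiple of $g$ with $|D|\le W_{\max}$. I would correct it by a Bezout-type adjustment $\sum_t c_tw_t=D$ with $|c_t|\le W_{\max}$, which stays inside the box because of that room.

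The main obstacle is this last correction step: producing integer coefficients with $\sum_t c_tw_t=D$ and all $|c_t|$ bounded by about $n_{\min}/6$. Plain Bezout coefficients can be large. I would control them with the residue argument above. Write $D=\sum_{t\ge2}r_tw_t+kw_1$ with $0\le r_t<w_1\le W_{\max}$. Then $|k|w_1\le |D|+(d-1)W_{\max}^2$, which grows with $d$. To avoid that growth, I would use only a minimal subset of classes whose weights generate $g\mathbb{Z}$ modulo $w_1$, and I would choose $w_1=\min_t w_t$. Alternatively, I would use the paired moves $+w_1$ on one coordinate and $-w_1$ elsewhere to trade excess back within bounds. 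If the constant $6$ turns out to be too tight for this bookkeeping, I would fall back on a parity-based argument. Since $W(N)=2q-1$ is odd, there is a class with odd $w_t n_t$, and $\pm1$ moves on such a coordinate together with exact halving of the even-$n_t$ classes would reach $(W(N)-1)/2$ directly.
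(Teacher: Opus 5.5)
\textbf{There is a genuine gap: the bounded-coefficient correction step is left open, and none of the routes you offer closes it.}

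Your preliminary reductions are fine. One has $g\mid q-1$, and a base point $x^0$ with $x^0_t\in\{\lfloor n_t/2\rfloor,\lceil n_t/2\rceil\}$ and discrepancy $D$ satisfying $g\mid D$, $|D|\le W_{\max}$ does exist. But the proposal stops at the step that carries all the content: finding integers $c_t$ with $\sum_t c_tw_t=D$ that fit inside the box around $x^0$. The room is only $\lfloor n_t/2\rfloor-\lceil n_t/3\rceil\ge n_t/6-7/6$. That is, it is at least $W_{\max}-1$ and can be exactly that (e.g.\ $n_t=6W_{\max}+1$). So there is essentially no slack; even your target $|c_t|\le W_{\max}$ is off by one.

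Here is why each suggested route falls short.
\begin{itemize}
\item With $0\le r_t\le w_1-1$ on every coordinate you get, as you note, $|k|$ of order $(d-1)W_{\max}$.
\item Passing to a minimal generating subset of size $s$ still only gives $|k|\lesssim sW_{\max}$, which does not fit once $s\ge 2$. Choosing $w_1=\min_t w_t$ does not change this estimate.
\item Your ``paired moves'' ($+w_1$ on one coordinate, $-w_1$ on another) change $\sum_t w_tx_t$ by $w_1(w_t-w_r)\neq 0$, so as stated they are not sum-preserving. The sum-preserving move is $x\mapsto x+w_re_t-w_te_r$. Even with that move, you would still owe the argument that a partner coordinate with enough room always exists.
\item The parity fallback fails outright. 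If $s$ is the only class with odd $n_s$, halving the other classes and taking $x_s=(n_s\pm 1)/2$ gives $W(N)/2\pm w_s/2$. The target is $W(N)/2-1/2$, and $w_s\ge 2$ by (i). With several odd $n_t$, exact halving is not even available, and the signed sums $\sum\pm w_t/2$ need not equal $-1/2$ (weights $2$ and $5$, say).
\end{itemize}

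\textbf{The missing idea, which the paper uses,} is not to build a representation from scratch. Instead, start from the profile $y=s(A)$ of an actual coalition with $W(A)=q-1$, supplied by (iii); it already has the exact weighted sum. Then push it into the middle box by the exchanges $y\mapsto y\pm(w_re_t-w_te_r)$.

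\begin{itemize}
\item Because $q=(W(N)+1)/2$, one has $\sum_t w_t(y_t-n_t/2)=-1/2$.
\item A low coordinate ($y_t<n_t/3$) contributes less than $-w_tn_t/6\le -w_t^2\le -4$. This forces some $r\neq t$ with $y_r>n_r/2$.
\item Since $w_r,w_t\le W_{\max}\le n_{\min}/6$, the exchange gives $y_t+w_r<n_t/2$ and $y_r-w_t>n_r/3$. Hence the total excess outside the box strictly decreases.
\item The high case is symmetric.
\end{itemize}

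Within your own framework, the gap could also be closed in either of two ways. One is to run this same exchange/potential argument on the correction vector $c$. The other is a sharper residue count: choose the $r_t$ along the chain of subgroups $\langle w_2\rangle\subseteq\langle w_2,w_3\rangle\subseteq\cdots$ of $\mathbb{Z}/w_1\mathbb{Z}$. Each $r_t$ is then below the index of the corresponding step, so $\sum_{t\ge 2}r_t\le w_1-1$ in total. With a base point chosen so that $|D|\le W_{\max}/2$, this gives $|k|<W_{\max}$ independently of $d$. But neither argument appears in the proposal as written.
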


\begin{proof}
By (iii) in Definition \ref{def punctured games}, there exists $A$ with $%
W(A)=q-1.$ Thus, for $y=s(A),$ we have
\begin{equation*}
\sum_{t=1}^{d}w_{t}y_{t}=W(A)=q-1=\left( W(N)-1\right)
/2=\sum_{t=1}^{d}w_{t}n_{t}/2-1/2,
\end{equation*}%
which leads to:
\begin{equation}
\sum_{t=1}^{d}w_{t}\left( y_{t}-n_{t}/2\right) =-1/2.  \label{kokhav}
\end{equation}

Next, call $y_{t}$ \emph{low} if $y_{t}<n_{t}/3$ and \emph{high} if $%
y_{t}>2n_{t}/3$, and let
\begin{equation*}
\digamma (y)=\sum_{t:\ y_{t}\text{ }\mathrm{is}\text{ }\mathrm{low}}\left(
n_{t}/3-y_{t}\right) +\sum_{t:\text{ }y_{t}\text{ }\mathrm{is}\text{ }%
\mathrm{high}}\left( y_{t}-2n_{t}/3\ \right) \in \ \{0,1/3,2/3,...,n\}.
\end{equation*}%
Then $\digamma (y)=0$ iff $n_{t}/3\leq y_{t}\leq 2n_{t}/3$ for every $t$. We
claim that as long as $\digamma (y)>0$ there is a change $y\mapsto \overline{%
y}=y+w_{r}e_{t}-w_{t}e_{r}$ (that consists of raising some $y_{t}$ by $w_{r}$
and lowering $y_{r}$ by $w_{t}$, thus preserving the equality (\ref{kokhav}%
)), which lowers $\digamma $. Since $\digamma $ has finitely many values,
after finitely many moves we will end with $\digamma =0$. By assumption, $%
W_{\max }\leq n_{t}/6$ for all $t$ throughout.

Assume first that there exists a low $y_{t}$ (namely, $y_{t}<n_{t}/3$)$.$
Then%
\begin{equation*}
w_{t}(y_{t}-n_{t}/2)<w_{t}(n_{t}/3-n_{t}/2)=-w_{t}n_{t}/6\leq -4
\end{equation*}%
because $w_{t}\geq 2$ and $n_{t}\geq 6w_{t}$, and so, by (\ref{kokhav}),%
\begin{equation*}
\sum_{r\neq t}w_{r}(y_{r}-n_{r}/2)>-1/2+4>0.
\end{equation*}%
It follows that for some $r\neq t$ we must have $y_{r}>n_{r}/2$. Consider
the change $y\mapsto $ $\overline{y}=y+w_{r}e_{t}-w_{t}e_{r}$. Clearly,
\begin{equation*}
\overline{y}_{t}=y_{t}+w_{r}<n_{t}/3+w_{r}\leq n_{t}/3+n_{t}/6=n_{t}/2
\end{equation*}%
and
\begin{equation*}
\overline{y}_{r}=y_{r}-w_{t}>n_{r}/2-w_{t}\geq n_{r}/2-n_{r}/6=n_{r}/3;
\end{equation*}%
$\overline{y}$ and $y$ are identical in coordinates other than $r$ and $t.$
Thus, $\overline{y}\in \Lambda \left( \overline{n}\right) ,$ $\overline{y}%
_{t}$ is not high and $\overline{y}_{r}$ is not low, and so
\begin{eqnarray*}
&&\digamma (y)-\digamma (\overline{y}) \\
&=&\left( n_{t}/3-y_{t}-\max \{n_{t}/3-y_{t}-w_{r},0\}\right) \\
&&{}+\left( \max \{y_{r}-2n_{r}/3,0\}-\max \{y_{r}-w_{t}-2n_{r}/3,0\}\right)
.
\end{eqnarray*}%
The first summand above is positive because $\frac{n_{t}}{3}-y_{t}>0$ by
assumption, and the second summand is nonnegative. Thus, the transition from
$y$ to $\overline{y}$ lowers $\digamma $.

Assume next that there is no low $y_{t},$ but there is some high $y_{t}$
(namely, $y_{t}>2n_{t}/3$)\textit{.} Then $w_{t}(y_{t}-n_{t}/2)>w_{t}n_{t}/6%
\geq 4$, and so, by (\ref{kokhav}),%
\begin{equation*}
\sum_{r\neq t}w_{r}(y_{r}-n_{r}/2)<-1/2-4<0.
\end{equation*}%
It follows that for some $r\neq t$ we must have $y_{r}<n_{r}/2$ (but by
assumption, this $y_{r}$ is not low). Consider the change $y\mapsto $ $%
\overline{y}=y-w_{r}e_{t}+w_{t}e_{r}$. Clearly,
\begin{equation*}
\overline{y}_{t}=y_{t}-w_{r}>2n_{t}/3-w_{r}\geq 2n_{t}/3-n_{t}/6=n_{t}/2
\end{equation*}%
and
\begin{equation*}
\overline{y}_{r}=y_{r}+w_{t}<n_{r}/2+w_{t}\leq n_{r}/2+n_{r}/6=2n_{r}/3;
\end{equation*}%
$\overline{y}$ and $y$ are identical in coordinates other than $r$ and $t.$
Thus, $\overline{y}\in \Lambda \left( \overline{n}\right) ,$ its $\overline{y%
}_{t}$ is not low and $\overline{y}_{r}$ is not high (nor low), and so
\begin{equation*}
\digamma (y)-\digamma (\overline{y})=y_{t}-2n_{t}/3-\max
\{y_{t}-w_{r}-2n_{t}/3,0\}>0
\end{equation*}%
because $y_{t}-\frac{2n_{t}}{3}>0$ by assumption. We conclude that the
transition from $y$ to $\overline{y}$ lowers $\digamma $ in this case as
well.

Thus, starting from a profile $y=s(A),$ for some $A$ with $W(A)=q-1,$ after
a finite sequence of changes in the profile we obtain $x\in \Lambda \left(
\overline{n}\right) $ for which: (i) (\ref{kokhav}) holds for $y=x,$
implying that $\sum_{1\leq t\leq d}w_{t}x_{t}=q-1;$ and (ii) $\digamma (x)=0$%
, implying that $n_{t}/3\leq x_{t}\leq 2n_{t}/3$ for every $t$ by the
definition of $\digamma .$
\end{proof}

If $n_{\min }\geq 6W_{\max }$, consider a profile $x$ as guaranteed in Lemma %
\ref{middle third tie}, and let $v=3x-\overline{n}.$ By the properties of $%
x, $ clearly $v\in \Lambda \left( \overline{n}\right) $ and
\begin{equation*}
\sum_{t=1}^{d}w_{t}v_{t}=\sum_{t=1}^{d}w_{t}\left( 3x_{t}\right)
-W(N)=3(q-1)-(2q-1)=q-2.
\end{equation*}%
Thus, there exists $A\subset N$ with $W(A)=q-2,$ but this contradicts (iv)
in Definition \ref{def punctured games}. We conclude that $n_{\min
}<6W_{\max }.$ Therefore%
\begin{equation*}
\varepsilon ^{\ast }(G)\overset{\text{(\ref{epsilon star in constant sum
game})}}{\leq }1/\left( n_{\min }W_{\max }\right) <6/n_{\min }^{2}.
\end{equation*}

\end{document}